\newtheorem{prop}{Proposition}
\def\M{\mathbb{M}}
\def\x{\mathrm{x}}
\def\tx{\tilde{\mathrm{x}}}
\newcommand\given[1][]{\:#1\vert\:}
\newcommand{\one}[1]{\pmb{1}_{#1}}
\newcommand{\1}[1]{\mathbbm{1}\left\lbrace #1 \right\rbrace}
\newcommand{\E}{\mathrm{E}}
\newcommand{\Var}{\mathrm{Var}}
\newcommand{\sm}{\overset{iid}{\sim}}
\def\mathLarge#1{\mbox{\normalsize $#1$}}
\def\d{\mathrm{d}}
\providecommand{\keywords}[1]
{
  \small	
  \textbf{\textit{Keywords---}} #1
}
\title{Bayesian prediction of jumps in large panels of time series data}
\author
{
    A. Alexopoulos
    \thanks{Department of Statistical Science,
        University College London, UK, and MRC Biostatistics Unit, University of Cambridge, UK.
                Email: {\tt angelos@mrc-bsu.cam.ac.uk}.
        }
    \and    
         P. Dellaportas
    \thanks{Department of Statistical Science, University College London, UK, The Alan Turing Institute, London, UK and Department of Statistics, AUEB, Greece.
                Email: {\tt p.dellaportas@ucl.ac.uk}.
        } 
  \and           
       O. Papaspiliopoulos
   \thanks{Department of Decision Sciences, Bocconi University, Milan, Italy.
               Email: {\tt omiros@unibocconi.it}.
        }
   }
\date{{\small \today}}
\begin{document}
\maketitle

\begin{abstract}
We take a new look at the problem of disentangling the volatility and jumps processes of daily stock returns. We first provide a computational framework for the univariate stochastic volatility model with Poisson-driven jumps that offers a competitive inference alternative to the existing tools. This methodology is then extended to a large set of stocks for which we assume that their unobserved jump intensities co-evolve in time through a dynamic factor model. To evaluate the proposed modelling approach we conduct out-of-sample forecasts and we compare the posterior predictive distributions obtained from the different models. We provide evidence that  joint modelling of jumps improves the predictive ability of the stochastic volatility models.
\end{abstract}

%% ** Keywords **
\keywords{dynamic factor model; forecasting stock returns; Markov chain Monte Carlo; stochastic volatility with jumps; sequential Monte Carlo}

%% ** Mainmatter **

\section{Introduction}
\label{sec:intro}

It has been recognised in the financial literature that jumps in asset returns occur clustered in time and affect several stock markets within a few hours or days, see for example \cite{ait2015modeling}. 
We work with a large panel of stocks from several European markets and we aim to identify and predict joint tail risk expressed as probabilities of jumps in their daily returns.  Our modelling assumptions are based on the well-known paradigms of  stochastic volatility (SV) models  \citep{taylor1982financial} combined with Poisson-driven jumps \citep{andersen2002empirical}. The estimation of SV models is preferably conducted by using likelihood-based approaches \citep{harvey1994multivariate} while Bayesian methods, in particular, are notably desirable since they deal efficiently with likelihood intractability problems; see for example  \cite{jarquier1994bayesian}, \cite{chib2002markov} and \cite{kastner2014ancillarity} for detailed discussions. This  motivates us to utilize the Bayesian approach both for modelling and inference purposes. Our purpose is to provide a general modelling approach for the time and cross-sectional dependence of jumps in multiple time series. The resulting Bayesian hierarchical models require careful prior specifications and sophisticated, modern Markov chain Monte Carlo (MCMC) inference implementation strategies. To that end, we build the necessary algorithmic framework by developing robust computational methods. To evaluate the proposed model we focus on forecasting future stock returns. This is a problem of primary interest in financial statistics since it is closely related with risk management, portfolio allocation and asset pricing; see for example \cite{aguilar2000bayesian}, \cite{rapach2013forecasting} and \cite{clements2017forecasting} for more detailed discussions. We test the proposed methods in an out-of-sample forecasting scenario with real data.

\begin{figure}[t]
	\centerline{\includegraphics[width=5in]{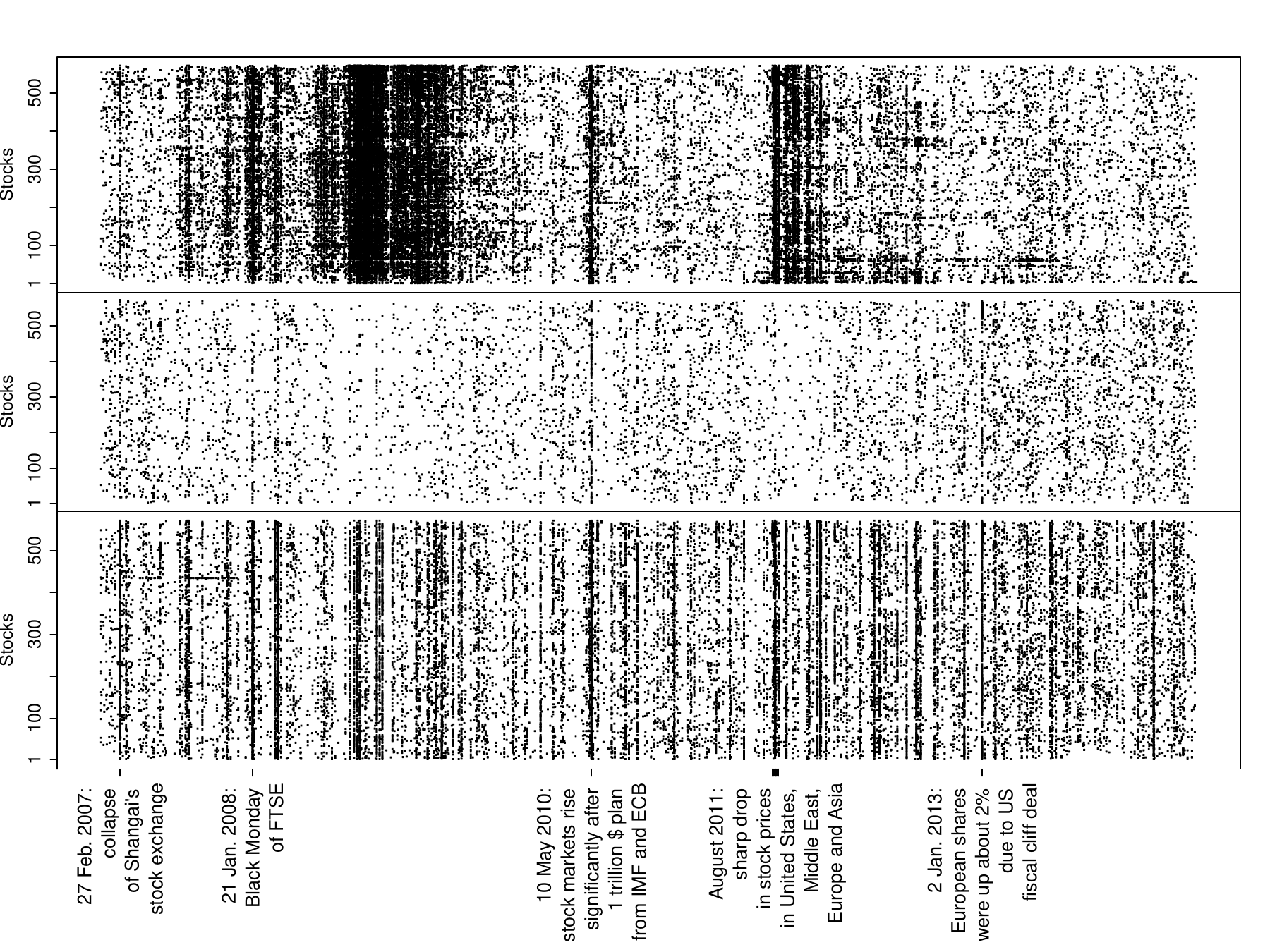}}
	\vspace*{-0.3cm}
	\caption{Each dot indicates a date between $10/1/2007$ to $11/6/2014$ (x-axis) in which at least one jump has been identified in observed daily log-returns of $571$ stocks from the STOXX Europe $600$ Index. The plot in the top panel is constructed by using an empirical method for detection of outliers in the data. The dots in the middle and in the bottom panels represent probabilities of jump greater than $.5$ for models with independent and jointly modelled across stocks intensities respectively.}
	\label{fig:compJumps}
\end{figure}

Our first point of departure is the univariate SV model with jumps.  We provide a new efficient MCMC algorithm combined with careful prior specification that improves upon existing MCMC strategies.  Next, we extend this approach by utilising the panel structure of stock returns so that unobserved jump intensities are assumed to propagate over time through a dynamic factor model. The resulting Bayesian hierarchical model can be viewed as a generalisation of the  \cite{bates1996jumps}, univariate SV model with jumps since it can be  obtained as a special case by just assuming that jump intensities are independent across time and stocks. Furthermore, our computational techniques are relevant to other scientific areas such as, for example, neuroscience, whereas simultaneous recordings of neural spikes are often modelled through latent factor models \citep{buesing2014clustered}. 

To illustrate our motivation, consider the daily returns from the stocks of the STOXX Europe $600$ Index over a period $10/1/2007$--$11/6/2014$. The top panel of Figure \ref{fig:compJumps} has been created by empirically identifying a jump when the return exceeds three standard deviations, where the estimators of mean and variance of each series were robustly estimated as the median and the  robust scale estimator of \citep{rousseeuw1993alternatives} respectively; see the supplementary material for the details of the empirical method used for outlier detection. The middle panel depicts a summary statistic of a SV model with jumps estimated separately for each stock return series: each dot denotes a stock return in which that particular day the probability of a jump in the series has been estimated to be greater than $.5$; by choosing $.5$ as threshold we avoid to over- or underestimate the number of jumps for each stock at each day, see the supplementary material for more details. Compared with the top panel, it provides a more sparse jump process indicating a successful separation of jumps from persistent moves modelled by the volatility process, a topic discussed in detail by, for example, \cite{ait2004disentangling}.  The notable feature that inspired this work is that clustering and inter-dependence of jump intensities is evident in the middle panel and, interestingly,  days in which jumps have been simultaneously identified in a large number of stocks coincide with days of events that affected the financial markets worldwide as pointed out at the x-axis of the plots.  Thus, one may attempt a joint Markovian  modelling of jump intensities across time which might reveal some predictability of jumps. Indeed, the results of our proposed joint modelling formulation provides strong evidence for better out-of-sample predictability with estimated jump probabilities higher than $.5$ depicted in the lower panel of Figure \ref{fig:compJumps}.

We propose a modelling approach for the jump processes of SV models in which the time and cross-sectional dependence of the jump intensities are driven by a latent, common across stocks,  dynamic factor model. We carefully specify informative prior distributions to the parameters of the factor model so that the implied priors for the jump intensities in each stock are in concordance with the well-known \citep{eraker2003impact} a priori expectation for one jump every few months. To avoid identifiability problems of the factor models extra assumptions for the form of the loadings matrix are needed. Nevertheless, it is recognised \citep{ghosh2009default} that MCMC algorithms developed under identifiability constraints exhibit slow convergence to the target distribution of interest. To improve the mixing properties of the proposed MCMC algorithm we follow \cite{bhattacharya2011sparse} and we do not impose the usual  (see, e.g., \cite{aguilar2000bayesian}) identifiability constraints on the factor loadings parameters. We also note that there is a considerable amount of literature dedicated to the modelling of time series count data; see for example \cite{fokianos2009poisson}, \cite{jung2011dynamic}, \cite{pillow2012fully} and \cite{buesing2014clustered}. Our modelling perspective is applied on unobserved count data and, thus, the aim of the proposed latent factor model is twofold. First, since a few jumps are expected to occur during the observation period we borrow information across all the stocks in order to learn the unobserved processes that drive the evolution of jumps across time and stocks. Second, we exploit the dimension reduction achieved by the latent factors to preserve the scalability of our method.

Our Bayesian inference implementation strategy is based on an MCMC algorithm that alternates sampling of the latent volatility process and its parameters with sampling of the jump process and its parameters. Both these steps require drawing of the high-dimensional paths of the latent volatilities and factors from their full conditional distributions. By noting that the target distributions are proportional to the product of intractable, non-linear likelihood functions with Gaussian priors, we utilize the sampler proposed by \cite{titsias2018auxiliary} to simultaneously draw the whole path of each latent process at each MCMC iteration. This is an important feature of our methods for the following reasons. First updating each state of a volatility process separately results in Markov chain samplers with slow mixing \citep{shephard1994comment}. Second, although \cite{chib2002markov} and \cite{nakajima2009leverage} use a mixture of Gaussian distributions approximation to also simultaneously update the whole volatility path, their methods rely on an importance sampling step which is quite problematic, see \cite{johannes2009optimal} for a detailed discussion and Section \ref{sec:chib} of this paper for a simulation that illustrates this issue.

To assess the predictive performance of the different models we compare the corresponding posterior predictive distributions by utilizing proper scoring rules \citep{gneiting2007strictly}.  We provide a full-fledged quantitative evaluation of the obtained forecasts by calculating logarithmic scores, such as predictive log Bayes factors \citep{geweke2010comparing}, interval and continuous ranked probability scores \citep{gneiting2014probabilistic} as well as root mean squared errors. We estimate the posterior predictive distributions by employing sequential Monte Carlo methods such as the particle filters \citep{chopin2020pf} and the annealed importance sampling \citep{neal2001annealed} algorithms.

The structure of the remaining of the paper is as follows. In Section \ref{sec:lit_review} we provide a review of the literature related to univariate and multivariate SV models in econometrics as well as the modelling of jumps in financial applications. Section \ref{sec:Model} presents our proposed modelling framework and Section \ref{sec:Bayes} describes the methods that we develop to conduct Bayesian inference for the proposed model.  In Section \ref{sec:pred} we present the computational techniques that we use to assess the predictive performance of the proposed model.  Section \ref{sec:realdata} presents results and insights from the application of our methods on the real dataset and Section \ref{sec:discuss} concludes with a small discussion.

\section{Related work}
\label{sec:lit_review}

There are two main classes of models that have been considered in the literature for the modelling of financial returns. The first class is based on observation-driven models where the time evolution of the variance of the returns is modelled by using past observations and variances. These models are known as autoregressive conditional heteroscedasticity (ARCH) models and have been developed by \cite{engle1982autoregressive}. A very popular parametrization, the generalized ARCH model, introduced later by \cite{bollerslev1986generalized} while \cite{bollerslev1987conditionally} and \cite{harvey2008beta} presented modifications that account for the heavy tails of the distributions of the observed returns; see also \cite{harvey2013dynamic} and references therein. \cite{creal2011dynamic} extended this modelling approach in the multivariate case; for the Bayesian analysis of these models see, for example, \cite{vrontos2000full} and \cite{vrontos2003full}. The second class is known as parameter-driven models where the time evolution of the variance of the returns is modelled by employing latent stochastic processes.  The models that belong to this class are the SV models which constitute the base of our proposed modelling approach.

The SV model proposed by \cite{taylor1982financial} in order to account for the time evolution of the volatilities in financial time series while \cite{hull1987pricing} and \cite{chesney1989pricing} have employed the SV model as discretization of continuous time SV diffusion processes. As mentioned earlier some early works on the Bayesian estimation of SV models include, but are not limited to, the papers presented by \cite{jarquier1994bayesian}, \cite{kim1998stochastic} and \cite{chib2002markov}. \cite{omori2007stochastic} dealt with efficient Bayesian estimation of the SV model by taking into account the correlation between the observation error and the error of the stochastic variance process; see also \cite{silva2006extended} and \cite{delatola2011bayesian} for other extensions of the basic SV model. A more recent contribution on the Bayesian analysis of univariate SV models has been offered by \cite{kastner2014ancillarity} while \cite{zhang2020stochastic} developed a new class of SV models with autoregressive moving average errors. Considering a jump component to model the empirically observed fat tails of stock returns proposed, in his pioneering paper, by \cite{merton1976option} while \cite{bates1996jumps} and \cite{andersen2002empirical} combined jump diffusions with SV models. Bayesian estimation of SV models with jumps has been considered, among others, by \cite{chib2002markov},  \cite{nakajima2009leverage} and \cite{eraker2003impact} and \cite{johannes2009optimal} where the latter two papers account for jumps in the volatility process as well; see also \cite{bandi2016price} for more recent research on the occurrence of contemporaneous jumps in the returns and volatilities.

To conduct joint modelling of the observed financial time series, multivariate SV models have been considered more than twenty years ago; see for example \cite{harvey1994multivariate} and \cite{asai2006multivariate} for a review. The existing multivariate SV models assume either constant correlations over time or some form of dynamic correlation modelling through factor models with factors being independent univariate stochastic volatility models. The latter approach has been considered, for example, by \cite{chib2006analysis}, \cite{kastner2017efficient} and \cite{kastner2019sparse}. In a similar framework \cite{chan2018bayesian} and \cite{kastner2020sparse} deal with Bayesian estimation of time varying parameter vector autoregressive models with stochastic volatility. An alternative multivariate SV modelling approach proposed recently by \cite{dellaportas2015scalable} where all the elements of the covariance matrix of the observed time series evolve over time. More closely related to the present paper, \cite{bollerslev2008risk}, \cite{jacod2009testing} and \cite{clements2017forecasting}  study the presence of common or not common jumps in multiple time series. \cite{ait2015modeling} model the dependencies of jumps by combining the SV model with a multivariate Hawkes process \citep{hawkes1971spectra} that models the propagation of jumps over time and across assets by introducing a feedback from the jumps to their intensities and back. They employ the generalised version method of moments to estimate the parameters of the model. We close this literature review section by noting that tools for the identification of non-observed events that affect the evolution of time series have been also developed by \cite{hamilton1989new} and \cite{billio2012combination} that deal with the problem of identifying structural changes in time series of economic indices.

\section{Latent jump modelling}
\label{sec:Model}

\subsection{Notation}

We denote the univariate Gaussian distribution with mean $m$ 
and variance $s^2$ by  $\mathcal{N}(m,s^2)$,   and its density evaluated at $x$ by $\mathcal{N}(x|m,s^2)$;  $\mathcal{N}_d(X|M,S)$ denotes the density of the $d$-variate normal distribution with mean $M$ and covariance matrix $S$ evaluated at $X$;  $Gam(\alpha,\beta)$ and $IGam(\alpha,\beta)$ denote the gamma, with mean $\alpha/ \beta$, and inverse gamma, defined as the reciprocal of gamma,  distributions respectively; $U(\alpha,\beta)$ the uniform distribution on interval $(\alpha,\beta)$;  $\Gamma(\cdot )$ denotes the gamma function. Index $i$ is used for stocks and index $t$ for times, e.g., $r_{it}$ is the $i$th stock return at day $t$; when we introduce factor models index $k$ denotes factor.   Upper case letters denote vectors and matrices, and lower case letters denote scalars. For a collection of variables indexed by time or stock and denoted by a lower case  character, the corresponding upper case character denotes their vector or matrix representation, e.g. $R$ denotes the matrix whose elements are $r_{it}$;  $R_i=(r_{i1},\ldots,r_{iT})$, and $R_t = (r_{1t},\ldots,r_{pt})$ are then generic rows and columns of $R$; all vectors are understood as column vectors. A subscript, of the form $(t+1):(t+\ell)$, where $\ell$ is a positive integer, indicates the collection of vectors with subscripts $t+1,\ldots,t+\ell$; e.g. $R_{t+1:t+\ell}$ is the collection of vectors $R_{t+1},\ldots,R_{t+\ell}$. The transpose of a vector or matrix is denoted by a prime; e.g. $R^\top$ is the transpose of $R$; $\mathbbm{1}\{ \mathord{\cdot} \}$ denotes an indicator function that takes value $1$ if the event in brackets is true and $0$ otherwise; for a function $f(x)$ we set $\nabla \eqqcolon ( \partial f/\partial x_1,\ldots,\partial f/\partial x_d  )$.
  
\subsection{Stochastic volatility with jumps}
\label{sec:SV}
We model stock returns as  
\begin{equation}\label{eq:obs}
r_{it} = \exp(h_{it}/2) \epsilon_{it} + \sum_{\kappa=1}^{n_{it}} \xi_{it}^{\kappa}, \,\,\,\, \epsilon_{it} \sim \mathcal{N}(0,1), \,\,\,\, t=1,\ldots,T
\end{equation}
\begin{equation}\label{eq:state}
h_{it} = \mu_i + \phi_i (h_{i,t-1}-\mu_i) + \sigma_{i\eta}\eta_{it}, \,\,\,\, \eta_{it} \sim \mathcal{N}(0,1),  \,\,\,\, t=1,\ldots,T
\end{equation}
with $h_{i0} \sim \mathcal{N}(\mu_i, \sigma_{i\eta}^2/(1-\phi_i^2))$ and $\epsilon_{it}$ and $\eta_{it}$ are independent. 
The number of jumps of the $i$th stock at time $t$ follow a Poisson distribution,
\begin{equation}\label{eq:poisson}
n_{it} \sim Poisson( \Delta_{it}\lambda_{it} ),
\end{equation}
where $\lambda_{it}$ denotes the corresponding jump intensity, $\Delta_{it}$ is a time-increment associated to each return and $ \xi_{it}^{\kappa} \sim \mathcal{N}( \mu_{i\xi}, \sigma^2_{i\xi}  )  $ is the $\kappa$th jump size. We explicitly take into account that stock returns are computed over varying time increments $\Delta_{it}$, such as in-between weekends and holidays, hence $\lambda_{it}$ is the daily jump intensity.

We model the parameters $\mu_i,\phi_i$ and $\sigma_{i\eta}^2$ of each log-volatility process and the parameters $\mu_{i\xi}$ and $\sigma^2_{i\xi}$ of the jump sizes as independent across stocks. For $\mu_i$, $\phi_i$ and $\sigma^2_{i\eta}$ we follow the related literature (\cite{kim1998stochastic}, \cite{omori2007stochastic}, \cite{kastner2014ancillarity}) and take $\mu_i \sim\mathcal{N}(0,10)$, $(\phi_i+1)/2 \sim Beta(20, 1.5)$ and $\sigma^2_{i\eta} \sim Gam(1/2,1/2)$.

For $\mu_{i\xi}$ and $\sigma_{i\xi}^2$ we choose proper priors. In fact, improper priors will lead to improper posterior. This is due to the fact that according to the model there is positive probability that there are no jumps for the whole period of time, i.e., the event $n_{it}=0$ for all $t$, has positive prior probability, and in such an event, $\mu_{i\xi}$ and $\sigma_{i\xi}^2$ are unidentifiable in the likelihood.  Instead, we take $\sigma_{i\xi}^2 \sim IGam(3,\textrm{range}_i^2/18)$, where $\textrm{range}_i = \max_t\{r_{it}\}-\min_t\{r_{it}\}$, and  $\mu_{i\xi} \sim \mathcal{N}(0,5\textrm{range}_i^2)$. In this specification, $\E[\sigma^2_{i\xi}]=\textrm{range}_i^2/36$, hence we match the variance of jump sizes with a quantity that relates to the sample variance of returns. A same reasoning is followed for setting the prior variance of $\mu_{i\xi}$ and we have found that the choice of the multiplier, here taken 5, is not crucial in the analysis.

\subsubsection{Modelling jumps independently across stocks and time}
\label{sec:indepjumpintens}
By modelling $\Lambda_i$ as  independent random vectors, 
we obtain $p$ independent univariate SV with jumps models. Additionally, each vector can be taken as one of independent elements, $\lambda_{it} \sim Gam(\delta,c)$, in which case one obtains the models described, among others, by \cite{bates1996jumps}, \cite{chib2002markov} and \cite{eraker2003impact}. 
It follows that $n_{it}$ has, marginally with respect to $\lambda_{it}$, a negative
binomial distribution with density 
\begin{equation}
  \label{eq:pmf-nbn}
  p(n_{it}) =\frac {\Gamma(\delta+n_{it})}{\Gamma(\delta) n_{it}!} \beta^{\delta} (1-\beta)^{n_{it}},
\end{equation}
where $\beta=c/(c+\Delta_{it})$.
The mean of the distribution is $\delta (1-\beta)/ \beta$ and the variance is
$\delta (1-\beta)/ \beta^2.$
Taking $\delta \leq 1$ ensures that
the  density is monotonically decreasing.  This is a feature we are interested
in: jumps are meant to capture infrequently large price
movements, and a priori we expect that with highest probability there
is no jump and it is  more likely to have less than more jumps, if
any. We
can choose  $c$ such that the probability of no
jump at a given time increment $\Delta_{it}$, is at least $\gamma \in (0,1)$, a user-specified threshold. Taking for example $\delta =1$ and $c = 50$ results in $\gamma = 0.98$ with $\E(\lambda_{it})=0.02$.  These choices are consistent with the prior expectation, in financial applications, for one jump every few months; see for example \cite{eraker2003impact} for a more detailed discussion.

\subsection{Joint modelling of jumps}
\label{sec:jointintens}
To capture the dependence of the jumps over time and across stocks we model $\lambda_{it}$ by using a dynamic factor model which is specified as follows. We first transform $\lambda_{it}$ to $y_{it}$ via 
\begin{equation}\label{eq:intenstrans}
\lambda_{it} = \lambda^*(1+e^{-y_{it}})^{-1},
\end{equation}
which implies that $\lambda_{it}<\lambda^*$. The parameter $\lambda^*$ is, thus, the upper bound for the jump intensities off all the stocks at any time point. This is an important feature of our modelling approach since it allows to incorporate the necessary \citep{chib2002markov, eraker2003impact} prior information about the expected number of jumps which are typically described as large but infrequent moves. Furthermore, as discussed below, by carefully choosing this upper bound we are able to construct a prior for the jump intensities which is similar to the $Gam(1,50)$ distributions that we used in the case of univariate SV models with jumps. We also note that this type of information is not necessary only in Bayesian applications; see for example \cite{honore1998pitfalls} for a similar discussion related to maximum likelihood estimation of models with jumps. Then, we model the time-dependence of jumps via $K$ latent factors $F_t$ modelled as independent autoregressive processes. The cross-sectional dependence of jumps is captured by a $p \times K$ matrix of factor loadings $W$ and a $p\times 1$ vector $B$. The joint model specification is 
\begin{align}
  \label{eq:facdyn1}
    Y_t & = B + W F_t, \\ \label{eq:facdyn2}
    F_{t} & = AF_{t-1} + E_{t}, \quad t=1,\ldots,T \\ \label{eq:facdyn3}
    F_{0} & \sim \mathcal{N}_K(0,\Sigma_F) \quad E_t \sim  \mathcal{N}(0,I)
    \end{align}
where $A$ is a $K \times K$ diagonal matrix with elements $\alpha_1,\ldots,\alpha_K$ and $\Sigma_F$ is a $K \times K$ diagonal matrix with elements $1/(1-\alpha_1^2),\ldots,1/(1-\alpha_K^2)$.

It is known that parameters of latent factor models are only identifiable up to certain transformations, such as orthogonal rotations and sign changes \citep{aguilar2000bayesian}. However, latent factor models are also used 
as low-rank predictive models in which case the out-of-sample prediction is of prior importance; see   \cite{bhattacharya2011sparse}. This is the perspective we adopt here, where we try to borrow strength from the information in large panels of stocks in order to predict with higher accuracy individual jumps by using a parsimonious factor model. Therefore, our prior specification imposes no identifiability constraints on the parameters of the factor model and are taken as $w_{ik} \sm \mathcal{N}(0,\sigma^2_{w})$, $b_i \sm \mathcal{N}(\mu_b, \sigma^2_b)$, and $\alpha_k \sm U(-1,1)$. 

The specification of the hyperparameters of the priors assigned on the parameters $A,B$ and $W$ require extra care because they affect, through \eqref{eq:intenstrans}, the imposed prior on $\lambda_{it}$. The fact that our modelling formulation of Section \ref{sec:indepjumpintens} was based on informative priors $Gam(1,50)$ provides a way to elicit informative prior specification of the hyperparameters $\sigma^2_w, \sigma^2_b$ and $\mu_b$. We first set $\lambda^* =0.15$ which is the $0.0005$ upper quantile of the $Gam(1,50)$ density. As discussed earlier, this an important characteristic of our modelling approach since it penalizes the estimation of a large number of jumps which is not consistent with the prior information for one or two jumps every $100$ days. A small sensitivity analysis, not reported here, confirms that choosing a value considerably greater than $0.15$ does not allow the efficient separation of the jumps from the underlying volatility process of the stock daily returns. Moreover, by using values notably smaller than $0.15$ we estimate quite low jump intensities and this can result in less accurate prediction of future stock returns; in the supplementary material we compare predictions obtained under different prior assumptions for the expected number of jumps. We have concluded that a value for $\lambda^*$ between $0.1$ and $0.2$ does not have a substantial effect on the results presented in the rest of the paper. Then, we choose the hyperparameters $\sigma^2_w$, $\mu_b$ and $\sigma^2_b$ to be such that the mean, the variance and the mode for the resulting prior on each $\lambda_{it}$ in \eqref{eq:intenstrans} are comparable with the corresponding quantities of the $Gam(1,50)$ distribution. Based on calculations presented in the supplementary material we set $\sigma^2_w=0.5$, $\sigma^2_b=1$ and $\mu_b =-5$. In Table \ref{tab:intens_priors} we display the quantities of interest in the two prior distributions. To evaluate the differences between the two priors we also examine the impact of different choices for their hyperparameters in the forecasts of future observations. See the supplementary material for the related results. 

\begin{table}[H]
\begin{center}
\begin{tabular}{rrrr}
\hline
\hline
Prior for $\lambda_{it}$ & Mean & Variance & Mode\\
\hline
$Gam(1,50)$ & $0.020$ & $0.00040$ & 0\\
Factor model &  $0.003$  & $0.00004$ & $0.0001$\\
\hline
\end{tabular}
\end{center}
\caption{Mean, variance and mode of the $Gam(1,50)$ prior distribution assumed for the jump intensities $\lambda_{it}$ in the independent model and for the prior induced by the dynamic factor model.}
\label{tab:intens_priors}
\end{table}

\section{Sampling from the posterior of interest}
\label{sec:Bayes}

Let $\Theta_i = (\mu_i,\phi_i,\sigma_{i\eta}^2,\mu_{i\xi}$,$\sigma^2_{i\xi},b_i)$. Our interest lies on the joint posterior of parameters and latent states $p(H,N,\Xi,W,\Theta,F,A|R)$. To draw samples from the posterior of interest we design an MCMC algorithm which proceeds as follows. At each MCMC iteration we perform stock-specific updates of $H_i$, $N_i$, $\Xi_i$, $W_i$, $\Theta_i$ and then we update the latent factors $F$ and their parameters $A$.

An important characteristic of the designed MCMC algorithm is that before updating the log-volatilities $H_i$ and the number of jumps $N_i$ we integrate out the jump sizes $\Xi_i$. This results in an efficient simultaneous update of the log-volatilities vector and direct sampling of the full conditional of the number of jumps $N_i$. Algorithm \ref{GibbsIdeal} summarizes the steps of the MCMC sampler. The algorithm presents how the full conditional densities of vectors of parameters are sampled by exploiting the conditional independence structure of the hierarchical model defined by equations \eqref{eq:obs}--\eqref{eq:poisson} and \eqref{eq:intenstrans}--\eqref{eq:facdyn3}. In Sections \ref{sec:NoJumpsSizes} and \ref{sec:mcmc_latent_Gaussian} we will describe in detail the more elaborate steps $4,6,7$ and $12$ of Algorithm \ref{GibbsIdeal} whereas the details of the other sampling steps are given in the supplementary material.

\begin{algorithm}[H]
\caption{MCMC that targets $p(H,N,\Xi,W,\Theta,F,A|R)$.}\label{GibbsIdeal}
\begin{algorithmic}[1]
\State Set the number of iterations $S$.
 \For{\texttt{$s = 1,\ldots,S$ }}
 \For{\texttt{$i = 1,\ldots,p$ }}
\State Sample from $p(H_i,\phi_i,\sigma_{i\eta}^2 |\mu_i,\mu_{i\xi},\sigma_{i\xi}^2,N_i,R_i )$ by using Metropolis-Hastings.
\State Sample from $p(\mu_i |H_i,\phi_i,\sigma_{i\eta}^2 )$ which is a Gaussian density.
\State Sample from $p(N_i| H_i,W_i,F,R_i,b_i,\mu_{i\xi},\sigma_{i\xi}^2 )$ by using rejection sampling.
\State Sample from $p(\Xi_i| H_i,N_i,R_i,\mu_{i\xi},\sigma_{i\xi}^2 )$ which is a Gaussian density.
\State Sample from $p(\mu_{i\xi}| N_i,\Xi_i,\sigma_{i\xi}^2 )$ which is a Gaussian density.
\State Sample from $p(\sigma_{i\xi}^2| N_i,\Xi_i,\mu_{i\xi} )$ which is an inverse Gamma density.
\State Sample from $p(b_i, W_i| N_i, F )$ by using Metropolis-Hastings.
\EndFor
\State Sample from $p(F,A| N,B,W )$ by using Metropolis-Hastings.
\EndFor
\end{algorithmic}
\end{algorithm}

We also emphasize that by switching off certain steps of Algorithm \ref{GibbsIdeal} we obtain a novel MCMC sampler for existing models. More precisely, Bayesian inference for univariate SV with jumps models can be conducted if the step in the $10$th line is skipped and the step in the $12$th line replaced with the step of drawing the independent jump intensities of the model. Switching off the steps in lines $6$--$10$ and $12$ results in an MCMC algorithm for univariate SV without jumps models. By removing the steps in lines $4$--$9$ of Algorithm \ref{GibbsIdeal} we obtain a sampler for a Cox model \citep{cox1955some} with intensities driven by latent factors.

\subsection{Sampling the number of jumps and jump sizes}
\label{sec:NoJumpsSizes}

To sample each $n_{it}$ we first integrate out the jump sizes $\xi_{it}$ and then we construct a rejection sampling algorithm to draw from $p(n_{it}|h_{it},\mu_{i\xi},\sigma^2_{i\xi},b_i,W_i,F_t,r_{it})$. This algorithm is based on Proposition \ref{log-conc}. Notice also that a discrete distribution with pmf $p(n)$ is called log-concave if 
$p(n)^2 \geq p(n-1)p(n+1)$ is satisfied for all $n$; see for example \cite{devroye1986non} for more details. For the remaining of this subsection the unnormalized $p(n_{it}|h_{it},\mu_{i\xi},\sigma^2_{i\xi},b_i,W_i,F_t,r_{it})$ is denoted by $\tilde{p}(n)$.
\begin{prop}\label{log-conc}
For each $i=1,\ldots,p$ and $t=1,\ldots,T$ the discrete distribution with probability mass function $p(n_{it} |h_{it},\mu_{i\xi},\sigma^2_{i\xi},b_i,W_i,F_t,r_{it})$ is log-concave for $n_{it} \geq 1$.
\end{prop}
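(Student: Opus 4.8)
The plan is to integrate out the jump sizes to obtain an explicit pmf and then prove discrete log-concavity through a continuous relaxation. First I would note that, conditional on $n_{it}=n$, the return decomposes as $r_{it} = \exp(h_{it}/2)\epsilon_{it} + \sum_{\kappa=1}^{n}\xi_{it}^{\kappa}$, a sum of independent Gaussians, so marginally $r_{it}\mid n \sim \mathcal{N}(n\mu_{i\xi},\, \exp(h_{it}) + n\sigma^2_{i\xi})$. Writing $\mu=\mu_{i\xi}$, $s^2=\sigma^2_{i\xi}$, $v^2=\exp(h_{it})$, $u=u(n)=v^2+ns^2$, and $\rho=\Delta_{it}\lambda_{it}$ (a positive constant in $n$, determined by $b_i,W_i,F_t$ through \eqref{eq:intenstrans}), Bayes' rule gives
\begin{equation*}
 p(n) \;\propto\; \frac{\rho^{n}}{n!}\,(v^2+ns^2)^{-1/2}\exp\!\left(-\frac{(r_{it}-n\mu)^2}{2(v^2+ns^2)}\right).
\end{equation*}
I would stress at the outset that no elementary factorisation settles the claim: the Poisson factor $\rho^{n}/n!$ is log-concave, but the variance term $(v^2+ns^2)^{-1/2}$ is log-convex and the Gaussian likelihood, viewed as a function of $n$, is not log-concave on its own, so "product of log-concave sequences" does not apply.

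The strategy is to establish $p(n)^2\ge p(n-1)p(n+1)$ via concavity of a continuous interpolant. Let $\ell(x)$, $x\ge 0$, denote the logarithm of the unnormalised density with $n!$ replaced by $\Gamma(x+1)$. Since $n$ is the midpoint of $[n-1,n+1]$, the inequality $\ell(n-1)+\ell(n+1)\le 2\ell(n)$, equivalently $p(n-1)p(n+1)\le p(n)^2$, holds as soon as $\ell$ is concave on $[n-1,n+1]$. Hence it suffices to show $\ell''(x)\le 0$ for $x\ge 1$, which covers every triple contained in $\{1,2,\dots\}$ and therefore yields log-concavity on $n_{it}\ge1$.

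I would then differentiate $\ell$ termwise. The factorial contributes $-\psi'(x+1)$ (trigamma), the variance term contributes $+\tfrac{s^4}{2u^2}$, and a short calculation shows the exponential term contributes $-\big(\mu u + (r_{it}-x\mu)s^2\big)^2/u^3\le 0$. Discarding this sign-definite term,
\begin{equation*}
 \ell''(x) \;\le\; -\psi'(x+1) + \frac{s^4}{2u^2}.
\end{equation*}
Two standard bounds close the argument: because $v^2>0$ we have $u>xs^2$, so $\tfrac{s^4}{2u^2}<\tfrac1{2x^2}$; and $\psi'(x+1)>\tfrac1{x+1}$, whence $\ell''(x)<\tfrac1{2x^2}-\tfrac1{x+1}=-(2x+1)(x-1)/\big(2x^2(x+1)\big)\le 0$ for $x\ge1$. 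Thus $\ell$ is strictly concave on $[1,\infty)$ and the proposition follows.

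The main obstacle is exactly that no single factor carries the log-concavity, since the decreasing-variance term actively works against it; the heart of the proof is the competition between the curvature of the Poisson factor, captured by $\psi'(x+1)\gtrsim 1/x$, and the anti-concave variance contribution, bounded by $1/(2x^2)$. These balance precisely at $x=1$, where the quadratic $2x^2-x-1=(2x+1)(x-1)$ vanishes, which explains the restriction to $n_{it}\ge1$ and signals that the boundary triple at $n=1$ (involving $p(0)$, where $v^2$ may be small relative to $s^2$) must be excluded or, if required by the sampler, verified directly.
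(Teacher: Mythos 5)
Your proof is correct, but it takes a genuinely different route from the one in the supplementary material. Both arguments start from the same marginalised pmf $p(n)\propto \frac{\rho^n}{n!}(v^2+ns^2)^{-1/2}\exp\{-(r_{it}-n\mu)^2/(2(v^2+ns^2))\}$, and both correctly identify that the Gaussian normalising term works against log-concavity while the factorial must overcome it. The paper then works entirely discretely: it expands $2\log p(n)-\log p(n-1)-\log p(n+1)$ exactly, isolates the exponent contribution as a nonnegative square $L$ (the discrete analogue of your $-(\mu u+(r_{it}-x\mu)s^2)^2/u^3$ term, with opposite role since it helps rather than hurts), and reduces the claim to the algebraic inequality $(u_n^2-\sigma_{i\xi}^4)(n+1)^2\ge n^2u_n^2$, verified for $n\ge 2$ via $u_n>n\sigma^2_{i\xi}$. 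You instead pass to a smooth interpolant with $\Gamma(x+1)$ and show $\ell''\le 0$ on $[1,\infty)$ using the trigamma bound $\psi'(x+1)>1/(x+1)$, reducing to $2x^2-x-1\ge 0$; the paper's cubic $n^2(2n+1)\ge(n+1)^2$ and your quadratic are two faces of the same competition between the factorial's curvature and the variance term. Your version buys a cleaner, termwise argument that makes the threshold at $x=1$ transparent and explains structurally why $n_{it}\ge 1$ is needed; the paper's version buys exactness (no interpolation needed, only the three lattice points enter) at the cost of a messier displayed identity, which in fact contains a typo in the square term ($(r_{it}-n\mu_{i\xi})^2$ should be $(r_{it}-n\mu_{i\xi})$). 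Your closing caveat about the triple at $n=1$ involving $p(0)$ matches the paper's reading of the statement: its proof likewise only establishes the inequality for $n\ge 2$, i.e., for triples contained in $\{1,2,\dots\}$, which is all the rejection sampler requires.
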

\begin{proof} See the supplementary material. \end{proof}
Due to Proposition \ref{log-conc}, the target distribution is a unimodal distribution \citep{devroye1986non}. Let $m$ be an integer at the right of the mode of the distribution.
We define the probability mass function (pmf) 
$$
q_m(n) = 
\begin{cases} 
\mathLarge{
\frac{\tilde{p}(n)}{c}}, & \mbox{if } n < m \\ \mathLarge{ \frac{\tilde{p}(m)}{c}\left(\frac{\tilde{p}(m+1)}{\tilde{p}(m)}\right)^{n-m}}, & \mbox{if } n \geq m 
\end{cases}
$$
with $c=\sum_{n=0}^{m-1}\tilde{p}(n)+\tilde{p}(m)^2/(\tilde{p}(m)-\tilde{p}(m+1))$. The pmf $q_m(n)$ is proportional to $\tilde{p}(n)$ for $n < m$ and  proportional to the geometric pmf with parameter, $(\tilde{p}(m)-\tilde{p}(m+1))/\tilde{p}(m)$, for $n \geq m$. By noting that if a random variable $X$ follows the exponential distribution with parameter, 
%$\log\mathLarge{\frac{\l(m)}{\l(m+1)}}$,
$\log(\tilde{p}(m))-\log(\tilde{p}(m+1))$, 
then $\lfloor X \rfloor$ follows the geometric distribution with parameter  $(\tilde{p}(m)-\tilde{p}(m+1))/\tilde{p}(m)$, we observe that $q_m(n)$, for $n \geq m$,  corresponds to an exponential density that goes through the points $(m,\tilde{p}(m))$ and $(m+1,\tilde{p}(m+1))$.
Log-concavity of the distribution implies that $cq_m(n) \geq \tilde{p}(n)$, for $n \geq m$, since any straight line that goes through the points $(m,\log(\tilde{p}(m)))$ and $(m+1,\log(\tilde{p}(m+1)))$ bounds from above the logarithm of $\tilde{p}(n)$. See Figure \ref{fig:Fig2} for an illustration. A rejection algorithm to sample from the distribution with pmf $p(n_{it} |h_{it},\mu_{i\xi},\sigma^2_{i\xi},b_i,W_i,F_t,r_{it})$ proceeds as summarized by Algorithm \ref{RJ}. Note that Algorithm \ref{RJ} implies the evaluation of $\tilde{p}(0),\dots,\tilde{p}(m+1)$ at each iteration of the proposed MCMC algorithm.

\begin{figure}[t]
	\centerline{\includegraphics[scale=0.4]{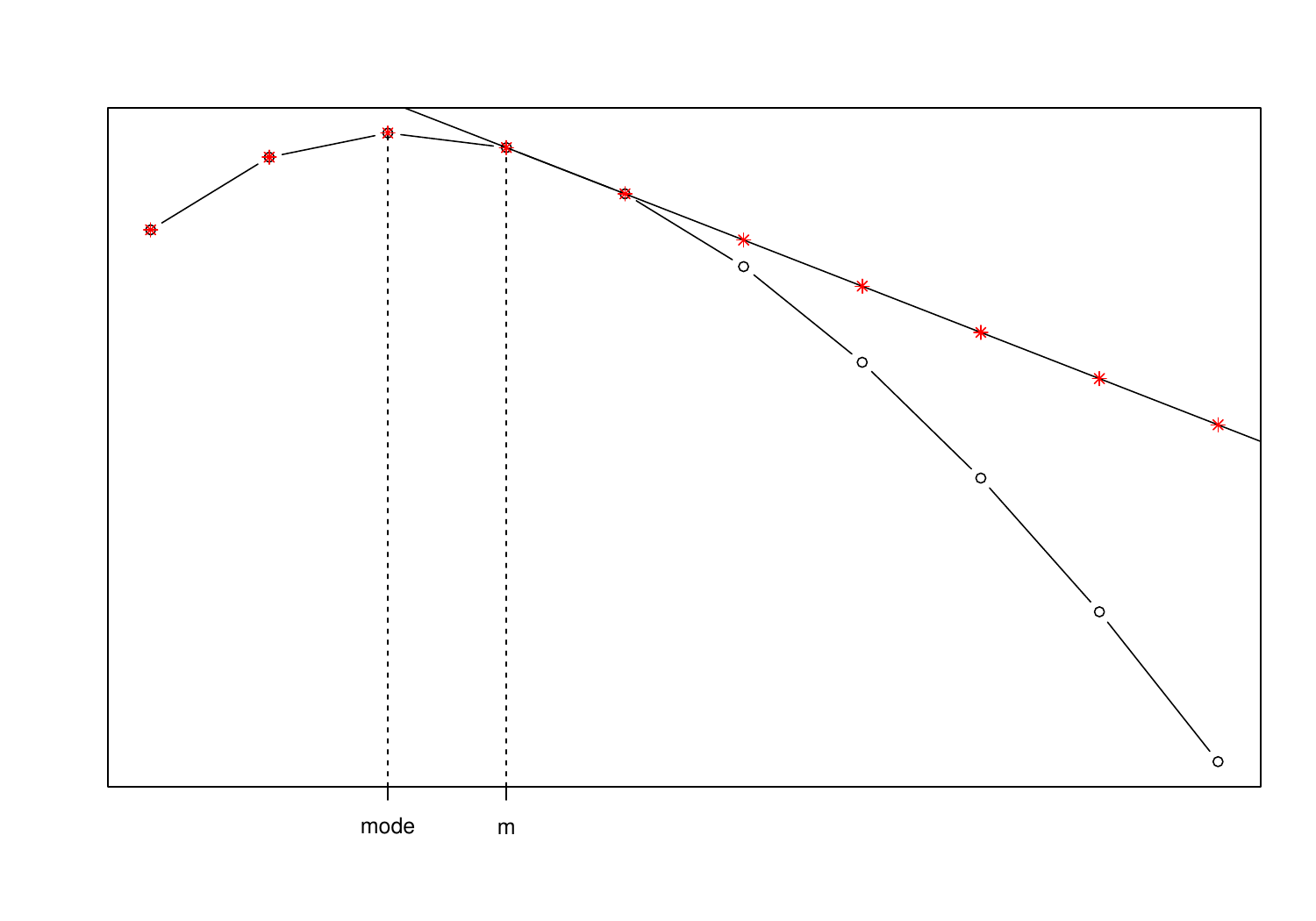}}
	\vspace*{-0.5cm}
	\caption{$\tilde{p}(n)$ and its bounding line in the log-scale. Circles: $\log(\tilde{p}(n))$, stars: $\log(q_m(n))$.}
	\label{fig:Fig2}
\end{figure}

\begin{algorithm}[H]
\caption{Rejection sampler for the number of jumps.}\label{RJ}
\begin{algorithmic}
\State $\bf{Generate}$ $u \sim U(0,1)$
\If {$v \leq \frac{\sum_{n=0}^{m-1}\tilde{p}(n)}{c}$}
  \State $\bf{generate}$ $n$ from $q_m(n)$ truncated from the right at the point $m-1$, by using the inversion method. 
\Else
        \State $\bf{repeat}$:
        \State $\bf{generate}$ $V \sim U(0,1)$
        \State $\bf{generate}$ $n$ from the geometric pmf with parameter $(\tilde{p}(m)-\tilde{p}(m+1))/\tilde{p}(m)$, truncated from the left at the point $m$.
\State $\bf{until}$ $Vcq_m(n) \leq \tilde{p}(n)$
\EndIf\\
\Return $n$ 
\end{algorithmic}
\end{algorithm}

\subsubsection{Jump sizes}

Drawing the jump sizes $\xi_{it}$ given the number of jumps $n_{it}$ is based on the following proposition. 
Let $I_n$ and $\one{n}$ be the $n \times n$ identity matrix and an $n$-dimensional vector with ones respectively.
\begin{prop} According to the model defined by equations \eqref{eq:obs} and \eqref{eq:state} we have that $p(\xi_{it} |n_{it}, h_{it},\mu_{i\xi},\sigma^2_{i\xi}, r_{it})$ is equal to the Gaussian density
\label{prop2}
\begin{equation*}
 \mathcal{N}_{n_{it}} \left
  (\xi_{it} \given[\Big] (\mu_{i\xi} \, \frac{e^{h_{it}}}{e^{h_{it}}+n_{it}\sigma_{i\xi}^2}+\frac{r_{it}}{e^{h_{it}} +n_{it}\sigma_{i\xi}^2}n_{it}\sigma_{i\xi}^2)\one{n_{it}} ,
 \sigma_{i\xi}^2I_{n_{it}}-\frac{\sigma_{i\xi}^4}{e^{h_{it}}+n_{it}\sigma_{i\xi}^2}\one{n_{it}}\one{n_{it}}' \right),
\end{equation*}
for each $i=1,\ldots,p$.
\end{prop}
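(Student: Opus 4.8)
The plan is to read this off as a standard conjugate linear--Gaussian update, exploiting the fact that in \eqref{eq:obs} the jump sizes enter only through their sum $\one{n_{it}}'\xi_{it}$. First I would fix $n_{it}=n\ge 1$ and record the prior: since the $\xi_{it}^{\kappa}$ are i.i.d.\ $\mathcal{N}(\mu_{i\xi},\sigma_{i\xi}^2)$, the vector $\xi_{it}=(\xi_{it}^1,\dots,\xi_{it}^n)'$ has prior density $\mathcal{N}_n(\mu_{i\xi}\one{n},\sigma_{i\xi}^2 I_n)$. Next, from \eqref{eq:obs}, conditionally on $h_{it}$ and $\xi_{it}$ we have $r_{it}=e^{h_{it}/2}\epsilon_{it}+\one{n}'\xi_{it}$ with $\epsilon_{it}\sim\mathcal{N}(0,1)$ independent of the jumps, so the likelihood factor is $r_{it}\mid \xi_{it},h_{it}\sim\mathcal{N}(\one{n}'\xi_{it},\,e^{h_{it}})$. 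This displays $r_{it}$ as a single scalar linear-Gaussian observation of $\xi_{it}$ with design row $\one{n}'$ and noise variance $e^{h_{it}}$, so by Gaussian conjugacy the target conditional is Gaussian and only its mean and covariance remain to be identified.

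For the covariance I would form the posterior precision as prior precision plus observation precision, $\sigma_{i\xi}^{-2} I_n + e^{-h_{it}}\one{n}\one{n}'$, a rank-one perturbation of a scaled identity. Inverting it with the Sherman--Morrison formula, and using $\one{n}'\one{n}=n$, gives $\sigma_{i\xi}^2 I_n-\frac{\sigma_{i\xi}^4}{e^{h_{it}}+n\sigma_{i\xi}^2}\one{n}\one{n}'$, which is exactly the stated covariance. For the mean I would either complete the square in the exponent of the product of the two Gaussian densities, or apply the conditional-mean form $\mu_{i\xi}\one{n}+\sigma_{i\xi}^2\one{n}\,(e^{h_{it}}+n\sigma_{i\xi}^2)^{-1}(r_{it}-n\mu_{i\xi})$. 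Because both $\one{n}$ and the data act isotropically, the posterior mean is a common scalar times $\one{n}$, and collecting terms in that scalar yields the precision-weighted blend of the prior jump mean $\mu_{i\xi}$ and the observed return $r_{it}$ displayed in the statement.

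I do not expect any genuine obstacle: the whole content is the conjugacy computation, and the only care needed is the algebraic bookkeeping in the Sherman--Morrison step (tracking the $\one{n}\one{n}'$ terms and the factor $n$) and in condensing the scalar mean coefficient. As a sanity check I would examine the degenerate limit $e^{h_{it}}\to 0$, in which $r_{it}$ equals the jump sum exactly: there the covariance collapses onto the hyperplane $\one{n}'\xi_{it}=r_{it}$ and the mean distributes $r_{it}$ symmetrically across the $n$ jumps, confirming that the symmetric $\one{n}$ structure of both the mean and the covariance is the right one.
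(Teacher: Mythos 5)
Your proposal is correct and takes essentially the same route as the paper's proof: the paper also treats this as a conjugate Gaussian update, writes the posterior precision as $\sigma_{i\xi}^{-2}I_{n_{it}}+e^{-h_{it}}\one{n_{it}}\one{n_{it}}'$, and inverts the rank-one perturbation via the Miller (1981) identity, which for a rank-one $Q_2$ is exactly your Sherman--Morrison step. One point worth flagging: your mean, $\mu_{i\xi}\one{n}+\sigma_{i\xi}^2\one{n}(e^{h_{it}}+n\sigma_{i\xi}^2)^{-1}(r_{it}-n\mu_{i\xi})=\frac{\mu_{i\xi}e^{h_{it}}+r_{it}\sigma_{i\xi}^2}{e^{h_{it}}+n\sigma_{i\xi}^2}\one{n}$, does \emph{not} literally match the displayed statement, whose second term carries an extra factor $n_{it}$; however, your version agrees with the paper's own supplementary computation of $Q_{i\xi}^{-1}c$, and your $e^{h_{it}}\to 0$ sanity check (each coordinate tending to $r_{it}/n$) confirms that the version without the extra $n_{it}$ is the correct one, so the display in the proposition appears to contain a typo rather than your derivation containing an error.
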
   
\begin{proof} See the supplementary material. \end{proof}

\subsection{Log-volatilities and factors: MCMC for latent Gaussian models}
\label{sec:mcmc_latent_Gaussian}

In the $4$th and $12$th lines of Algorithm \ref{GibbsIdeal} we sample the paths of the log-volatilities $H_i$ and factors $F$ respectively. Moreover, each path is updated jointly with the parameters of the corresponding autoregressive models defined by equations \eqref{eq:state} and \eqref{eq:facdyn3}.  These steps correspond to the joint sampling of latent paths and parameters of latent Gaussian models. A latent Gaussian model is defined in terms of a non-Gaussian likelihood and a latent Gaussian field controlled by some parameters. By denoting with $\mathrm{X}$ a $d$-dimensional random vector that corresponds either to a latent log-volatility or to a factor path, with $\x$ its realization and with $\theta$ any parameters we target the density 
\begin{equation}
\label{eq:lg_target}
p(\x,\theta) \propto \exp\{g(\x)\}\mathcal{N}(\x|0,C_{\theta})\pi(\theta) 
\end{equation}
where $g(\x)$ is the log-likelihood function, $C_{\theta}$ denotes the covariance matrix of the latent Gaussian field, parametrized in terms of $\theta$ and $\pi(\theta)$ denotes the density of the prior for $\theta$. We have assumed, without loss of the generality, that the latent Gaussian field has zero mean. We also note that only throughout the present Section we violate our notation and we use lower-case letters to denote the realizations of a random vector; the dependence on any data has been suppressed.

To draw samples from the target in \eqref{eq:lg_target} two approaches have dominated in the literature. One of them is to alternate sampling from the conditionals $p(\x|\theta)$ and $p(\theta|\x)$. Sampling from $p(\x|\theta)$ is quite challenging since usually corresponds to a high dimensional distribution. To draw the desired samples several well-established methods, based on the Metropolis-Hastings algorithm, have been used, see e.g. \cite{roberts2002langevin}, \cite{murray2010elliptical}, \cite{girolami2011riemann} and \cite{cotter2013mcmc}; in the latter approach the proposal distribution utilizes information both from the likelihood and the Gaussian prior of the latent states and this is an important feature of the proposed sampler \citep{titsias2018auxiliary}. Sampling from $p(\theta|\x)$ is also challenging since the parameters $\theta$ are highly correlated with the Gaussian latent states of the model. See, for example, \cite{papaspiliopoulos2007general}, \cite{murray2010slice} and \cite{yu2011center} for efficient methods that have been developed for the update of $\theta$, conditional on the latent Gaussian filed, in a more general class of Bayesian hierarchical models. In the second approach, joint sampling of the latent states $\x$ and the parameters $\theta$ is conducted. This perspective is considered, among others, by \cite{knorr2002block} where the latent states and the parameters are updated in, carefully selected, blocks and by \cite{titsias2018auxiliary} where a sampler which utilizes auxiliary variables to perform updates that are both prior and likelihood informed has been constructed.  

In the present paper we follow the latter approach and we update jointly the latent states and the parameters of a given model by combining the methods proposed by \cite{yu2011center} and \cite{titsias2018auxiliary}.  Following \cite{yu2011center} we interweave the so-called \citep{papaspiliopoulos2007general} centred and non-centred parametrization of the latent states and by utilizing the algorithm developed by \cite{titsias2018auxiliary} we sample jointly $\x$ and $\theta$ under each one of the two parametrizations. We show that by utilizing the proposed Metropolis-Hastings step we find efficiency gains relative to alternative samplers that can be used to update the latent log-volatilities, the factor paths and their parameters. In the rest of the Section we present the proposed methodology in the case of a general latent Gaussian model and in the supplementary material we provide the details for the application of the method in order to sample the latent factors $F$ as well as each log-volatility path $H_i$ and their parameters.

\subsubsection{Metropolis-Hastings for the joint update of latent paths and parameters}

To construct a Metropolis-Hastings step in order to sample jointly $\x$ and $\theta$ we work as in \cite{titsias2018auxiliary} and we introduce auxiliary variables $\mathrm{u} \sim \mathcal{N}_d(\x, (\delta/2)I_d)$. We consider, thus, the expanded target 
\begin{equation}
\label{eq:lg_target_exp}
p(\x,\theta,\mathrm{u}) \propto \exp\{g(\x)\}\mathcal{N}_d(\x|0,C_{\theta})\mathcal{N}_d(\mathrm{u}|\x, (\delta/2)I_d)\pi(\theta).
\end{equation}
Samples from the expanded target can be drawn by first drawing $\mathrm{u}$ from $p(\mathrm{u}|\x) = \mathcal{N}_d(\mathrm{u}|\x, (\delta/2)I_d)$ and then employing the Metropolis-Hastings algorithm to draw from the intractable $p(\x,\theta|\mathrm{u})$. To perform the latter update we utilize a proposal distribution of the form
\begin{equation}
\label{eq:prop_form}
q(\x',\theta'|\x,\theta,\mathrm{u}) = q(\x'|\x,\theta',\mathrm{u})q(\theta'|\theta).
\end{equation}
To construct $q(\x'|\x,\theta,\mathrm{u})$ we approximate the intractable log-likelihood $g(\cdot)$ by employing a first order Taylor expansion of $g(\cdot)$ around the current value $\x$. We obtain, then, the proposal density
\begin{align}
\nonumber
 q(\x' | \x,\theta',\mathrm{u})  \propto &  \exp\{ g(\x) + \nabla g(\x)^\top
 (\x'-\x) \} 
 \mathcal{N}_d(\x' | 0, C_{\theta'}) \mathcal{N}_d(\mathrm{u} | \x', (\delta/2) I_d)
 \\  
& \propto \mathcal{N}_d \left(\x'  |  \frac{2}{\delta} Q_{\theta'} (
    \mathrm{u} + \frac{\delta}{2} \nabla g(\x) ), Q_{\theta'} \right),
\label{eq:aux1}
\end{align}
where $Q_{\theta'} =  (C^{-1}_{\theta'} + {2 \over \delta}I_d)^{-1}$. To reduce the costs of generating from the proposal in \eqref{eq:aux1} as well as evaluating the resulting acceptance ratio we follow \cite{titsias2018auxiliary} and we define new auxiliary variables $\mathrm{z}$ by using the reparametrization
\begin{equation*}
\mathrm{z} \equiv  \mathrm{u} + (\delta / 2) \nabla g(\x) \sim  \mathcal{N}_d(\x  + (\delta /
2) \nabla g(\x), (\delta/2) I_d ).
\end{equation*}
The expanded target in \eqref{eq:lg_target_exp} becomes now
\begin{equation*}
\label{eq:lg_target_exp_new}
p( \x, \mathrm{z}, \theta)  \propto
\exp\{g(\x) \} \mathcal{N}_d( \x | 0, C_{\theta})  \mathcal{N}_d(\mathrm{z} | \x  + (\delta /
2) \nabla g(\x), (\delta/2) I_d ) \pi(\theta),
\end{equation*}
while the proposal in \eqref{eq:aux1} can be written as 
\begin{equation}
\label{eq:new_aux_prop}
q(\x' |\theta', \mathrm{z})= \mathcal{N}_d\left(\x' |  \tfrac{2}{\delta} Q_{\theta'} \mathrm{z}, Q_{\theta'} \right)=\frac{1}{\mathcal{Z}(\mathrm{z},\theta')}\mathcal{N}_d( \x' | 0, C_{\theta'})\mathcal{N}_d( \mathrm{z} | \x', C_{\theta'}), 
\end{equation}
where $\mathcal{Z}(\mathrm{z},\theta') = \mathcal{N}_d( \mathrm{z} | 0, C_{\theta'}+(\delta/2)I_d)$. Notice that, importantly, the proposed $\x'$ becomes conditionally independent from the
current $\x$ given the auxiliary variable $\mathrm{z}$. In particular, the proposal $q(\x',\theta'|\x,\theta,\mathrm{u})$ in \eqref{eq:prop_form} becomes
$$
q(\x',\theta'|\x,\theta,\mathrm{z}) = q(\x' |\theta', \mathrm{z})q(\theta'|\theta)
$$
and the acceptance ratio of the move is 
\begin{equation}
\label{eq:acc_ratio}
\exp\left \{g(\x') -  g(\x) +\nu(\mathrm{z},\x') - \nu(\mathrm{z},\x) \right \}  \times \frac{\mathcal{Z}(\mathrm{z},\theta') 
\pi(\theta') q(\theta|\theta')}{\mathcal{Z}(\mathrm{z},\theta) \pi(\theta)q(\theta'|\theta)}
\end{equation} 
where $\nu(\mathrm{z},\x) = \left(\mathrm{z}- \x - (\delta/ 4) \nabla g(\x)\right)^\top \nabla g(\x)$. In our application we choose $q(\theta'|\theta)$ to be a simple random walk proposal distribution with step-size $\kappa$. Algorithm \ref{alg:tp} summarizes the proposed Metropolis-Hastings step for the joint update of $\x$ and $\theta$. Finally, we note that following \cite{titsias2018auxiliary} we learn $\kappa$ and $\delta$ during the burn-in period by decomposing step $2$ in Algorithm \ref{alg:tp} in two sub-steps. We first update $\x$ alone and tune $\delta$ in order to achieve an acceptance ratio between $50\%$ and $60\%$. Then, we update $(\x,\theta)$ jointly and we tune $\kappa$ to obtain acceptance ratio in the range of $20\%$ to $30\%$. 

\begin{algorithm}[H]
\caption{Auxiliary gradient-based sampler to draw samples from \eqref{eq:lg_target}}
\begin{algorithmic}[H!]
\State $1$. Sample $\mathrm{z} \sim \mathcal{N}_d(\x  + (\delta /
2) \nabla g(\x), (\delta/2) I_d )$.
\State $2$. Propose $\theta' \sim q(\theta'|\theta)$ and $\x' \sim q(\x'|\theta',z)$ and accept them according to the Metropolis-Hastings ratio in \eqref{eq:acc_ratio}.
\end{algorithmic}
\label{alg:tp}
\end{algorithm}

\subsubsection{Interweaving strategy for the parameters}

Although drawing the latent path $\x$ jointly with the parameters $\theta$ aims to overcome convergence problems that occurring due to their high dependence, it is well-known that the parametrization of latent Gaussian models as well as of more general Bayesian hierarchical models plays a key role in the efficiency of MCMC algorithms \citep{papaspiliopoulos2007general}. To address convergence issues  
that arise from the parametrizations of the latent log-volatility models and from the latent factor model for the jump intensities we employ the, so-called, ancillarity–sufficiency interweaving strategy (ASIS) developed by \cite{yu2011center}; see also \cite{kastner2014ancillarity} for the benefits of the ASIS in the case of univariate SV models and \cite{simpson2017interweaving} where the ASIS has been applied in more general dynamic linear models. In the case of a latent Gaussian model with latent states $\x$ and parameters $\theta$ the application of ASIS is briefly described as follows. After the joint update of $\x$ and $\theta$ with the Metropolis-Hastings step described by Algorithm \ref{alg:tp} we transform the latent path $\x$ to its non-centred parametrization $\tx$ by using an one-to-one transformation. We re-draw then the parameters $\theta$ from $p(\theta|\tx)$ before transforming $\tx$ back to $\x$ by using the inverse transformation. 

To employ ASIS in our application we first note that equations \eqref{eq:obs}-\eqref{eq:state} and \eqref{eq:intenstrans}-\eqref{eq:facdyn3} define the centred parametrizations \citep{papaspiliopoulos2007general} of the corresponding models. The non-centred parametrization for the latent states of a latent Gaussian model can be found by following the reasoning developed by \cite{yu2011center}: under a non-centred parametrization $\x$ has to be an ancillary statistic for $\theta$, i.e., the distribution of $\x$ needs to be free of $\theta$. We note that there are cases where a many-to-one transformation is needed \citep{papaspiliopoulos2003non} but this is not required in our application. It is easy to see that in the case of the latent log-volatilities $h_{it}$ defined in \eqref{eq:state} the variables $\tilde{h}_{it} = (h_{it}-\mu_i)/\sigma_{i\eta}$ are in a non-centred parametrization while in the case of the latent factors defined by \eqref{eq:facdyn2} the transformation $\Gamma_t = F_t -AF_{t-1}$ results in a non-centred factor model for the jump intensities. ASIS in completed by re-drawing the log-volatility parameters $\mu_i, \phi_i$ and $\sigma^2_{i\eta}$ and the factor parameters $A$ under the non-centred parametrizations and by using the inverse transformation to return back to $h_{it}$ and $F_t$. 

In the supplementary material we provide all the details for our implementation of ASIS. To emphasize the gains obtained we conducted simulation studies which are also presented in the supplementary material. By describing briefly the results we note that combining ASIS with the auxiliary gradient-based sampler (Algorithm \ref{alg:tp}) improves substantially the efficiency of the proposed MCMC algorithm (Algorithm \ref{GibbsIdeal}) in drawing samples from the parameters $\mu_i$ and $\sigma^2_{i\eta}$ of the log-volatility processes (see Figures S.$2$ and S.$3$ in the supplementary material) as well as from the parameters in the matrix $A$ of the latent factor model for the jump intensities; see Figure S.$4$ in the supplementary material.

\subsection{Approximation by using a mixture of Gaussian distributions}
\label{sec:chib}

For a given stock, sampling the whole path of the latent log-volatilities $H_i$ at each MCMC iteration could be performed by utilizing the methodology developed by \cite{chib2002markov} and \cite{nakajima2009leverage}. Their methods are based on the idea of \cite{kim1998stochastic} to approximate a SV without jumps model by using a mixture of Gaussian distributions. 

The advantage of their approach is that the update of $H_i$ can be conducted by drawing samples directly from its full conditional distribution which is a $(T+1)$-variate Gaussian with tridiagonal inverse covariance matrix. The jumps sizes and their parameters should be updated after integrating out the mixture component indicators from the likelihood of the approximated model, otherwise convergence of the corresponding MCMC algorithm will be slow \citep{nakajima2009leverage}. Integrating out the mixture component indicators results in a partially collapsed Metropolis-Hastings within Gibbs sampler which has to be implemented with care since permuting the order of the updates can change the stationary distribution of the chain \citep{van2015metropolis}. If, for example, the update of the mixture indicators is conducted in the step which is intermediate between the update of the latent log-volatilities and the update of their parameters, then the Markov chain will not converge to the desired stationary distribution.

More importantly, by using the methods developed in \cite{chib2002markov} and \cite{nakajima2009leverage}, samples from the posterior of the parameters of the exact model are obtained with an importance sampling step. In this step, weights that are equal to the ratio between the likelihood of the exact and the approximated model, are assigned to the MCMC samples obtained from the approximated model. Although in the case of SV models without jumps the described importance sampling step works without problems, it is known \citep{johannes2009optimal} that in the case of models with jumps such weights may have large variance since a few or only one of them could be much larger than the others. Thus, the output of the importance sampling step will not be suitable for any Monte Carlo calculation. For an illustration of the problem we conducted a simulation experiment in which we calculated the required importance sampling weights for SV models with and without jumps. In particular, we simulated $p=1$ time series with $T=1,500$ log-returns from each model by utilizing equations \eqref{eq:obs}--\eqref{eq:state} and omitting the jump component in \eqref{eq:obs} to simulate from the model without jumps. In the case of the SV model with jumps we assumed independent jump intensities over time following the modelling approach presented in Section \ref{sec:indepjumpintens}. We chose values for the parameters of the log-volatility process which are common in the financial literature \citep{kim1998stochastic}; $\mu_1=-0.85, \phi_1=0.98$ and $\sigma_{1\eta} =0.15$ and we simulated the sizes of possible jumps from the Gaussian distribution with $\mu_{1\xi}=0$ and $\sigma_{1\xi}=3.5$ \citep{eraker2003impact}. Then, we drew $3,000$ (thinned) posterior samples of the parameters and latent states of the approximated models by using the corresponding MCMC algorithms.

Figure \ref{fig:chibw} displays the log-normalised importance sampling weights plus the logarithm of their number. The variance of the normalized importance sampling weights should become smaller as the importance sampling approximation improves. This implies that the histograms in Figure \ref{fig:chibw} should be centered at zero in the case of an accurate enough approximation; see the corresponding plots in \cite{kim1998stochastic} and \cite{omori2007stochastic}. The left panel of Figure \ref{fig:chibw} indicates that there is little effect of the mixture approximation in the case of the model without jumps, but the right panel points out that this is not true in the case of the SV with jumps model. The bad performance of the approximation strategy in models with jumps is due to the fact that the mixture model has been proposed by \cite{kim1998stochastic} and improved by \cite{omori2007stochastic} for SV models without jumps. By adding a jump component in the model the mixture approximation has to take into account the uncertainty in the estimation of the jump process and this is not the case in the approximation used by \cite{chib2002markov} and \cite{nakajima2009leverage}.

\begin{figure}[H]
	\centerline{\includegraphics[width=4in]{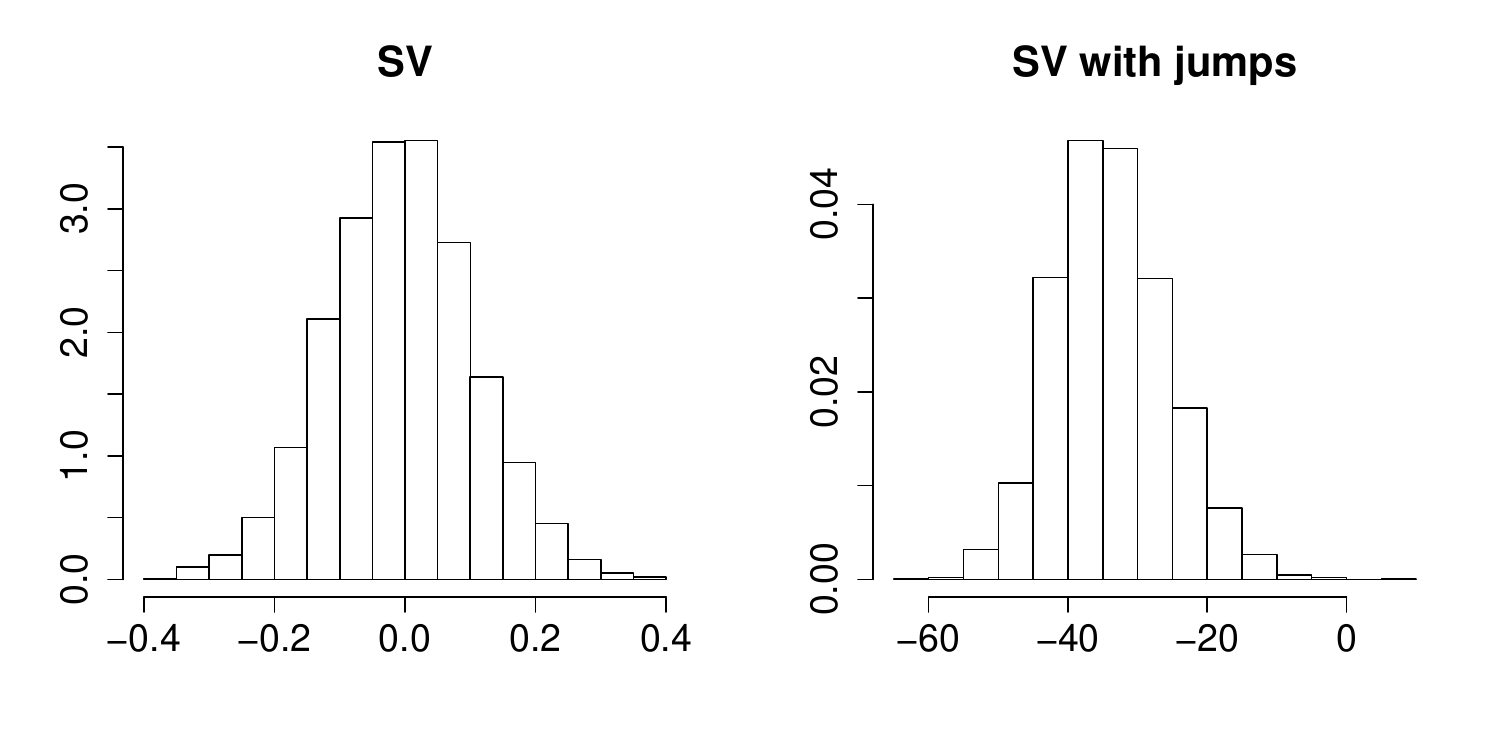}}
	\vspace*{-0.5cm}
	\caption{Histograms of normalized weights multiplied by their number, in the log-scale, calculated as the ratio between the likelihood of the exact and the approximated, by a mixture of Gaussian distributions, model in the case of stochastic volatility (SV) models with and without jumps. 
} 
	\label{fig:chibw}
\end{figure}

\section{Out-of-sample forecasting} 
\label{sec:pred}

To compare the predictive performance of different SV models we develop a computational framework for out-of-sample forecasting. Based on $T$ in-sample and $\ell$ out-of-sample observations we assess the forecasts obtained from the various models by calculating proper scoring rules \citep{gneiting2007strictly}. A scoring rule assigns a numerical score to a probabilistic forecast by taking into account the posterior predictive distribution and the realized observation and it is proper if the expected score is maximized when the estimated predictive distribution coincides with the true distribution. Proper scoring rules allow for the joint evaluation of the calibration (compatibility of forecasts and observations) and the sharpness (concentration of the predictive distributions); see, for example, \cite{geweke2010comparing} and \cite{kruger2020predictive} for more detailed discussions. We focus on three proper scoring rules: (i) logarithmic scores, and particularly, the predictive log Bayes factors which compare the models with respect to their marginal likelihoods, (ii) continuous ranked probability scores which are employed for the assessment of the predictive cumulative distribution functions (CDFs)  and (iii) interval scores which are utilized for the evaluation of the prediction credible intervals. We complete a full-fledged assessment of the predictive performance of the different models by calculating root mean squared errors of the obtained forecasts. 

To calculate these summary measures we draw samples from the posterior predictive distributions of interest and we estimate their densities by utilizing sequential Monte Carlo methods \citep{chopin2020introduction}. In the rest of the Section we present the computational framework that we built in order to conduct out-of-sample forecasts with univariate SV models with and without jumps as well as by using the proposed joint modelling approach for the jump intensities. To evaluate the accuracy of the obtained estimators we calculate the effective sample size of the samples drawn from the posterior predictive distributions. The effective sample size of a given sample is commonly employed \citep{neal2001annealed,chopin2013smc2} to assess the variance of estimators obtained from weighted rather than independent samples; see in the supplementary material for more details. For the remaining of this section we assume that the parameters $\Theta$, $A$ and $W$ are fixed to their in-sample estimated posterior means and we omit them from our notation.

\subsection{Prediction with independent SV models}
\label{sec:compuni}
Let $\M_{\mathrm{svj}}$ and $\M_{\mathrm{sv}}$ be the model that consists of $p$ independent, univariate SV models with and without jumps respectively. Let also $R_{1:T}$ and $R_{T+1:T+\ell}$ denote in-sample and out-of-sample observations. To calculate predictive Bayes factors and the continuous ranked probability and interval scores we consider the posterior predictive distributions with densities $p(r_{it}|r_{i,1:t-1})$, $t=T+1,\ldots,T+\ell$. The predictive log Bayes factors are defined as 
\begin{equation}\label{eq:mvBFuni}
\log(p(R_{T+1:T+j}|R_{1:T},\M_{\mathrm{svj}}))-\log(p(R_{T+1:T+j}|R_{1:T},\M_{\mathrm{sv}})),\,\,\,\, j=1,\ldots,\ell,
\end{equation}
where for both $\M_{\mathrm{svj}}$ and $\M_{\mathrm{sv}}$ we have that
\begin{equation*}\label{eq:margLik}
p(R_{T+1:T+j}|R_{1:T}) = \prod_{i=1}^{p}\prod_{t=T+1}^{T+j} p(r_{it}|r_{i,1:t-1}).
\end{equation*}
Moreover, if $l_{it}$ and $u_{it}$ denote the $\alpha/2$ and $1-\alpha/2$ quantiles of a $(1-\alpha)\times 100\%$ prediction interval for the value of $r_{it}$ then its interval score is given by the formula
\begin{equation}
\label{eq:interval_score}
(u_{it}-l_{it}) + \frac{2}{\alpha}(l_{it}-r_{it}) \mathbbm{1}\{r_{it} < l_{it}\} 
 + \frac{2}{\alpha}(r_{it}-u_{it}) \mathbbm{1}\{r_{it} > u_{it}\}.
\end{equation}
We note, finally, that the calculation of the continuous ranked probability scores that we employ for the evaluation of predictive CDFs is also based on samples from $p(r_{it}|r_{i,1:t-1})$; see the supplementary material for details.

We draw the desired samples and we estimate the densities $p(r_{it}|r_{i,1:t-1})$ based on the formula 
\begin{equation}
\label{eq:filteredUni}
p(h_{i,0:t} | r_{i,1:t} ) =  \dfrac{p(h_{i,0:t-1} |r_{i,1:t-1} )p(r_{it}| h_{it})p(h_{it}|h_{i,t-1})}{p(r_{it}| r_{i,1:t-1} )},
\end{equation}
and by utilizing sequential Monte Carlo methods also known as particle filters \citep{chopin2020pf}. From the output of a particle filter algorithm that targets the posterior in \eqref{eq:filteredUni} we are able to draw samples from the posterior predictive distribution of $r_{it}$ and to estimate the normalizing constants $p(r_{it}| r_{i,1:t-1} )$. See in the supplementary material for more details and for the pseudocode of the particle filter algorithm.

\subsection{Prediction by using joint modelled jump intensities}
\label{sec:compmulti}
Let $\M_{\mathrm{mvj}}$ be our proposed SV with jumps model, defined by equations \eqref{eq:obs}--\eqref{eq:poisson} and \eqref{eq:intenstrans}--\eqref{eq:facdyn3}. To compare the predictive performance of $\M_{\mathrm{mvj}}$ with the performance of competing models we need to estimate the predictive densities $p(R_{T+j}|R_{1:T+j-1},\M_{\mathrm{mvj}})$, for each $j=1,\ldots,\ell$. This could be achieved with a particle filter algorithm, constructed similarly to the case of independent SV models, as described in section \ref{sec:compuni}. However, in the case of $\M_{\mathrm{mvj}}$, particle filter algorithms deliver poor estimations of the marginal likelihoods since the importance weights, calculated at each of their steps, have large variance. 

We develop an alternative sequential importance sampling algorithm to obtain the desired estimates which is based on the annealed importance sampling (AIS) proposed by \cite{neal2001annealed}. AIS utilizes the method of simulated annealing \citep{kirkpatrick1983optimization} to construct an importance sampling algorithm for drawing weighted samples from an unnormalized target distribution. The proposal distribution of the algorithm is based on a sequence of auxiliary distributions and Markov transition kernels that leave them invariant. As in any importance sampling algorithm, the sample mean of the importance weights is an estimation of the ratio between the normalizing constants of the target and proposal distributions. Moreover, the output of AIS can be used for the estimation of expectations with respect to any of the auxiliary distributions used for the construction of the algorithm. Here we exploit these features of AIS to draw samples from the posterior predictive distributions of interest and to estimate their densities. In the rest of the section we provide a detailed description of how we apply AIS, $\M_{\mathrm{mvj}}$ is suppressed throughout except when it is necessary. 

Let $g_0,\ldots,g_{\ell}$ be the required sequence of auxiliary distributions. By noting that each $g_j$ must be proportional to a computable function and satisfies $g_j \neq 0$ wherever $g_{j-1} \neq 0$ we set $g_{\ell}(H_{0:T+\ell},F_{0:T+\ell}) = p(H_{0:T+\ell},F_{0:T+\ell}|R_{1:T+\ell})$ and  
\begin{align*}
\label{eq:ais_seq_target}
g_{j}(H_{0:T+\ell},F_{0:T+\ell})  & = p(H_{0:T+\ell},F_{0:T+\ell}|R_{1:T+j})  \\
&  = p(H_{0:T+j},F_{0:T+j}|R_{1:T+j})\prod_{t=T+j+1}^{T+\ell} p(H_t|H_{t-1})p(F_t|F_{t-1}),
\end{align*}
for $j=0,\ldots,\ell-1$.
Samples from $g_j$, $j \geq 1$, are obtained by drawing samples from $g_{j-1}$ using a few iterations of Algorithm \ref{GibbsIdeal} and the transition densities $p(H_{T+j}|H_{T+j-1})$ and $p(F_{T+j}|F_{T+j-1})$. To sample from $g_0$ we utilize samples from $p(H_{0:T},F_{0:T}|R_{1:T})$. We compute the importance weights of the obtained samples by first noting that $g_j$ are proportional to the computable functions
\begin{equation*}
g^*_{j}(H_{0:T+\ell},F_{0:T+\ell}) = p(H_{0:T+j},F_{0:T+j},R_{1:T+j})\prod_{t=T+j+1}^{T+\ell} p(H_t|H_{t-1})p(F_t|F_{t-1}),\,\, j < \ell,
\end{equation*}
and $g^*_{\ell}(H_{0:T+\ell},F_{0:T+\ell}) = p(H_{0:T+\ell},F_{0:T+\ell},R_{1:T+\ell})$.
Therefore, the importance weight for every sampled point is
\begin{equation*}
\label{eq:aisw2}
\Omega^s_{T+j} = \Omega^s_{T+j-1} \dfrac{g^*_{j}(H^s_{0:T+\ell}, F^s_{0:T+\ell} )}{g^*_{j-1}(H^s_{0:T+\ell}, F^s_{0:T+\ell} )} =  \Omega^s_{T+j-1}p(R_{T+j} | H_{T+j}^s,F_{T+j}^s),\,\,j\geq2,
\end{equation*} 
and $\Omega^s_{T+1} = p(R_{T+1} | H_{T+1}^s,F_{T+1}^s)$, $s=1,\ldots,S$, where $S$ is the number of samples obtained by AIS and 
\begin{align}
\label{eq:lik_ais}
p(R_{t} | H_{t}^s,F_{t}^s) &= \prod_{i=1}^p p(r_{it} | h_{it}^s,F_{t}^s)\nonumber \\& = \prod_{i=1}^p\sum_{n_{it}=0}^{\infty} \mathcal{N}(r_{it}|n_{it}\mu_{i\xi},\exp(h^s_{it})+n_{it}\sigma_{i\xi}^2)p(n_{it}|\Delta_{it} \lambda^s_{it}).
\end{align}
We also note that the infinite sum in \eqref{eq:lik_ais} can be truncated without affecting the results of the algorithm \citep{johannes2009optimal}. We refer to the supplementary material for the pseudocode that we used to construct the described AIS algorithm. 

The sample means of the weights calculated at each step of AIS provide estimators of the marginal likelihoods $p(R_{T+1:T+j}|R_{1:T})$. However, the resulting weights have large variance and the corresponding estimates will be inaccurate. Nevertheless, and in contrast with a particle filter that targets $p(H_{0:T+\ell},F_{0:T+\ell}|R_{1:T+\ell})$, we can use the samples $\{H^s_{0:T+j},F^s_{0:T+j}\}_{s=1}^S$ to estimate the quantities $p(r_{i,T+1:T+j}|R_{1:T})$ for each stock separately. We propose the following sampling procedure. First, note that  
\begin{align*}
\label{eq:marg_i}
p(r_{it}|R_{1:t-1}) &= \int p(r_{it}|h_{it},F_{t})p(h_{it}|h_{i,t-1})p(F_{t}|F_{t-1})\\ & \times p(h_{i,0:t-1},F_{0:t-1}|R_{1:t-1})\d{ h_{i,0:t}} \d{F_{0:t}}.
\end{align*}   
This implies that by using only the $i$th element of the product in \eqref{eq:lik_ais} we obtain an estimator of the $i$th marginal of $p(R_{T+1:T+j}|R_{1:T})$ as follows. We set
$\hat{p}(r_{i,T+1:T+j}|R_{1:T}) = \sum_{s=1}^S \omega^s_{i,T+j}/S $, where 
\begin{equation*}
\label{eq:ais_marg_wts}
\omega^s_{it}  = \omega^s_{i,t-1}p(r_{it}|h_{it}^s,F_{t}^s ),
\end{equation*}
and $\omega^s_{i,T+1} = 1$. Thus, $p(r_{i,T+1:T+j}|R_{1:T},\M_{\mathrm{mvj}})$ is estimated by $\hat{p}(r_{i,T+1:T+j}|R_{1:T})$ accurately since the latter is based on importance sampling weights with reduced variance and can be used for the estimation of log Bayes factors of the form
\begin{equation}
\label{eq:margmargratio}
\log(p(r_{i,T+1:T+j}|R_{1:T},\M_{\mathrm{mvj}}))-\log(p(r_{i,T+1:T+j}|R_{1:T},\M_{\mathrm{svj}})),
\end{equation}
where $p(r_{i,T+1:T+j}|R_{1:T},\M_{\mathrm{svj}})$ can be estimated with the methods presented in Section \ref{sec:compuni}.
We also note that by assuming that the joint predictive density of all returns is estimated by the product of their marginal predictive densities, as in the independent across stocks SV with jumps models, we estimate the approximate log Bayes factors
  \begin{equation}
\label{eq:margmargratio2}
\sum_{i=1}^p \log(p(r_{i,T+1:T+j}|R_{1:T},\M_{\mathrm{mvj}}))-   \sum_{i=1}^p \log(p(r_{i,T+1:T+j}|R_{1:T},\M_{\mathrm{svj}})).    
  \end{equation}
Finally, $\{h^s_{i,0:T+j},F^s_{0:T+j}\}_{s=1}^S$ can be utilized in order to draw samples from the posterior predictive distributions needed to calculate the continuous ranked probability and the interval scores; see the supplementary material for details.

\section{Real data application}
\label{sec:realdata}

Here we present results from the application of the developed methodology on real data. In the supplementary material we present results from simulation studies used to test our methodology.  Further specifics of the data analysis, such as parameter estimates for the log-volatility processes, can be also found in the supplementary material as well as in \cite{alexopoulos2017gaussian}.

\subsection{Data and MCMC details} 

We applied the developed methods on time series consisted of daily log-returns from stocks of the STOXX Europe $600$ Index downloaded from Bloomberg between $10/1/2007$ to $11/6/2014$. We removed $29$ stocks by requiring each stock to have at least $1000$ traded days and no more than $10$ consecutive days with unchanged price. The final dataset consisted of $p = 571$ stocks and $1947$ traded days. We used daily log-returns observed in the first $T =1917$ days as in-sample observations to estimate all models described through our article and the last $\ell=30$ observations to test their predictive ability. In our dynamic factor model we used $K=3$ factors. We ran all the MCMC algorithms for $250,000$ iterations using a burn-in period of $100,000$ sampled points and we collected $1,000$ (thinned) posterior samples. In the supplementary material we give the computing times for the real and simulated data analysis that we have implemented and we evaluate the accuracy of the sequential Monte Carlo methods that we used to conduct the assessment of the predictive performance of the different models.

\subsection{Results}
\label{sec:est_results}

\subsubsection{Bayesian inference for parameters and latent states}
\label{sec:pred_results}

Figure \ref{fig:factors} depicts the posterior means of the factor paths along with the posterior distributions of their persistent parameters $\alpha_1$, $\alpha_2$ and $\alpha_3$. The posterior densities of $\alpha_1$ and $\alpha_2$ indicate strong evidence of autocorrelation in the first two factor processes while that of $\alpha_3$ suggests that
a model with only $K=2$ factors adequately captures the dependence of the jumps over times and across stocks. Therefore, all results in the following are based on $K=2$ factors.

By noting that we have applied our methods on a very high-dimensional dataset we present posterior distributions for summary statistics related to the latent volatility and jump processes. Figure \ref{fig:vol_post} shows, for each stock, the $95\%$ credible interval of the temporal average of the volatility process. From the visual inspection of the Figure we note that the stocks can be separated in those with low, medium and high volatilities. More precisely, the posterior temporal average of the volatilities for the majority of the stocks is estimated lower than $5$, for a few number of stocks the temporal average ranges between $5$ and $15$ while for less than $10$ stocks the posterior temporal average has $95\%$ credible interval with lower limit greater than $15$. See also the supplementary material for additional results regarding Bayesian inference about the log-volatility processes and their parameters. Figure \ref{fig:jumps_sectors} presents the $95\%$ credible intervals for the posterior distribution of the temporal sum of jumps for each stock with respect to the economic sector that it belongs to. The Figure indicates that for each stock have been identified $100$ jumps on average between $10/1/2007$ and $11/6/2014$. We also note that the number of stocks for which almost no jumps have been estimated is very small and such stocks mainly belong to the financial, the industrial and the consumer- cyclical and non-cyclical, the energy and the communications sectors. In the supplementary material we present more results related to the Bayesian analysis of the jump processes of the SV models while Figure \ref{fig:compJumps} offers a visual illustration of the estimated jumps underlying the time evolution of the stock returns.

As already noted in the introduction of the paper the top panel of Figure \ref{fig:compJumps}, which has been created by using an empirical method for jump identification, visualizes a dense estimation of the underlying jump process. The middle panel which corresponds to independent modelling of jumps depicts a very sparse jump process whereas the bottom panel indicates less number of jumps than the empirical method but more than the independent SV models. To evaluate the estimation of the jump processes in the three panels we first emphasize that the true jumps in the stock prices are unobserved events. Therefore, drawing conclusions for the accuracy of the jumps estimation by only visually inspecting the Figure is difficult. We note, however, that the dense jump structure obtained from the empirical method can be explained by the fact that the time evolution of the volatility of the stock returns is not taken into account; resulting in a possible overestimation of the number of jumps.  The latter combined with the more narrow credible intervals for the log-volatilities delivered by SV models with jumps (see Figure S$.12$ in the supplementary material) indicates that an efficient jump identification results in accurate estimation of the log-volatility process without increasing unnecessarily its level due to the presence of large price movements and, thus, more precise forecasts of future returns are expected.  See also the following Section where we compare the predictive performance of SV models with and without jumps. To assess the jump structures estimated by employing independent and joint models for the jump intensities of SV models we rely on their forecasting performance. In particular, as also described below, the proposed joint modelling approach for the jump intensities results in a more accurate prediction of future returns. We conclude, therefore, that the estimation of the underlying jumps process depicted by the third panel of Figure \ref{fig:compJumps} is closer to the real jump activity of the stock prices allowing further improvement in the estimation of the log-volatilities and more accurate prediction of future stock returns.

\begin{figure}[H]
\begin{center}
\includegraphics[scale=0.35]{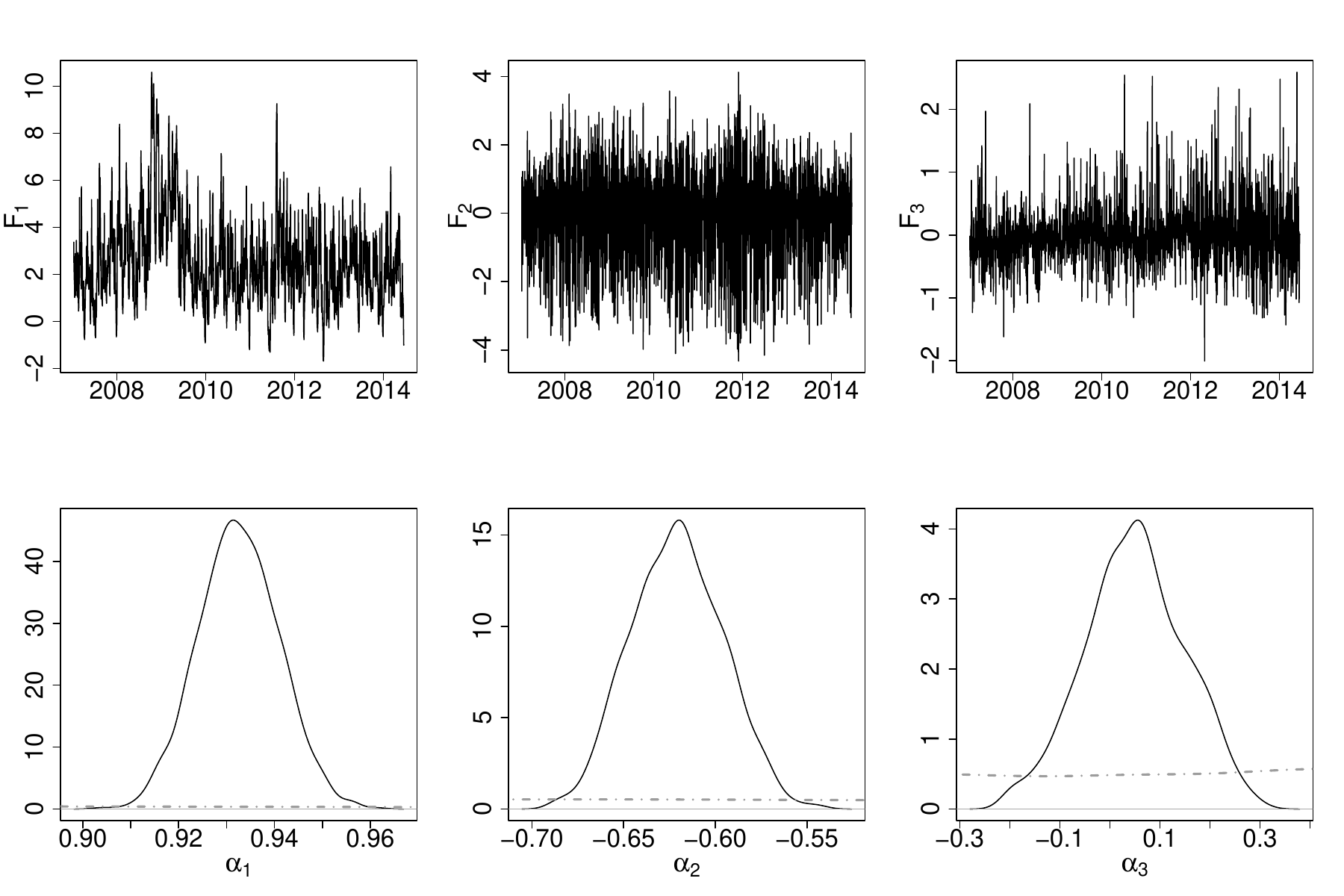}
\vspace*{-0.3cm}
\caption{First row: posterior mean of the paths of the three autoregressive factors $F_1$, $F_2$ and $F_3$ that we used to model the evolution of the jump intensities across stocks and across time. Second row: posterior (solid lines) and prior (dotted lines) distributions of the persistent parameters $\alpha_1$, $\alpha_2$ and $\alpha_3$.}
\label{fig:factors}
\end{center}
\end{figure}

\begin{figure}[H]
\begin{center}
\includegraphics[scale=0.45]{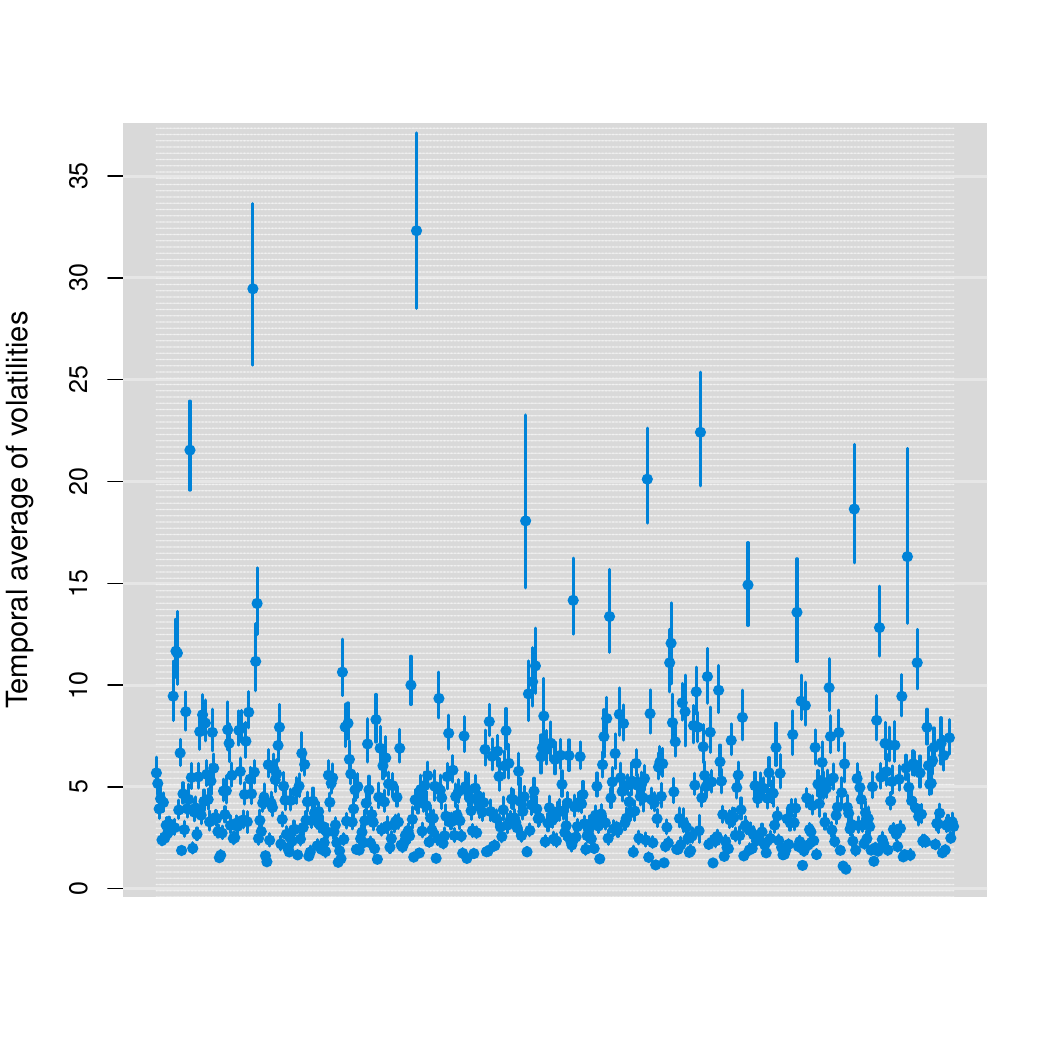}
\vspace*{-0.5cm}
\caption{$95\%$ credible intervals of the posterior distributions of the temporal averages of the volatility processes. Each dotted blue line corresponds to a different stock.}
\label{fig:vol_post}
\end{center}
\end{figure}

\begin{figure}[H]
\begin{center}
\includegraphics[scale=0.47]{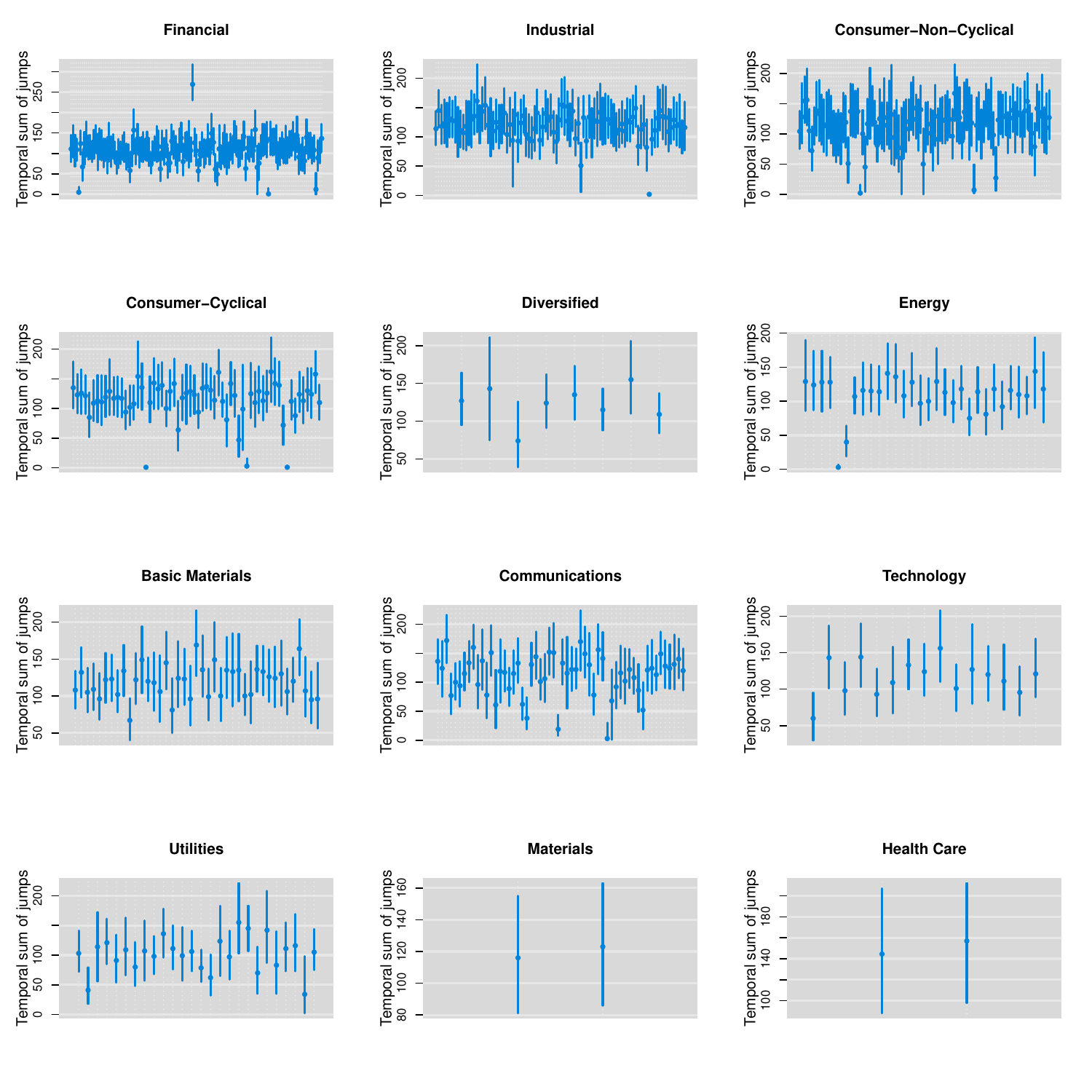}
\vspace*{-0.5cm}
\caption{$95\%$ credible intervals of the posterior distributions of the temporal sum of jumps. Each dotted blue line corresponds to a different stock and each plot is consisted of credible intervals for the stocks that belong to the economic sector indicated by its title.}
\label{fig:jumps_sectors}
\end{center}
\end{figure}

\subsubsection{Predictive performance}

Here we evaluate the predictive performance of the proposed SV model with jointly modelled jump intensities as well as the corresponding performance of univariate SV models with and without jumps which are also specified by equations \eqref{eq:obs}-\eqref{eq:poisson} by making suitable assumptions. In the supplementary material we compare the forecasting performance of the proposed model with alternative SV models which account for the heavy tails of the distribution of the observed returns.

The left panel of Figure \ref{fig:uni_mv_BF} compares the predictive performance of independent, univariate SV models with and without jumps by presenting predictive log Bayes factors. The Figure is constructed by considering the log Bayes factors defined in \eqref{eq:mvBFuni} for the $\ell = 30$ out-of-sample observations of our dataset. Thus, positive log Bayes factor indicates that the predictions obtained with univariate SV models with jumps are more accurate than those of SV models without jumps. According to \cite{kass1995bayes} a value of the log Bayes factor greater than $5$ provides strong evidence in favour of the model with jumps. The increasing nature of the log Bayes factor reassures that as more data are collected and as more jumps are identified in stock returns, the evidence that the model with jumps outperforms the model without jumps increases. In the right panel of Figure \ref{fig:uni_mv_BF} we compare the predictive performance of our proposed SV with jumps model in which the jump intensities are modelled by using a dynamic factor model with SV models in which the jump intensities are independent over time and across stocks. The Figure compare the two models by presenting the logarithms of the approximate log Bayes factors in \eqref{eq:margmargratio2}.  Clearly, there is strong evidence that our proposed model outperforms the independent modelling of stock returns with a SV with jumps formulation. 

\begin{figure}[t]
\begin{center}
\includegraphics[scale=0.4]{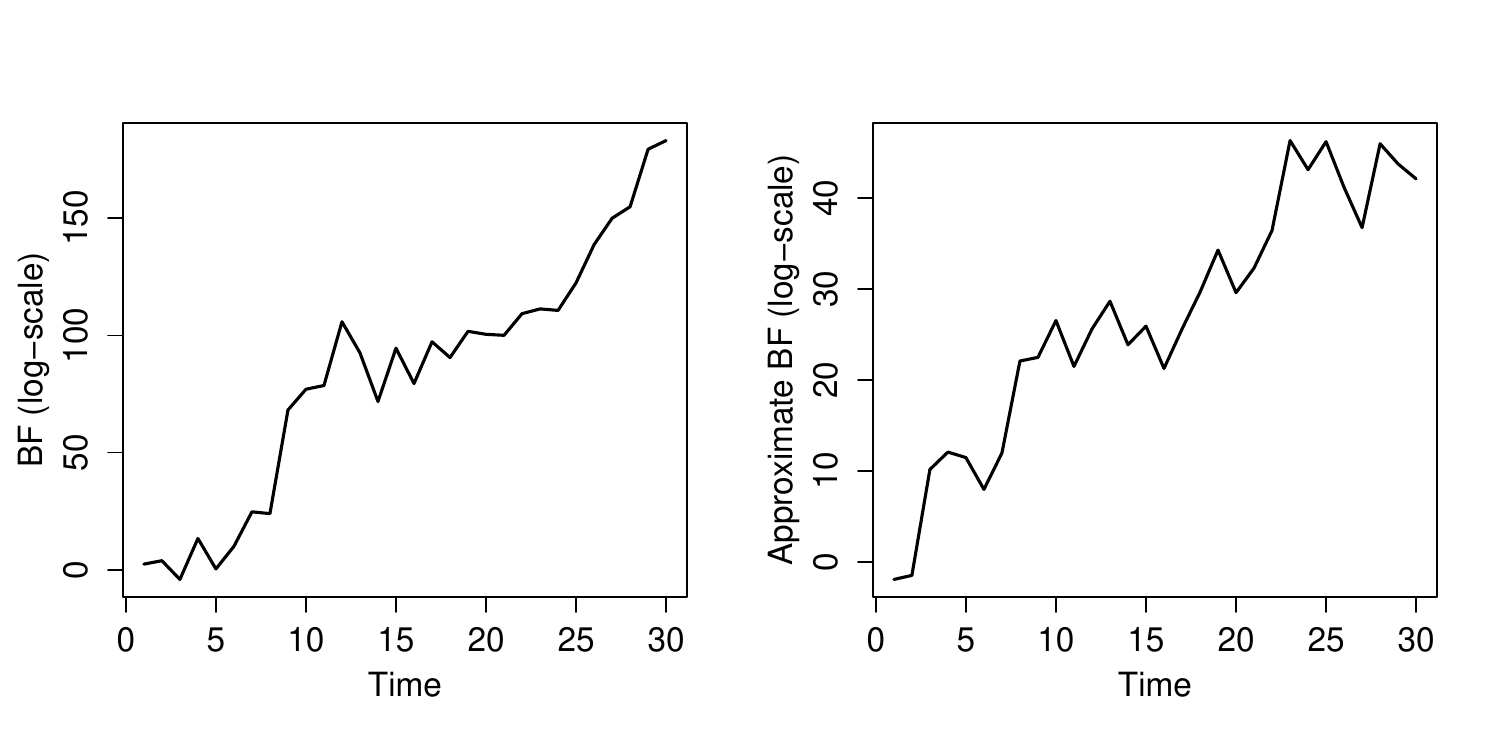}
\vspace*{-0.5cm}
\caption{Left: Log Bayes factors (BF) in favour of univariate SV models with jump intensities modelled independently across time and stocks against univariate SV without jumps models; see \eqref{eq:mvBFuni} for the corresponding formula. Right: Logarithm of approximate BF for SV models with jointly (numerator) and independently (denominator) modelled jumps intensities; see \eqref{eq:margmargratio2} for the corresponding formula. The x-axis in both plots refers to the $\ell=30$ out-of-sample time points.}
\label{fig:uni_mv_BF}
\end{center}
\end{figure}

Figure \ref{fig:CI_eval} compares the forecasts of the proposed model with those obtained from SV models without jumps and with independent jumps respectively by employing their probabilistic evaluation \citep{gneiting2007strictly}.  Joint modelling of the intensities leads to $95\%$ prediction credible intervals with the lowest interval scores at each one of the out-of-sample points for the vast majority of the stocks. The definition of the interval score in \eqref{eq:interval_score} rewards the forecaster that delivers narrow prediction intervals and when the observation is outside the predicted interval a penalty proportional to the significance level of the interval is incurred. It is thus indicated that the proposed SV model delivers the most accurate prediction intervals compared to those obtained from univariate SV models with and without jumps. The fact that Figure \ref{fig:CI_eval} displays the median instead of the mean of the $571$ interval scores emphasizes an important characteristic of our modelling approach. The efficient identification of the in-sample jumps has resulted in estimating lower volatilities for the stock returns; see Figures S$.11$ and S$.12$ in the supplementary material. This is depicted in the conducted out-of-sample exercise by more narrow prediction intervals for a given coverage probability. Moreover, since the prediction of future jumps is based on common across the stocks, autoregressive factors, we expect that jumps that occur suddenly in a small number of stocks at a given day are not easily predictable. On the other hand, the large volatility intervals delivered from SV models without jumps or models with less accurate jump identification can accidentally predict more efficiently these type of movements. The median is chosen to avoid taking into account such predictions. The superiority of the proposed approach is also emphasized by Figure \ref{fig:CI_eval} by the lower continuous ranked probability scores that correspond to the proposed SV model than those of the univariate SV models. This implies that the posterior predictive distribution of the proposed model are more concentrated than those obtained from univariate SV models with and without jumps while deliver predictions that are not easily distinguishable from the materialized observations. This is a desired property of probabilistic forecasts described as the main goal of probabilistic forecasting; see for example \cite{gneiting2014probabilistic}.
Finally, Figure S$.14$ in the supplementary material which presents root mean squared errors for the predictions obtained from the different models indicates that joint modelling of the jump intensities improves slightly the point forecasts. We conclude that the proposed modelling approach offers a more accurate identification of jumps and consequently a more precise estimation of the returns volatility compared to the one conducted by using SV models without or with independent jumps. As a result, the predictive performance of the proposed model outperforms the other models. This makes the suggested model a competitive alternative for practitioners who are interested in obtaining accurate predictions for the stocks of large portfolios.

\section{Discussion}
\label{sec:discuss}

We have developed a general modelling framework together with carefully designed MCMC algorithms to perform Bayesian inference for SV models with Poisson-driven jumps.  We have shown that for the data we applied our models there is evidence that, with respect to predictive Bayes factors,  (i) univariate SV models with jumps outperform univariate SV models without jumps and (ii) models that jointly model jump intensities with a dynamic factor model outperform SV models with jumps that are applied independently across stocks.  We feel that (ii) is an interesting result that adds considerable insight to the growing literature of modelling financial returns  with jumps, adding to the observation by \cite{ait2015modeling} that there is indeed predictability in jump intensities.

There are various issues that have not been addressed in this paper. The proposed MCMC algorithms can be easily extended in order to conduct inference for the more realistic \citep{harvey1996estimation} class of SV models which allow for correlation between the error terms $\epsilon_{it}$ and $\eta_{it}$ in \eqref{eq:obs} and \eqref{eq:state}, a phenomenon which is commonly known as leverage effect. In this case 
samples from their full conditional distributions can be obtained by using the auxiliary gradient-based sampler \citep{titsias2018auxiliary} to draw each correlation parameter jointly with $\phi_i$, $\sigma_{i\eta}^2$ and $H_i$.
Other modelling aspects include the relaxing of independence of $E_t$ in (\ref{eq:facdyn3}) or the assumption that $A$ is lower diagonal as in the models proposed in \cite{dellaportas2012cholesky}.  

Finally, a series of interesting financial questions can be addressed with our models by exploiting the fact that panel of stock returns carry additional, possibly important, information.  For example, in our dataset one can explore the effects of country and sector effects or  could investigate whether the joint tail risk dependence does or does not change before, during and after the financial crisis in Europe. We feel that our proposed models together with our algorithmic guidelines will serve as useful tools for such future research pursuits in financial literature.

\begin{figure}[H]
\begin{center}
\includegraphics[scale=0.4]{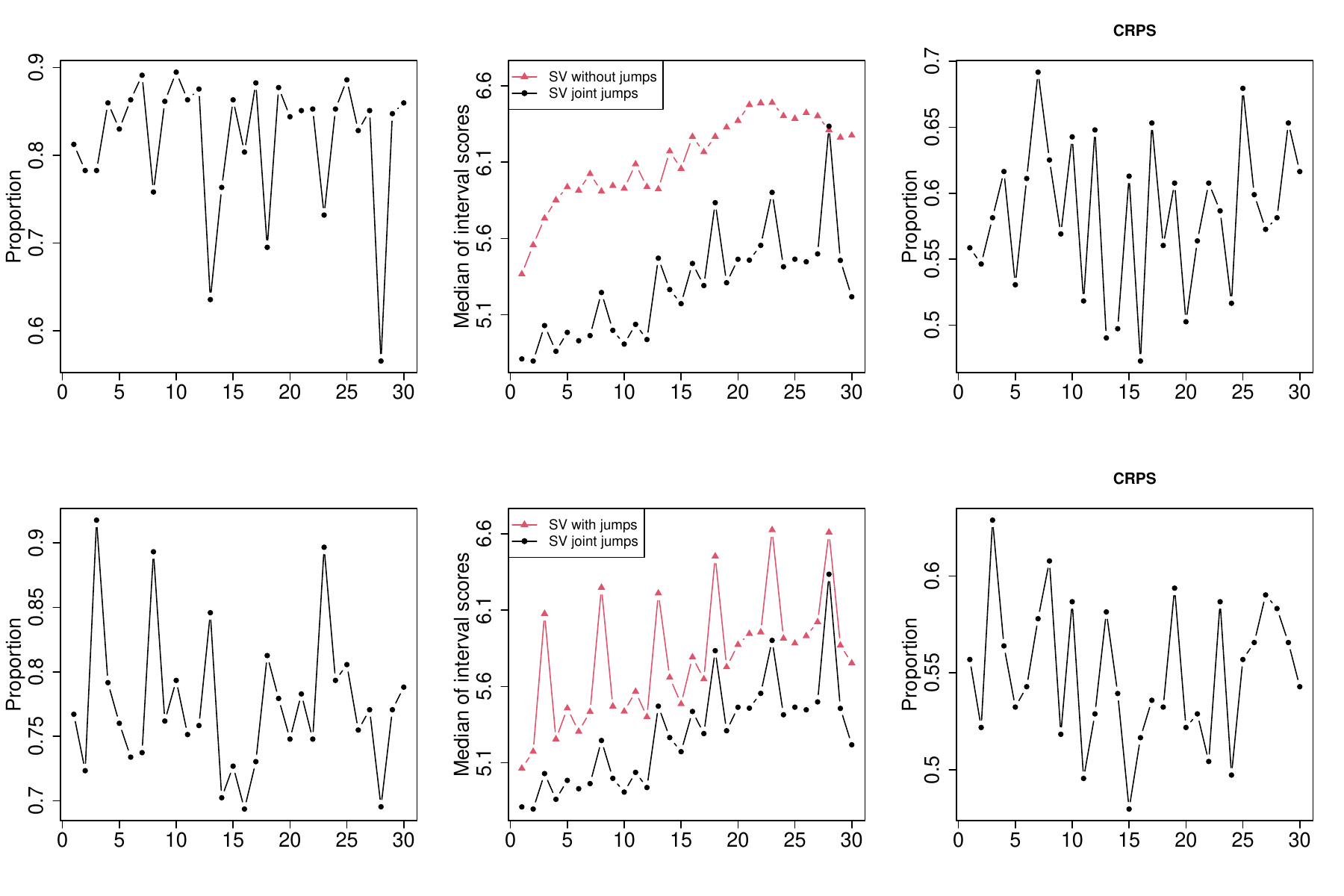}
\vspace*{-0.5cm}
       \caption{First column: proportion of stocks for which the interval scores for the $95\%$ prediction intervals of the SV model with jointly modelled jumps is lower than the corresponding score for the SV models without (top) and with independent (bottom) jumps. Second column: medians of interval scores across the $571$ stocks. Third column: proportion of stocks for which the continuous ranked probability score (CRPS) for the SV model with jointly modelled jumps is lower than the the corresponding score for the SV models without (top) and with independent (bottom) jumps. The x-axis in all plots refers to the $\ell=30$ out-of-sample time points.}
\label{fig:CI_eval}
\end{center}
\end{figure}

%% ** The bibliograhy **
\bibliographystyle{Chicago}
\bibliography{refs}% place <bib-data-file> 

\newpage

\begin{center}
{\LARGE Supplementary material}
\end{center}

\appendix

\section*{Empirical detection of jumps}

Let $R_i=(r_{i1},\ldots,r_{iT})$ be the vector with log-returns of the $i$th stock. The detection of jumps is based on the statistic 
\begin{equation}
\label{eq:mad}
\dfrac{| r_{it} -median (R_i)  | }{MAD}
\end{equation}
where
$
MAD = 1.48 \times median (| r_{it} -median (R_i)  | )
$
is the Mean Absolute Deviation from the median and it is a robust estimator for the scale parameter of a Gaussian distribution \citep{rousseeuw1993alternatives}. A log-return is considered a jump if for this log-return the statistic in \eqref{eq:mad} has value greater than $3$; see \cite{rousseeuw1993alternatives} for a detailed discussion about the use of the MAD estimator in outlier detection problems.

\section*{Posterior jump probabilities}

Figure $1$ in the main paper was constructed by considering a price movement as jump if the jump probability for the given stock at that particular day was estimated greater than $0.5$. Figure \ref{jump_probs} indicates that the majority of the posterior jump probabilities have been estimated to be close to zero. In particular, $98\%$ of them are lower than $0.5$ and from those $95\%$ is less than $0.2$.

\begin{figure}[H]
	\centerline{\includegraphics[scale=0.55]{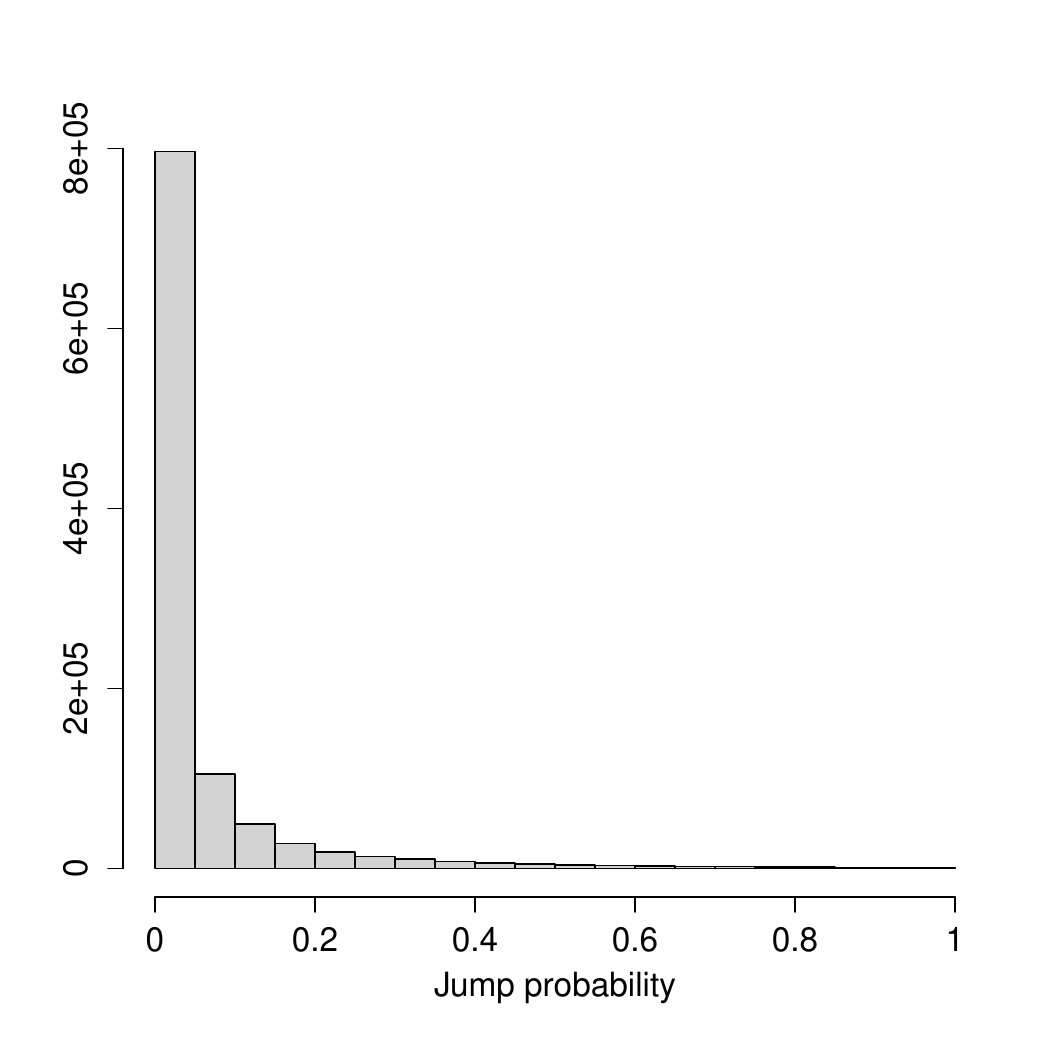}}
	\caption{Histogram of posterior probability of at least one jump for each stock and each day.}
	\label{jump_probs}
\end{figure}

\section*{Choosing hyperparameters for the factor model}
We have assumed that $b_i \sm \mathcal{N}(\mu_b,\sigma^2_b)$, $w_{ik} \sm \mathcal{N}(0,\sigma_w^2)$ and $\alpha_k \sm U(-1,1)$, $i=1,\ldots,p$ and $k=1,\ldots,K$. Our aim is to specify the hyperparameters $\mu_b$, $\sigma_b^2$ and $\sigma_w^2$ to be such that the resulting prior on $\lambda_{it}$ has mode, mean and variance close to the corresponding values of the $Gam(1,50)$ distribution which was used as prior in the case of modelling the jump intensities independently over time and across stocks. 

From equations $(3.5)$ and $(3.6)$ of the main paper we have that 
\begin{equation}\label{eq:new_intens_trans}
\quad y_{it}= \log \big( \lambda_{it}/(\lambda^*-\lambda_{it}) \big) \sim \mathcal{N}(\mu_y, \sigma^2_y),\,\,\, \lambda_{it} \in (0,\lambda^*),
\end{equation}
where we have set, see Section $3.3$, $\lambda^* = 0.15$. To specify $\mu_y$ and $\sigma_y^2$ we note that equations $(3.6)$--$(3.8)$ in the main paper imply 
$$
\E(y_{it}|\mu_b)=\mu_b\,\,\, \text{and}\,\,\,  \Var(y_{it}|A,\sigma^2_w,\sigma^2_b) = \sigma_{w}^2\sum \limits_{k=1}^K (1-\alpha_k^2)^{-1} + \sigma^2_b.
$$
Thus, by assuming that $\alpha_k = \E(\alpha_k) =0$ we have
\begin{equation}\label{eq:new_intens_moments}
\mu_y= \mu_b\,\,\, \text{and}\,\,\, \sigma^2_y = K\sigma^2_w + \sigma^2_b.
\end{equation}
From \eqref{eq:new_intens_trans} we have that the density of $\lambda_{it}$ is 
\begin{equation}\label{eq:new_density}
\dfrac{\lambda^*\exp \left\lbrace -\frac{1}{2\sigma_y^2}\Big( \log( \frac{\lambda_{it}}{\lambda^*-\lambda_{it}} ) -\mu_y  \Big)^2 \right\rbrace  }{\sigma_y\lambda_{it}(\lambda^*-\lambda_{it})\sqrt{2\pi}},\,\,\, \lambda_{it} \in (0,\lambda^*).
\end{equation}
Therefore, the location of its mode satisfies the equation 
\begin{equation}\label{eq:mode}
\log( \frac{\lambda_{it}}{\lambda^*-\lambda_{it}} ) = \frac{\lambda^*\mu_y+\sigma_y^2(2\lambda_{it}-\lambda^*)}{\lambda^*},  
\end{equation}
which can be solved, for given values of $\lambda^*$, $\mu_y$ and $\sigma^2_y$, with numerical methods. Here we use the R-package \texttt{rootSolve} \citep{rootSolve} to find all the roots of the equation in \eqref{eq:mode}.

Since our aim is to match $\mu_y$, $\sigma^2_y$ and the mode of the density in \eqref{eq:new_density} with the corresponding quantities implied from the $Gam(1,50)$ distribution we first note that if $\lambda_{it} \sim Gam(1,50)$, right truncated at $\lambda^*$, then for $\lambda^* =0.15$ we have that 
\begin{equation}\label{eq:uni_moments}
\E\big[ \log \big( \lambda_{it}/(\lambda^*-\lambda_{it}) \big) \big] \approx -2.4\,\, \text{and}\,\, \Var\big[ \log \big( \lambda_{it}/(\lambda^*-\lambda_{it}) \big) \big] \approx 2,
 \end{equation}
and that the mode of the $Gam(1,50)$ distribution is located at zero. By noting that in our applications we have used models with $K=2$ and $K=3$ latent factors and taking into account \eqref{eq:new_intens_moments} we set $\sigma_w^2 =0.5$ and $\sigma^2_b=1$. Then, from \eqref{eq:new_intens_moments} we have that $\sigma^2_y$ is close to the variance that we obtain for $y_{it}$ in the univariate case; see equation \eqref{eq:uni_moments}. To specify $\mu_b$ except from matching the means in \eqref{eq:new_intens_moments} and \eqref{eq:uni_moments} we are also trying to set the mode of the density in \eqref{eq:new_density} close to zero. Table \ref{tab:intens_priors_supp} displays the mean, the variance and the mode of \eqref{eq:new_density} for three different values of $\mu_b$. Note that by choosing $\mu_b =-2.4$ we have that the mean and the variance of the density in \eqref{eq:new_density} are quite close to the corresponding moments of the $Gam(1,50)$ distribution while by taking smaller values for $\mu_b$ the mode in \eqref{eq:new_density} becomes closer to zero.

\begin{table}[H]
\centering
\begin{tabular}{|c c c c|}
\hline
 & Mean & Variance & Mode\\
\hline
$\mu_b=-2.4$ & $0.021$ & $0.00054$ & $0.002$\\
$\mu_b=-5$ &  $0.003$  & $0.00004$ & $0.0001$\\
$\mu_b=-10$ &  $2 \times 10^{-5} $  & $4 \times 10^{-8}$ & $10^{-6}$\\
\hline
\end{tabular}
\caption{Mean, variance and mode of the density in \eqref{eq:new_density} for different values of $\mu_b$.}
\label{tab:intens_priors_supp}
\end{table}

\section*{Proof of Proposition 1}

\begin{proof}
A discrete distribution with pmf $p(n)$ is called log-concave when the inequality
$p(n)^2 \geq p(n-1)p(n+1)$
is satisfied for all $n$. 
Let $\sigma^2_{it} = \exp (h_{it})$, in the case of joint modelled jump intensities we have that $\lambda_{it} = \lambda^*/\big(1+\exp(-b_i-W'_iF_t)\big)$ and in the case of independent modelling $\lambda_{it}$ is a constant.
We denote with $p(n)$ the pmf $p(n_{it}=n|h_{it},\mu_{i\xi},\sigma^2_{i\xi},b_i,W_i,F_t,r_{it})$. By integrating out the jump sizes $\Xi_i$ from the likelihood we have that
\begin{align*}
p(n) & \propto p(r_{it}| n,h_{it},\mu_{i\xi},\sigma^2_{i\xi})p(n | \lambda_{it}\Delta_{it})\\
& =(2\pi(\sigma^2_{it}+n\sigma^2_{i\xi}))^{-1/2}\exp\left( -\frac{1}{2}\frac{(r_{it}-n\mu_{i\xi})^2}{\sigma^2_{it}+n\sigma^2_{i\xi}}-\lambda_{it} \Delta_{it} \right)\frac{(\lambda_{it} \Delta_{it})^{n}}{n!},\,\,\,  n=0,1,2,...
\end{align*}
To prove log-concavity for $n=1,2,...$, we need to show that for $n=2,3,...$
$$
2\log(p(n))-\log(p(n-1))-\log(p(n+1)) \geq 0
$$
or equivalently that,
\begin{multline*}
\log\left((\sigma_{it}^2+n\sigma^2_{i\xi})^2-\sigma^4_{i\xi}\right)-\log(\sigma^2_{it}+n\sigma^2_{i\xi})^2+\frac{2\left(\sigma^2_{i\xi}(r_{it}-n\mu_{i\xi})^2+\mu_{i\xi}(\sigma^2_{it}+n\sigma^2_{i\xi})\right)^2}{(\sigma^2_{it}+n\sigma^2_{i\xi})\left((\sigma^2_{it}+n\sigma^2_{i\xi})^2-\sigma^4_{i\xi}\right)} \\+\log(n+1)^2-\log(n)^2\geq 0.
\end{multline*}
or that
$$
\frac{((\sigma_{it}^2+n\sigma^2_{i\xi})^2-\sigma^4_{i\xi})(n+1)^2}{n^2(\sigma^2_{it}+n\sigma^2_{i\xi})^2}e^L \geq 1
$$
where $L=\frac{2\left(\sigma^2_{i\xi}(r_{it}-n\mu_{i\xi})^2+\mu_{i\xi}(\sigma^2_{it}+n\sigma^2_{i\xi})\right)^2}{(\sigma^2_{it}+n\sigma^2_{i\xi})\left((\sigma^2_{it}+n\sigma^2_{i\xi})^2-\sigma^4_{i\xi}\right)}$.
The last inequality holds; $L>0$ and it is readily shown that the numerator is greater than the denominator
for all $ n=2,3...$.
\end{proof}

\section*{Proof of Proposition 2}

\begin{proof}
Let $\one{n}'$ be an $n$-dimensional column vector with ones and $I_{n}$ the $n \times n $ identity matrix. 
For each $i=1,\ldots,p$ and $t=1,\ldots,T$ we have that $\xi_{it}^{\kappa} \sm N(\mu_{i\xi},\sigma^2_{i\xi})$, $\kappa=1,\ldots,n_{it}$.
Let $\sigma^2_{it}=\exp ( h_{it} )$ then from the canonical form of the Gaussian distribution we have that
\begin{align*}\label{eq:xiprod}
p(\xi_{it}| n_{it}, r_{it},h_{it},\mu_{i\xi},\sigma^2_{i\xi}) & \propto \mathcal{N}\left(r_{it}\bigg|\sum_{\kappa=1}^{n_{it}} \xi_{it}^{\kappa},\sigma_{it}^2\right)
\prod_{\kappa=1}^{n_{it}}\mathcal{N}\big(\xi^{\kappa}_{it}|\mu_{i\xi}, \sigma^2_{i\xi}\big) \\
& \propto \mathcal{N}_{n_{it}}\big(\xi_{it}|Q_{i\xi}^{-1}c, Q_{i\xi}^{-1} \big)
\end{align*}
where $c= (\frac{\mu_{i\xi}}{\sigma^2_{i\xi}}+\frac{r_{it}}{\sigma_{it}^2})\one{n_{it}}$ and
$Q_{i\xi} = Q_1 + Q_2$, with
$Q_1 =\sigma^{-2}_{i\xi}I_{n_{it}} $ and $Q_2 = \sigma_{it}^{-2}\one{n_{it}}\one{n_{it}}'$.
From \cite{miller1981inverse} we have that
\begin{equation*}
\label{eq:miller}
Q^{-1}_{i\xi}=(Q_1+Q_2)^{-1} = Q_1^{-1} - \frac{1}{1+\nu}Q_1^{-1}Q_2Q_1^{-1}
\end{equation*}
where $\nu = tr(Q_2Q_1^{-1})$ or
$(1+\nu)^{-1}= \dfrac{\sigma^2_{it}}{\sigma^2_{it}+n_{it}\sigma^2_{i\xi}}$. Therefore
$$
Q_{i\xi}^{-1}=\sigma^2_{i\xi}I_{n_{it}}- \frac{\sigma^4_{i\xi}}{\sigma_{it}^{2}+n_{it}\sigma^2_{i\xi}}\one{n_{it}}\one{n_{it}}'
\,\,\,
\text{and}
\,\,\,
Q_{i\xi}^{-1}c = (\dfrac{\mu_{i\xi}}{\sigma^2_{i\xi}}+\dfrac{r_{it}}{\sigma^2_{it}})\dfrac{\sigma^2_{i\xi}\sigma^2_{it}}{\sigma^2_{it}+n_{it}\sigma^2_{i\xi}}\one{n_{it}}.
$$
\end{proof}

\section*{Auxiliary gradient-based sampler}

Here we describe how we apply the auxiliary gradient-based sampler \citep{titsias2018auxiliary} to draw, at each MCMC iteration, the whole latent paths of the log-volatilities $H_i$ jointly with the parameters $\phi_i$ and $\sigma_{i\eta}^2$ and the whole path of the latent factors $F$ jointly with the parameters $A$.

\subsection*{Sampling latent log-volatilities and parameters}

To sample, for a fixed $i$, $H_i$ jointly with $\phi_i$ and $\sigma_{i\eta}^2$, we need to draw samples from the distribution with density 
\begin{equation}\label{eq:logvoltarget}
p(H_i,\phi_i,\sigma_{i\eta}^2 |\mu_i,\mu_{i\xi},\sigma_{i\xi}^2,N_i,R_i ) \propto \pi(H_i|\phi_i,\sigma_{i\eta}^2,\mu_i)\pi(\phi_i)\pi(\sigma_{i\eta}^2)\prod_{t=1}^T \mathcal{N}(r_{it}|n_{it}\mu_{i\xi},n_{it}\sigma^2_{i\xi}+e^{h_{it}})
\end{equation}
where the the jump sizes $\Xi_i$ have been integrated out, $\pi(H_i|\phi_i,\sigma_{i\eta}^2,\mu_i)$ denotes the prior of the latent autoregressive log-volatility process defined in $(3.2)$ and $\pi(\phi_i)$ and $\pi(\sigma_{i\eta}^2)$ denote the densities of the priors for the parameters $\phi_i$ and $\sigma^2_{i\eta}$ as described in Section $3.2$. To draw samples from \eqref{eq:logvoltarget} we employ Algorithm $3$ presented in Section $4.2$ of the main paper where $\mathrm{x}$ is the latent volatility vector $H_i$ and $\theta$ is consisted of the parameters $\phi_i$ and $\sigma^2_{i\eta}$; it is also easy to see that $C_{\theta}^{-1}$ has a tridiagonal form and both sampling from the proposal and calculating the acceptance ratio in $(3.6)$ can be conducted, efficiently, based on the Cholesky decomposition of tridiagonal matrices. Finally, we note that although Algorithm $3$ corresponds to a latent Gaussian model in which the Gaussian field has zero mean, each log-volatility vector $H_i$ has mean $\mu_i$; we take this into account and we present below how the steps of Algorithm $3$ of the main paper are adjusted in order to draw the desires samples $(H_i,\phi_i,\sigma^2_{i\eta})$. In particular, we iterate through the following steps:

\begin{enumerate}
\item[(i)] Sample $Z_i \sim \mathcal{N}_{T+1}\big(Z_i|H_i+(\delta /2)\nabla g(H_i|R_i,N_i,\mu_{i\xi}, \sigma^2_{i\xi}), (\delta/2)I_{T+1}\big)$
\item[(ii)] Sample from $p(H_i,\phi_i,\sigma_{i\eta}^2|Z_i,\mu_i,\mu_{i\xi},\sigma^2_{i\xi},N_i,R_i)$ by using a Metropolis-Hastings 
in which we transform $\hat{\phi}_i = \log\big( (\phi_i+1)/(1-\phi_i)  \big)$ and $\hat{\sigma}_{i\eta}^2 = 2\log \sigma_{i\eta}$ and
we propose new values $(H_i^*, \hat{\phi}_i^*,\hat{\sigma}_{i\eta}^{2*})$ 
by first drawing $(\hat{\phi}_i^*,\hat{\sigma}_{i\eta}^{2*})$ from the bivariate Gaussian density centred at $(\hat{\phi}_i,\hat{\sigma}_{i\eta}^{2})$ with covariance matrix $\kappa_iI_2$ and then draw $H_i^*$ from
\begin{equation}\label{eq:tp_prop}
q\big( H_i^*\big|\phi_i^*,\sigma_{i\eta}^{2*},Z_i\big) = \mathcal{N}\bigg(H_i^*\big|\frac{2}{\delta}Q( Z_i + \frac{\delta}{2}M^*_i ),Q\bigg) =\dfrac{\mathcal{N}(H_i^* \big|\mu_i,C_{\theta}^*)\mathcal{N}(Z_i \big| H_i^*,(\delta /2)I_{T+1} )}{\mathcal{Z}(Z_i,\mu_i,\phi_i^*,\sigma_{i\eta}^{2*})} ,
\end{equation}
with $Q = ( C_{\theta}^{*-1} + \frac{2}{\delta}I_{T+1}  )^{-1}$, $M^*_i = \dfrac{\mu_i(1-\phi^*_i)}{\sigma_{i\eta}^{2*}} \big(1,\,\,1-\phi^*_i,\ldots,1-\phi^*_i,\,\,1 \big)$ and 
$$
\mathcal{Z}(Z_i,\mu_i,\phi_i^*,\sigma_{i\eta}^{2*}) = \mathcal{N}_{T+1}(Z_i|\mu_i, C_{\theta}^* + (\delta/2)I_{T+1}  ),
$$
where $\phi^*_i = (e^{\hat{\phi}_i^*}-1)/(e^{\hat{\phi}_i^*}+1)$ and $\sigma_{i\eta}^{2*} = e^{\hat{\sigma}_{i\eta}^{2*}}$. The values $(H_i^*, \hat{\phi}_i^*,\hat{\sigma}_{i\eta}^{2*})$ are accepted or rejected according to the ratio
\begin{multline}
\exp \big\{ g\big(H_i^*|N_i,\mu_{i\xi},\sigma^2_{i\xi}\big)-g\big(H_i|R_i,N_i,\mu_{i\xi},\sigma^2_{i\xi}\big) + \nu\big(Z_i,H_i^*\big) - \nu\big(Z_i,H_i\big)  \big\} \\ \times
\dfrac{\mathcal{Z}(Z_i,\mu_i,\phi_i^*,\sigma_{i\eta}^{2*}\big)\pi \big(\phi^*_i\big)\pi \big(\sigma_{i\eta}^{2*}\big)J(\hat{\phi}^*_i,\hat{\sigma}_{i\eta}^{2*})}{\mathcal{Z}\big(Z_i,\mu_i,\phi_i,\sigma_{i\eta}^{2}\big)\pi\big(\phi_i\big)\pi \big(\sigma_{i\eta}^2\big)J(\hat{\phi}_i,\hat{\sigma}_{i\eta}^{2})}
\end{multline}
where $\nu\big(Z_i,H_i \big) = \big( Z_i-H_i - \frac{\delta}{4} \nabla g\big(H_i\big|R_i,N_i,\mu_{i\xi},\sigma^2_{i\xi}\big)  \big)'\nabla g\big(H_i\big|N_i,\mu_{i\xi},\sigma^2_{i\xi}\big)$ and $J(\cdot,\cdot)$ denotes the Jacobian of the inverse of the transformation $(\phi_i,\sigma_{i\eta}^2) \rightarrow (\hat{\phi}_i,\hat{\sigma}_{i\eta}^{2})$.
\end{enumerate}
After drawing $(H_i,\phi_i,\sigma_{i\eta}^2)$ we sample $\mu_i$ from its full conditional 
$\mathcal{N}( m_i, s_i^2 )$, where 
$$m_i =\frac{s_i^2\big[(1-\phi_i^2)h_{i0}+(1-\phi_i)\sum_{t=1}^T(h_{it}-\phi_ih_{i,t-1})\big]}{\sigma_{i\eta}^2}$$ and
$$s_i^2 = \frac{\sigma_{i\eta}^2}{\sigma_{i\eta}^2/10+ (1-\phi_i^2) + T(1-\phi_i)^2}.
$$

\subsubsection*{Ancillarity-sufficiency interweaving strategy for log-volatility parameters}

In order to implement the ancillarity-sufficiency interweaving strategy \citep{yu2011center} for the parameters $\mu_i$, $\phi_i$ and $\sigma_{i\eta}^2$ we first consider the model defined, in the main paper, by equations $(3.1)$ and $(3.2)$ with the non-centred parametrization of the log-volatilities. This is specified, for each $i=1,\ldots,p$ and $t=1,\ldots,T$, by 
\begin{equation*}
r_{it} = \exp(\frac{\mu_i+\sigma_{i\eta}\tilde{h}_{it}}{2})\epsilon_{it} + \sum_{\kappa=1}^{n_{it}} \xi_{it}^\kappa, \,\,\,\, \epsilon_{it} \sim N(0,1)
\end{equation*}
\begin{equation*}
\tilde{h}_{it} = \phi_i \tilde{h}_{i,t-1} + \eta_{it}, \,\,\,\, \eta_{it} \sim N(0,1), \,\,\,\, \tilde{h}_{i0} \sim N(0,1/(1-\phi_i^2)),
\end{equation*}
where $\epsilon_{it}$ and $\eta_{it}$ are independent random variables and
$\tilde{h}_{it} = (h_{it}-\mu_i)/\sigma_{i\eta}$. Then, we utilize random walk Metropolis-Hastings steps to draw $\mu_i$ and $\sigma_{i\eta}^2$ from their full conditionals obtained from
$$
p(\mu_i,\sigma_{i\eta}^2|\mu_{i\xi},\sigma^2_{i\xi},\tilde{H}_i,R_i ) \propto \pi(\mu_i)\pi(\sigma_{i\eta}^2)\prod_{t=1}^T \mathcal{N}\big(r_{it}\big|n_{it}\mu_{i\xi},n_{it}\sigma^2_{i\xi}+\exp\{\mu_i+\sigma_{i\eta}\tilde{h}_{it}\}\big),
$$
and we sample $\phi_i$ from 
$$
p(\phi_i|\tilde{H_i}) \propto \pi(\phi_i)\mathcal{N}\big(\tilde{h}_{i0}|0,1/(1-\phi_i^2)\big)\prod_{t=2}^T\mathcal{N}\big(\tilde{h}_{it}| \phi_i\tilde{h}_{i,t-1},1\big),
$$
by using a Metropolis-Hastings step with proposal distribution
$$
\mathcal{N}\bigg( \dfrac{\sum_{t=1}^T\tilde{h}_{it}\tilde{h}_{i,t-1}}{\sum_{t=1}^T\tilde{h}_{i,t-1}},\dfrac{1}{\sum_{t=1}^T\tilde{h}_{i,t-1}} \bigg).
$$

\subsubsection*{Simulation experiments for samplers with and without interweaving}

Figures \ref{098} and \ref{090} evaluate the efficiency of the auxiliary gradient-based sampler in drawing the parameters of the latent log-volatility path of univariate SV models. We compare three sampling schemes; the auxiliary gradient-based sampler with interweaving, without interweaving, and a scheme in which instead of interweaving we iterate twice the auxiliary sampler. To evaluate the different sampling schemes we compare the effective sample size (ESS) of the samples drawn from the posterior distributions of interest. The ESS of $S$ samples drawn by using an MCMC algorithm can be estimated as $v^2S/\nu_0$ where $v^2$ is the sample variance of the posterior samples and $\nu_0$ is an estimation of the spectral density of the Markov chain at zero. We used the R-package \texttt{coda} \citep{rcoda} to estimate the ESS of the posterior samples obtained from the samplers under comparison. 

\begin{figure}[H]
	\centerline{\includegraphics[scale=0.45]{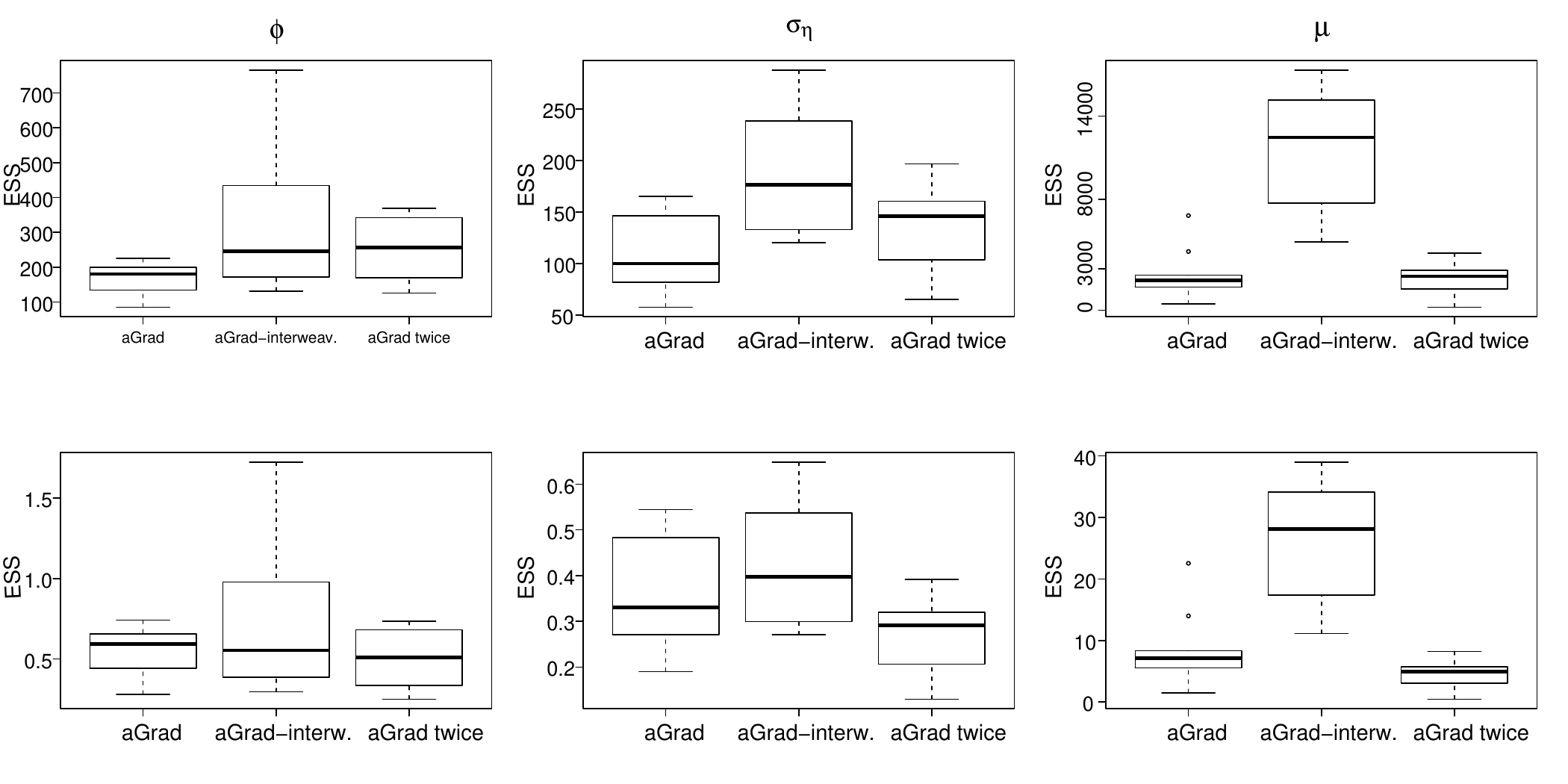}}
	\caption{Top: the effective sample size (ESS) of the posterior samples of the parameters $\mu$, $\sigma_{\eta}$ and $\phi$ obtained by applying the auxiliary gradient-based (aGrad) sampler with and without interweaving in $10$ replicates of a dataset consisted of $1,500$ log-returns simulated from the univariate SV model with $\mu=-0.85$, $\sigma_{\eta}=0.15$ and $\phi=0.98$. Bottom: the corresponding ESS divided by the number of seconds required for its sampling scheme.}
	\label{098}
\end{figure}

\begin{figure}[H]
	\centerline{\includegraphics[scale=0.45]{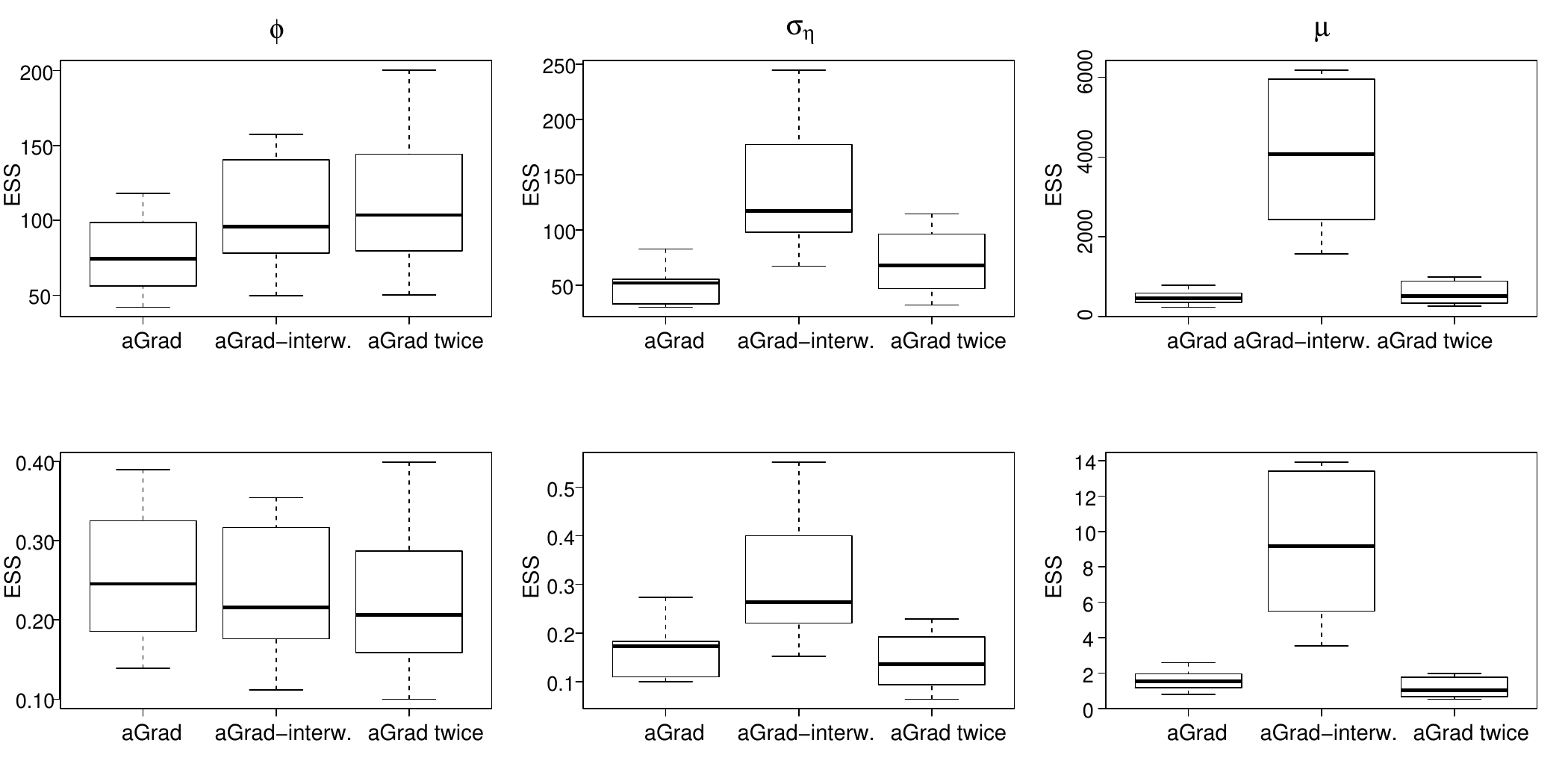}}
	\caption{Top: the effective sample size (ESS) of the posterior samples of the parameters $\mu$, $\sigma_{\eta}$ and $\phi$ obtained by applying the auxiliary gradient-based (aGrad) sampler with and without interweaving in $10$ replicates of a dataset consisted of $1,500$ log-returns simulated from the univariate SV model with $\mu=-0.85$, $\sigma_{\eta}=0.15$ and $\phi=0.90$. Bottom: the corresponding ESS divided by the number of seconds required for its sampling scheme.}
	\label{090}
\end{figure} 

The comparisons of the three sampling schemes are based on $10$ replicates of two datasets consisted of $T=1,500$ log-returns. These were simulated from the univariate SV model without jumps which is defined by $(3.1)$ and $(3.2)$ of the main paper, with $p=1$, and by omitting the jump component in $(3.1)$. For the remaining of this subsection we omit the subscript $i$ from our notation; e.g. $\mu \equiv \mu_i$ and $H \equiv H_i$. We set $\mu=-0.85$, $\sigma_{\eta}=0.15$ to generate both datasets and $\phi=0.98$ in the first and $\phi=0.90$ in the second. In each replicate of the datasets we applied the sampling schemes under comparison in order to draw samples for the latent log-volatility path $H$ and its parameters $\mu$, $\sigma_{\eta}$ and $\phi$.  We update $H$ jointly with $\phi$ and $\sigma^2_{\eta}$ by using the auxiliary gradient-based sampler to draw from \eqref{eq:logvoltarget} (with $n_t=0$) and then we draw $\mu$ from its full conditional. In the case of combining the sampler with the ancillarity-sufficiency interweaving strategy we follow the procedure described in the previous section in order to update the parameters $\mu$, $\sigma_{\eta}$ and $\phi$ under the non-centered parametrization of the log-volatility process. We ran all the MCMC algorithms for $30,000$ iterations and we discarded the first $10,000$ as burn-in period obtaining, thus, $20,000$ posterior samples for the log-volatility path and its parameters for each replicate of the two simulated datasets. In the Figures \ref{098} and \ref{090} we present, for the datasets with $\phi=0.98$ and $\phi=0.90$ respectively, the boxplots with the ESS of the posterior samples of the parameters $\mu$, $\sigma_{\eta}$ and $\phi$ obtained from the $10$ replicates. The two figures indicate that by combining the auxiliary gradient-based sampler with the ancillarity-sufficiency interweaving strategy the ESS of $\sigma_{\eta}$ and $\mu$ 
is clearly improved, while the three schemes have almost the same efficiency in drawing $\phi$.

\subsection*{Sampling the latent factors and their parameters}

To draw the latent factors $F$ jointly with the parameters $A$ we utilize the auxiliary gradient-based sampler summarized by Algorithm $3$ in the main paper and we target the distribution with density
\begin{equation*}\label{eq:facpost}
p(F,A|N,B,W) \propto \pi(A)\mathcal{N}_{K(T+1)}(F|0,C_A)\prod_{t=1}^T\prod_{i=1}^n e^{-\lambda_{it}}\lambda_{it}^{n_{it}},
\end{equation*}
where $\pi(A)$ denotes the prior density of $A$ and $C_A$ is the covariance matrix (which depends on the values of the parameters $A$) of the autoregressive process defined by $(3.7)$ and $(3.8)$.
Since the latent factor model defined by equations $(3.6)$--$(3.8)$ corresponds to a latent Gaussian model with zero mean Gaussian field, the application of Algorithm $3$ of the main paper is straightforward. The latent path $\mathrm{x}$ in Algorithm $3$ corresponds to the path of the latent factors $F$ and the parameters $\theta$ are the parameters $\alpha_1,\ldots,\alpha_K$ of the matrix $A$. We note that the parameters $\alpha_k$, $k=1,\ldots,K$, are defined on the interval $(-1,1)$. Therefore, we apply the transformation $\hat{\alpha}_k = \log \big( (\alpha_k+1)/(1-\alpha_k) \big)$ in order to obtain parameters in the real line. In the acceptance ratio in $(4.6)$ of the main paper we need thus to account for the Jacobian of the inverse transformation.

\subsubsection*{Using the interweaving strategy to improve sampling of $A$}
We combine the auxiliary gradient-based sampler (Algorithm $3$ in the main paper) with the ancillarity-sufficiency interweaving strategy in order to reduce the autocorrelation of the parameters in $A$. More precisely, after sampling jointly the paths $F$ of the latent factors and the parameters $A$ we set $\Gamma_0 = DF_0$, where $D$ is a $K \times K$ diagonal matrix with elements $(1-\alpha_k^2)^{1/2}$, $k=1,\ldots,K$, and $\Gamma_t = F_t - AF_{t-1}$, for $t=1,\ldots,T$. Then we use a random walk Metropolis-Hastings step that targets the density
\begin{equation*}\label{eq:facpost2}
p(A|\Gamma,N,B,W) \propto \pi(A)\prod_{t=1}^T\prod_{i=1}^n e^{-\lambda_{it}}\lambda_{it}^{n_{it}},
\end{equation*}
with  $\lambda_{it} = \lambda^*/(1+\exp(-b_i-W'_iF_t))$. We note that $F_t$ and $\Gamma_t$ are related by $F_0 = D^{-1}\Gamma_0$ and $F_t = AF_{t-1} + \Gamma_t$, $t \geq 1$.

Here, we compare the efficiency of three different MCMC schemes for the update of the parameters $A$. These are the auxiliary gradient-based sampler with interweaving, without interweaving, and a scheme in which instead of interweaving we iterate twice the auxiliary sampler. To simplify our experiment we omit equations $(3.1)$ and $(3.2)$ of the main paper and we simulate observations from the model defined by equations $(3.3)$ and $(3.5)$--$(3.8)$. This a Cox model with intensities driven by latent dynamic factors and
Bayesian inference for this model is conducted by switching off certain steps (see Section $4$) of Algorithm $1$. To compare the three sampling schemes we simulated $p=100$ time series consisted of $T = 1,500$ observations $n_{it}$, $t=1,\ldots,T$. We used $K=2$ autoregressive latent factors for the jump intensities, we set $\alpha_1 =0.8$ and $\alpha_2 =0.4$ and we generated $w_{ik} \sm N(0,1)$. We ran the MCMC algorithms, that correspond to the three different schemes for the update of $F$ and $A$, for $30,000$ iterations and we discarded the first $10,000$ as burn-in period. Figure \ref{acfA} presents the estimated autocorrelation functions of the parameters in $A$ for each one of the three sampling schemes.

\begin{figure}[H]
	\centerline{\includegraphics[scale=0.65]{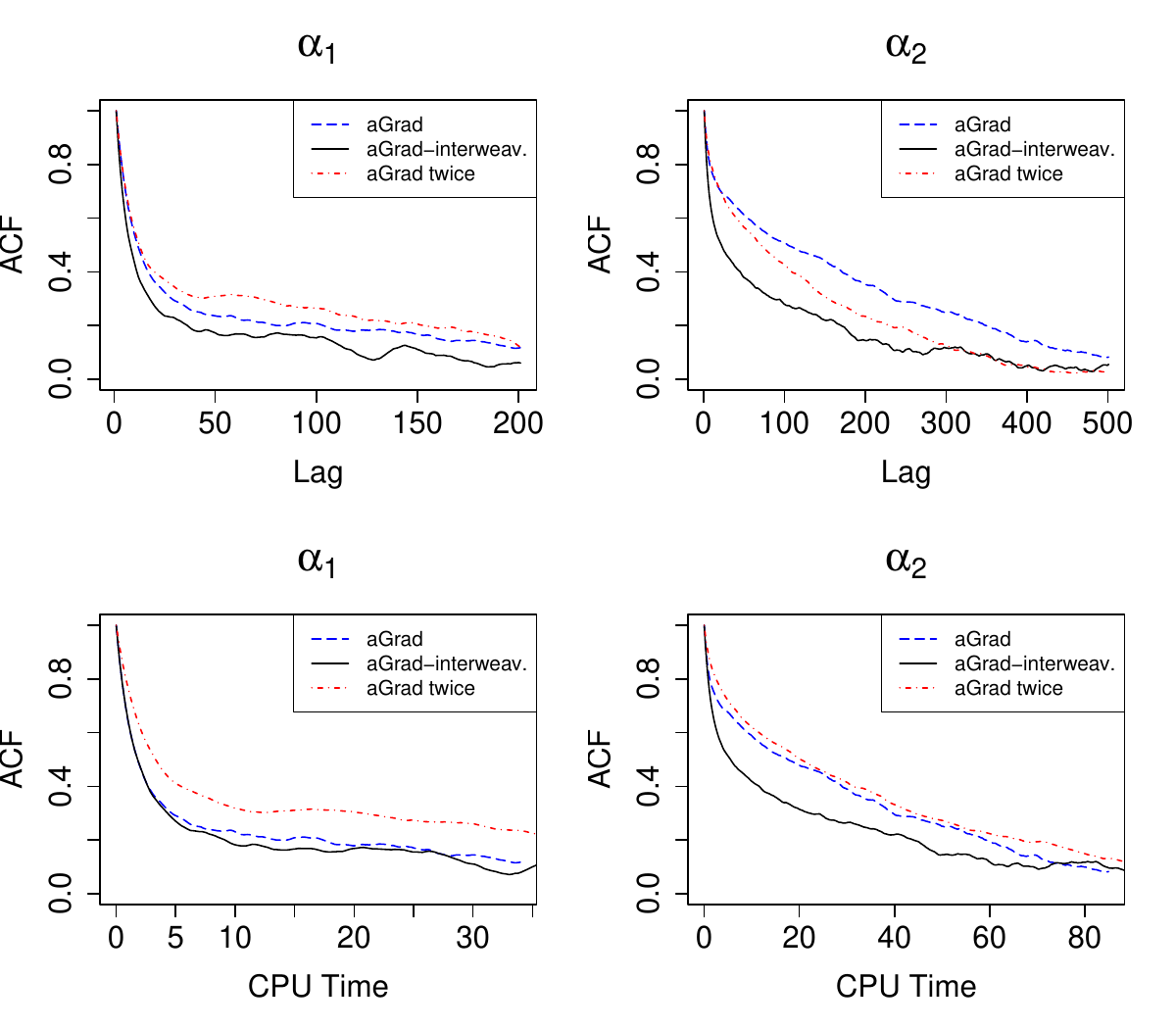}}
	\caption{Estimated autocorrelation functions (ACF) of the parameters $\alpha_1$ and $\alpha_2$ of the latent autoregressive factors used for the intensities of the simulated time series from the model defined by equations $(3.3)$ and $(3.5)$--$(3.8)$ of the main paper. The three sampling schemes under comparison are based on the auxiliary gradient-based (aGrad) sampler. Top: ACF against time-lag, bottom: ACF against CPU time.}
	\label{acfA}
\end{figure}

\subsection*{Sampling $\mu_{i\xi}$ and $\sigma^2_{i\xi}$}

The full conditional distribution of $\mu_{i\xi}$ is 
\begin{equation*}
N\left( \frac{5\textrm{range}_i^2\sum_{t=1}^T \sum_{\kappa=1}^{n_{it}} \xi_{it}^{\kappa}}{5\textrm{range}_i^2\sum_{t=1}^{T}n_{it}+ \sigma^2_{i\xi}}, \frac{5\textrm{range}_i^2\sigma^2_{i\xi}}{5\textrm{range}_i^2\sum_{t=1}^{T}n_{it}+ \sigma^2_{i\xi}}   \right),\,\,\, i=1,\ldots,p
\end{equation*}
and the full conditional distribution of $\sigma^2_{i\xi}$ is
\begin{equation*}
IGam \left( 3+ 0.5\sum_{t=1}^Tn_{it}, \textrm{range}_i^2/18 + 0.5\sum_{t=1}^T\sum_{\kappa=1}^{n_{it}} (\xi_{it}^{\kappa}-\mu_{i\xi})^2 \right),\,\,\, i=1,\ldots,p,
\end{equation*}

\section*{Particle filter and annealed importance sampling}
Here we present the pseudocode for the implementation of the particle filter and annealed importance sampling algorithms that we constructed in order to estimate predictive Bayes factors.
A superscript $s$ in Algorithm \ref{alg:pf} means that the operation is performed for all $s = 1,...,S$.
The infinite sum in \eqref{eq:uniwts} is truncated at $10$ without affecting the performance of the algorithm, see also \cite{johannes2009optimal} for a discussion. The step in the $9$th line of Algorithm \ref{alg:pf}  is conducted by using multinomial resampling, see \cite{doucet2009tutorial} for a detailed presentation of resampling methods. The \texttt{for} loop that begins in the $2$nd line of Algorithm \ref{alg:ais} is performed in parallel. In the $11$th line of Algorithm \ref{alg:ais} we use a few iterations of Algorithm $1$ initialized with the samples obtained from the $(j-1)$th step.

\begin{algorithm}[H]
\caption{Particle filter algorithm that targets $(5.3)$ for $t =T+1,\ldots, T+ \ell$.}\label{alg:pf}
\begin{algorithmic}[1]
\State \textbf{Set} the number of desired samples $S$ and give as \textbf{inputs}; in sample and out of sample observations $r_{i,1:T} $ and $r_{i,T+1:T+\ell}$ from the $i$th stock and $S$ samples from $p(h_{i,1:T}|r_{i,1:T})$.
\For {\texttt{$t = T+1,\ldots,T+\ell$ }}
\If{$t=T+1$}
\State \textbf{Sample}  $h_{it}^s \sim N( \mu_i + \phi_i (h^s_{i,t-1}-\mu_i), \sigma_{i \eta}^2    )$
\State \textbf{Set} 
\begin{equation}
\label{eq:uniwts}
\omega_{t}^s = 
\small
p(r_{it}| h_{it}) = \left\{
\begin{array}{ll}
      \sum_{n_{it}=0}^{\infty} \mathcal{N}(r_{it}|n_{it}\mu_{i\xi},e^{h_{it}}+n_{it}\sigma_{i\xi}^2)p(n_{it}), & \text{if SV with jumps} \\
      \mathcal{N}(r_{it}| 0 , e^{h_{it}} ), &\hspace*{-0.5cm} \text{if SV without jumps}\\
\end{array} 
\right\}
\end{equation}
\normalsize
\State \textbf{Set} $\hat{p}(r_{it}|r_{i,1:t-1}) = \sum_{s=1}^S \omega^s_{t} /S$
\Else
\State  \textbf{Compute} $ESS_{t-1} = \dfrac{ (\sum_{d=1}^s \omega_{t-1}^s)^2  }{ \sum_{s=1}^S (\omega_{t-1}^s)^2 }.$
\State \textbf{If $ESS_{t-1} < S/2$} 
obtain ancestor variables $o_{t-1}^{s}$ and set $\omega^s_{t-1} = 1$ \textbf{else } set  $o_{t-1}^{s} =s$.
\State \textbf{Sample} $h_{it}^s \sim N( \mu _i+ \phi_i (h^{o_{t-1}^{s}}_{i,t-1}-\mu_i), \sigma_{i \eta}^2    )$ and \textbf{set} $\omega_{t}^s = \omega^s_{t-1} p(r_{it}|h^s_{it})$
\State \textbf{If} $ESS_{t-1}<S/2$ set $\hat{p}(r_{it}|r_{i,1:t-1}) = \sum_{s=1}^S \omega^s_{t} /S$ \textbf{else} set  $\hat{p}(r_{it}|r_{i,1:t-1}) = \sum_{s=1}^S \omega^s_{t} / \sum_{s=1}^S \omega^s_{t-1}$

 \EndIf      
\EndFor \\
\Return  $\hat{p}(r_{i,T+1}| r_{i,1:T}),\ldots,\hat{p}(r_{i,T+\ell}|r_{i,1:T+\ell-1})$
\end{algorithmic}
\end{algorithm}

\begin{algorithm}[H]
\caption{Annealed importance sampling algorithm that targets $p(H_{0:T+\ell},F_{0:T+\ell}|R_{1:T+\ell})$.}\label{alg:ais}
\begin{algorithmic}[1]
\State \textbf{Set} the number of desired samples $S$ and give as \textbf{inputs}; $S$ samples from $p(H_{0:T},F_{0:T}|R_{1:T})$ and the out of sample observations $R_{T+1:T+\ell}$.
 \For{\texttt{$s = 1,\ldots,S$ }}
    \For{\texttt{$j = 1,\ldots,\ell$ }}
      \If {$ j=1$ }
 
      \State \textbf{Sample} $F^s_{T+1} \sim N(AF_T^s,I_{K})$
       \For{\texttt{$i = 1,\ldots,p$ }}
       
       \State \textbf{Sample} $h^s_{i,T+1} \sim N(\mu_i + \phi_i(h^s_{iT}-\mu_i), \sigma^2_{i\eta}   )$  
       \State \textbf{Set} $\omega^s_{i,T+1} = p(r_{i,T+1} | h_{i,T+1}^s,F_{T+1}^s)$; the $i$th element of the product in $(4.3)$
     \EndFor
     \Else
      \State \textbf{Sample} $(F^s_{T+j-1},H_{T+j-1}^s) \sim p(F_{0:T+j-1},H_{0:T+j-1}|R_{1:T+j-1})$ 
       \State \textbf{Sample} $F^s_{T+j} \sim N(AF_{T+j-1}^s,I_{K})$
        \For{\texttt{$i = 1,\ldots,p$ }}
       
       \State \textbf{Sample} $h^s_{iT+j} \sim N(\mu_i + \phi_i(h^s_{i,T+j-1}-\mu_i), \sigma^2_{i\eta}   )$  
       \State \textbf{Set} $\omega^s_{i,T+j} = \omega^s_{i,T+j-1}p(r_{i,T+j} | h_{i,T+j}^s,F_{T+j}^s)$
     \EndFor

\EndIf
   \EndFor
   \EndFor\\
     \Return  $\{\omega^s_{it}\}_{s=1}^S$ for each $i=1,\ldots,p$ and $t=T+1,\ldots,T+\ell$
\end{algorithmic}
\end{algorithm}

\section*{Approximation with a mixture of Gaussian distributions}

The approximation of the model defined, in the main paper, by equations $(3.1)$--$(3.3)$ with a mixture of Gaussian distributions relies on the transformation 
\begin{equation*}\label{eq:chibtrans}
r_{it}^* = \log(r_{it} - \sum_{\kappa=1}^{n_{it}}\xi_{it}^{\kappa})^2 = h_{it} + \log (\epsilon_{it}^2) ,\,\,\ t=1,\ldots,T,
\end{equation*}
and on the assumption that $\log (\epsilon_{it}^2 )|u_{it} \sim \mathcal{N}(m_{u_{it}},s^2_{u_{it}})$ where $u_{it} \in \{ 1,\ldots,10 \} $ denotes a mixture component indicator. 
The means and the variances of the mixture components as well as the probabilities $p(u_{it}=j)$, $j=1,\ldots,10$, can be chosen as suggested by \cite{omori2007stochastic}. See \cite{chib2002markov} and \cite{nakajima2009leverage} for details.

\section*{Continuous ranked probability score and mean root squared error}

\subsection*{Continuous ranked probability score (CRPS)}

Let $F_r$ be the cdf of the posterior predictive distribution of $r_{it}$.  The CRPS \citep{matheson1976scoring} is defined in terms of the predictive $F_r$ and is given by
\begin{align}
\label{eq:crps_true}
\operatorname{CRPS}(F_r,r_{it}) &= \int_\mathbb{R} (F_r(z) - \1{r_{it} \leq z})^2 \, \mathrm{d} z,  \\ \label{eq:crps_true2}
& =  \mathrm{E}_{F_r} | \tilde{r}_{it}^1 - r_{it} | - \frac{1}{2} \mathrm{E}_{F_r} | \tilde{r}_{it}^1 - \tilde{r}_{it}^2 |,
\end{align}
where $\tilde{r}_{it}^1$ and $\tilde{r}_{it}^2$ are independent random variables with distribution $F_r$ and it is assumed that the first moment of $F_r$ is finite. The CRPS defined by equation \eqref{eq:crps_true} is negatively oriented and, thus, a small value indicates a more accurate forecast. Moreover, the negative oriented CRPS reduces to the mean absolute error in the case of a point forecast; see for example \cite{gneiting2007strictly} for more details. Since $F_r$ is not available in an analytic form we need to estimate \eqref{eq:crps_true2} by utilizing the $S$ samples $\{\tilde{r}^s_{it}\}_{s=1}^S$ drawn from the posterior predictive distribution. To achieve this we first compute the empirical cdf 
\begin{equation}
    \label{eq:emp_cdf}
    \hat{F}_r(z) = \frac{1}{S} \sum_{s=1}^S \1{\tilde{r}_{it}^s \leq z}.
\end{equation}
Then, an estimation for the CRPS is given by
\begin{equation}
\label{eq:crps_est}
\operatorname{CRPS}(\hat F_r,r_{it}) = \frac{1}{S} \sum_{s=1}^S |\tilde{r}_{it}^s - r_{it}| - \frac{1}{2 S^2} \sum_{s=1}^S \sum_{q=1}^S |\tilde{r}_{it}^s - \tilde{r}_{it}^q|.
\end{equation}
We note that in order to compute \eqref{eq:crps_est} we utilize the r-package \texttt{scoringRules} \citep{jordan2017evaluating} where the computation is conducted efficiently by employing a representation of \eqref{eq:crps_est} which is based on order statistics.

\subsection*{Root mean squared error (RMSE)}

To assess point predictions obtained from the different models we calculate the RMSE of the draws from the posterior predictive distribution of the true log-returns $r_{it}$ as

\begin{align*}
    \operatorname{RMSE}_{i}&=\sqrt{\frac{1}{\ell}\sum_{t=T+1}^{T+\ell}\bigg( \frac{1}{S}\sum_{s=1}^S (\tilde{r}^s_{it}-r_{it})^2 \bigg)}\\ 
    & =\sqrt{\frac{1}{\ell} \sum_{t=1}^{T+\ell}\bigg[ (\hat{\tilde{r}}_{it}-r_{it})^2 + \frac{1}{S}\sum_{s=1}^S(\tilde{r}_{it}^{s}-\hat{\tilde{r}}_{it})^2 \bigg]},\,\,\, i=1,\ldots,p
\end{align*}
where $\hat{\tilde{r}}_{it}$ denotes the mean of the posterior predictive distribution.

\section*{Results on simulated data}

\subsection*{Results for independent SV with jumps models}

By using the approach described in Section $3.2$ of the main paper to model the jump intensities in equation $(3.3)$ we obtain $p$ independent univariate SV with jumps models. Bayesian inference for their parameters and latent states can be conducted with the MCMC algorithm derived from Algorithm $1$ by omitting the $10$th step and by substituting the $12$th step (sampling latent factors) with the step of sampling the independent jump intensities from their inverse Gamma full conditionals. 

To test our methods we simulated $T = 1,500$ log-returns from $p=4$ independent univariate models as defined by equations $(3.1)$--$(3.3)$ with $Gam(1,c)$ priors for the jump intensities. We used $c=50$ to simulate the first and the second and $c=100$ to simulate the third and the fourth time series in order to represent observed log-returns with more and less jumps respectively.
Following the related literature (see for example \cite{chib2002markov} and \cite{eraker2003impact}) we set $\mu_i=-0.85$, $\phi_i=0.98$ and $\sigma_{i\eta} = 0.12$ for each $i=1,2,3,4$ and $\mu_{1\xi}=\mu_{3\xi}=-3$ and $\mu_{2\xi}=\mu_{4\xi}=0$. We ran the MCMC algorithm for $80,000$ iterations from which we discarded the first $20,000$ iterations as burn-in period and we were storing the sampling outcome of every $60$th iteration. Figure \ref{uniobs} presents the simulated log-returns along with the times in which jumps have been generated and times with estimated posterior probability of at least one jump greater than $50\%$. Figure \ref{univols} displays the simulated paths for the corresponding log-volatilities and the $95\%$ credible intervals of their posterior distributions. By the visual inspection of the figures it is clear that with the proposed methods we have separated the jumps from the log-volatility process in all the simulated frameworks.

In order to check the ability of the proposed methods to identify jumps in time series with varying volatility of the  log-volatility process we conducted the following experiment. We simulated $p=4$ time series, from the model defined by equations $(3.1)$--$(3.3)$, with $T=1,500$ observations in each one by using different values for $\sigma_{i\eta}$, $i=1,\ldots,4$; $\sigma_{1\eta}=0.1$, $\sigma_{2\eta}=0.15$, $\sigma_{3\eta}=0.25$ and $\sigma_{4\eta}=0.35$. We used $Gam(1,50)$ priors for the jump intensities, we set $\mu_{1\xi}=\mu_{2\xi}=0$ and the same values as in the previous experiment for all the other parameters. The same MCMC algorithm as in the first experiment was used to obtain posterior samples for the parameters and the latent states of the model. Figure \ref{univolssens} presents the simulated log-returns for each stock and indicates the times in which we estimated posterior probabilities of at least one jump greater than $50\%$, while Figure \ref{univolssensprobs} displays the posterior probabilities of at least one jump for each one of the $1,500$ time points of each stock. The figures demonstrate that using the proposed methods we are able to disentangle the volatility from the jump process in all the different scenarios for the log-volatility process, since in the majority of the simulated jumps have been identified and only jumps with size that is in the range of the volatility process are not detected.

\begin{figure}[H]
	\centerline{\includegraphics[scale=0.7]{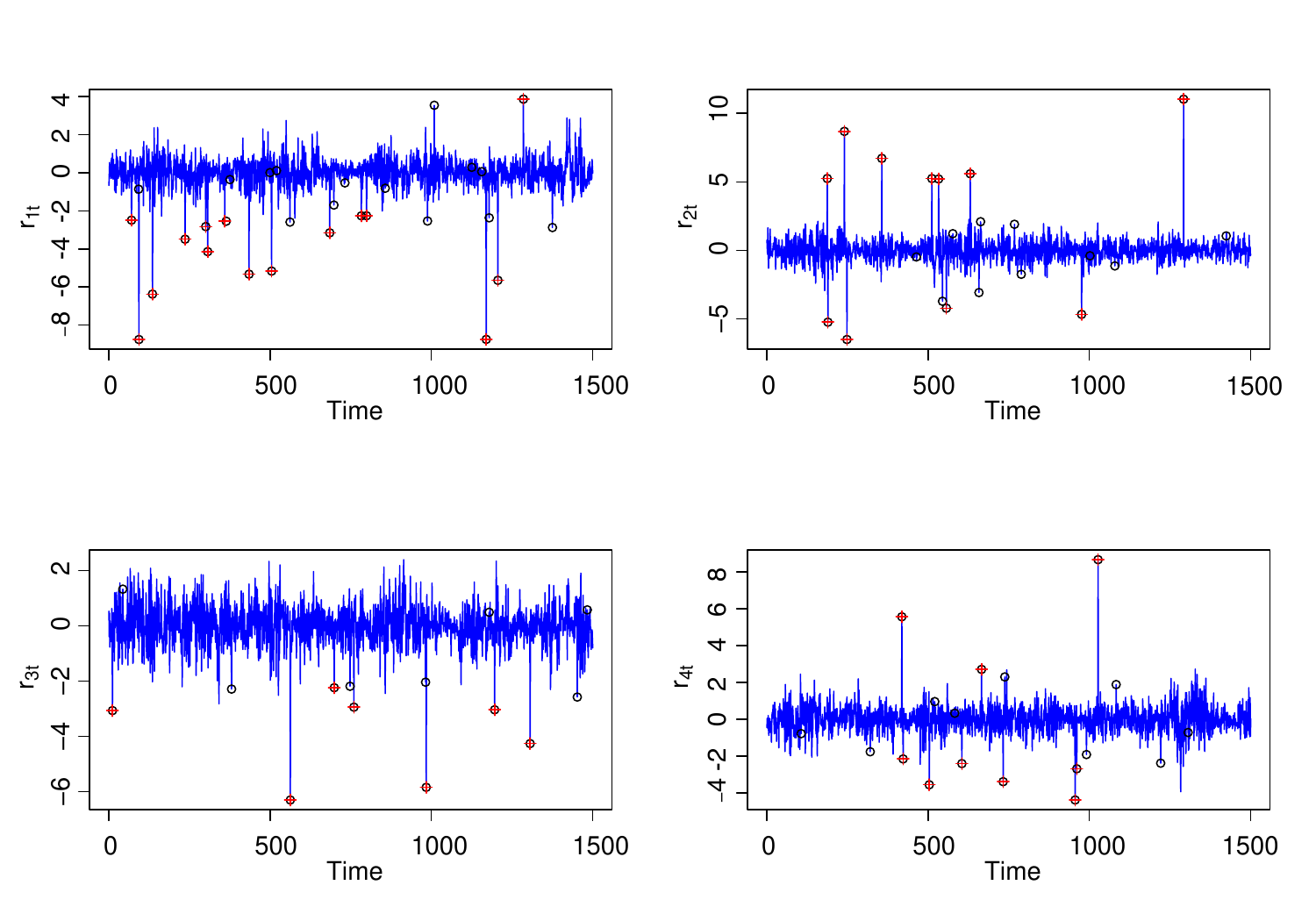}}
	\caption{Simulated log-returns from $4$ independent univariate SV with jumps models defined by equations $(3.1)$--$(3.3)$ of the main paper with $Gam(1,50)$ (top) and $Gam(1,100)$ (bottom) prior for the jump intensities with $\mu_{1\xi}=\mu_{3\xi}=-3$ (left column) and $\mu_{2\xi}=\mu_{4\xi}=0$ (right column). We set $T=1,500$, $\mu_i =-0.85$, $\phi_i=0.98$, $\sigma_{i\eta} =0.12$, $\sigma_{i\xi}=3.5$, $i=1,2,3,4$. Black circles indicate times in which a jump has been simulated. Red crosses indicate times with estimated posterior probabilities of at least one jump greater than $50\%$. }
	\label{uniobs}
\end{figure} 

\begin{figure}[H]
	\centerline{\includegraphics[scale=0.7]{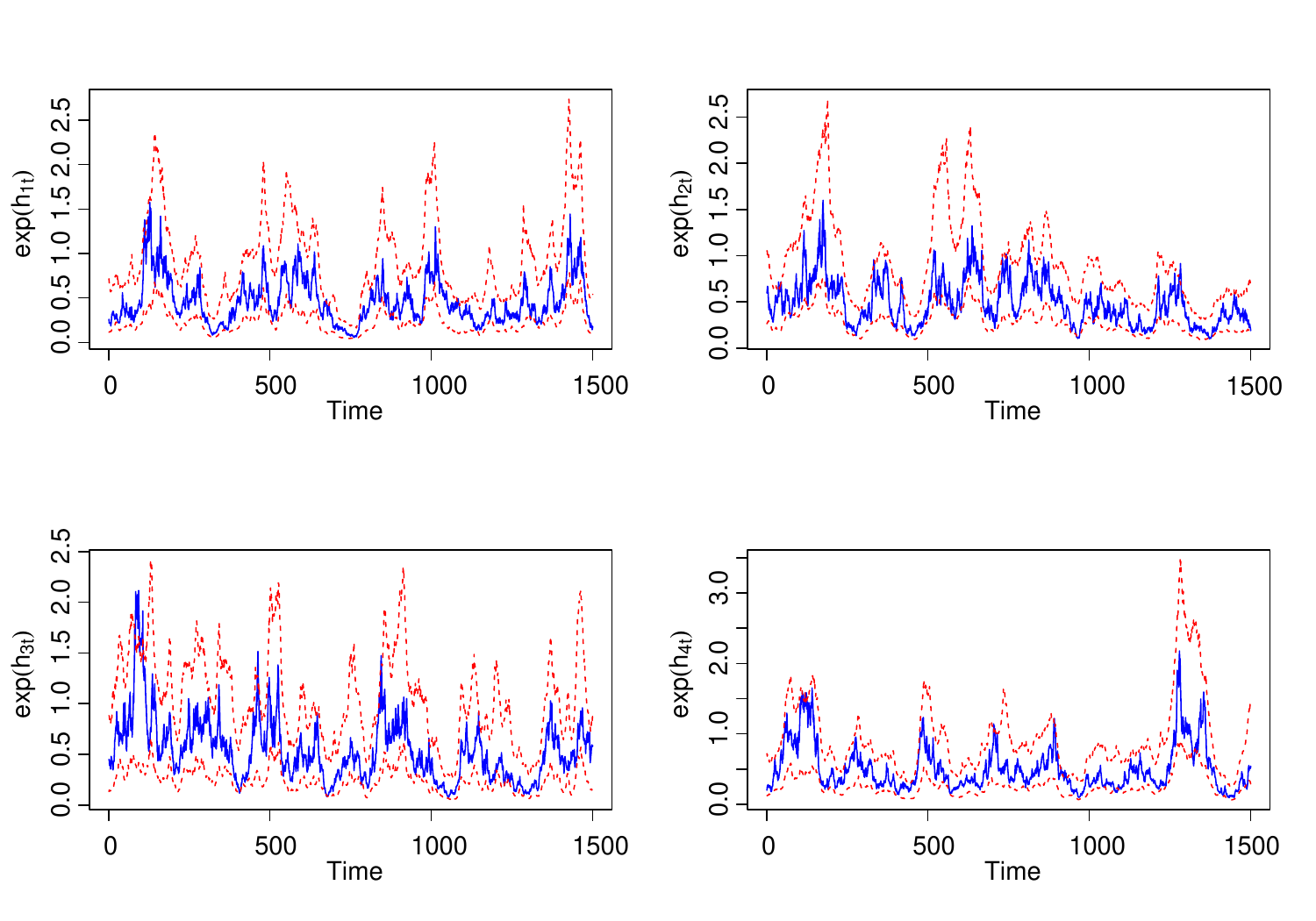}}
    \caption{Simulated paths of the volatility process (solid lines) and $95\%$ posterior credible intervals (dashed lines). The log-returns are simulated from $4$ independent univariate SV with jumps models defined by equations $(3.1)$--$(3.3)$ of the main paper with $Gam(1,50)$ (top) and $Gam(1,100)$ (bottom) prior for the jump intensities and with $\mu_{1\xi}=\mu_{3\xi}=-3$ (left column) and $\mu_{2\xi}=\mu_{4\xi}=0$ (right column). We set $T=1,500$, $\mu_i =-0.85$, $\phi_i=0.98$, $\sigma_{i\eta} =0.12$, $\sigma_{i\xi}=3.5$, $i=1,2,3,4$.}
	\label{univols}
\end{figure}

\begin{figure}[H]
	\centerline{\includegraphics[scale=0.7]{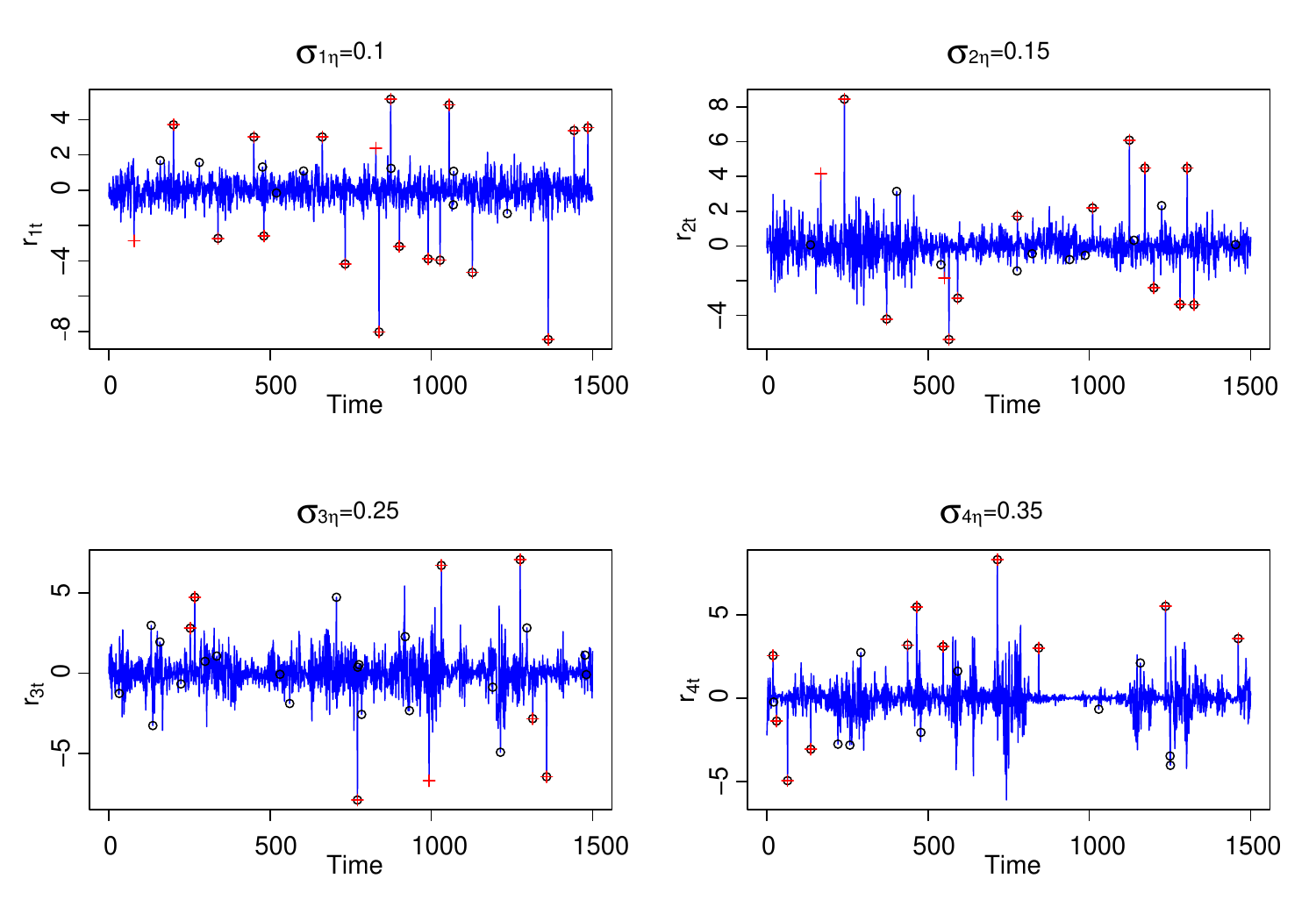}}
	\vspace*{-0.5cm}
	\caption{Simulated log-returns from $p=4$ independent univariate SV with jumps models defined by equations $(3.1)$--$(3.3)$ of the main paper with $Gam(1,50)$ priors for the jump intensities. We set $T=1,500$, $\mu_i =-0.85$, $\phi_i=0.98$, $\mu_{i\xi}=0$ and $\sigma_{i\xi}=3.5$ and we used different values for $\sigma_{i\eta}$, $i=1,2,3,4$. Black circles: times in which a jump has been simulated. Red crosses: times with estimated posterior probabilities of at least one jump greater than $50\%$.}
	\label{univolssens}
\end{figure} 

\begin{figure}[H]
	\centerline{\includegraphics[scale=0.7]{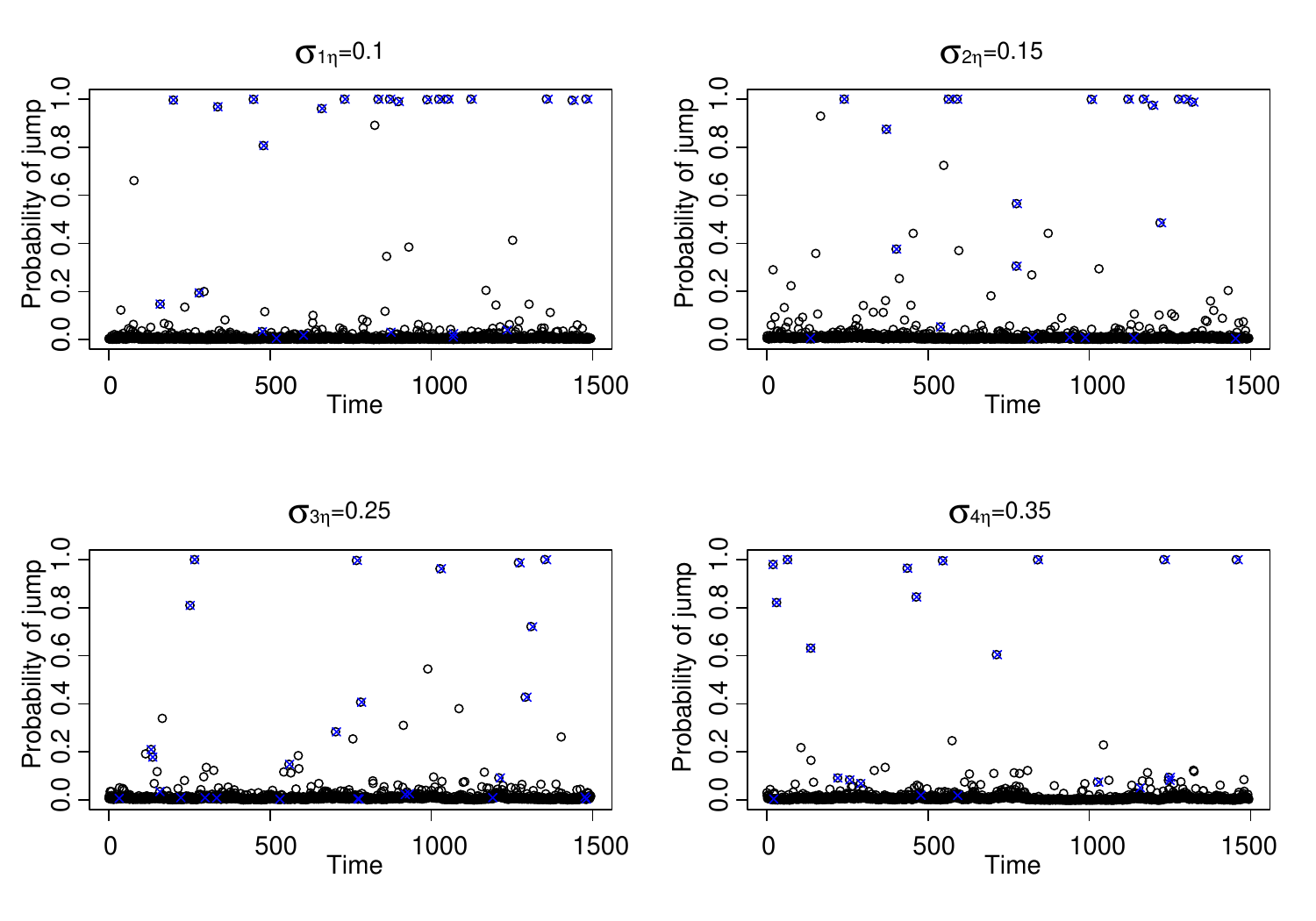}}
	\vspace*{-0.5cm}
	\caption{Black circles: Posterior probabilities for at least one jump in simulated log-returns from independent univariate SV with jumps models defined by equations $(3.1)$--$(3.3)$ of the main paper with $Gam(1,50)$ priors for the jump intensities. We set $T=1,500$, $\mu_i =-0.85$, $\phi_i=0.98$, $\mu_{i\xi}=0$ and $\sigma_{i\xi}=3.5$ and we used different values for $\sigma_{i\eta}$, $i=1,2,3,4$. Blue crosses: times of simulated jumps.}
	\label{univolssensprobs}
\end{figure} 

\subsection*{Results for joint modelled jump intensities}

We present results from the application of our methods on $p=100$ time series simulated from the model defined by equations $(3.1)$--$(3.3)$ and $(3.5)$--$(3.8)$ in the main paper. We used $K=3$ latent factors with $\alpha_1 = 0.9$, $\alpha_2 =0.65$ and $\alpha_3=-0.65$ to simulate the paths of the latent factors. The factor loadings in $W$ were simulated from standard normal distributions and we simulated $b_i \sim \mathcal{N}(-2.45,1)$, $i=1,\ldots,p$. For the parameters of the latent log-volatility processes we used $\phi_i=0.98$, $\sigma_{i\eta} = 0.12$ and $\mu_i = -0.85$ and we set $\mu_{i\xi}=0$ and $\sigma_{i\xi}=3.5$ for each $i=1,\ldots,p$. Then we ran the MCMC algorithm summarized by Algorithm $1$ for $100,000$ iterations from which we discarded the first $30,000$ as burn-in period and we were saving the outcome of every $70$th iteration.

Each line with black circles in Figure \ref{jumpMVsim} refers to a stock while the circles indicate times in which a jumps has been simulated in the price of the stock. The red crosses indicate the times where the posterior probability of at least one jump has estimated to be greater than $50\%$.  Figure \ref{Apost} presents the posterior distributions for the parameters $\alpha_1$, $\alpha_2$ and $\alpha_3$.

\begin{figure}[H]
	\centerline{\includegraphics[scale=0.5]{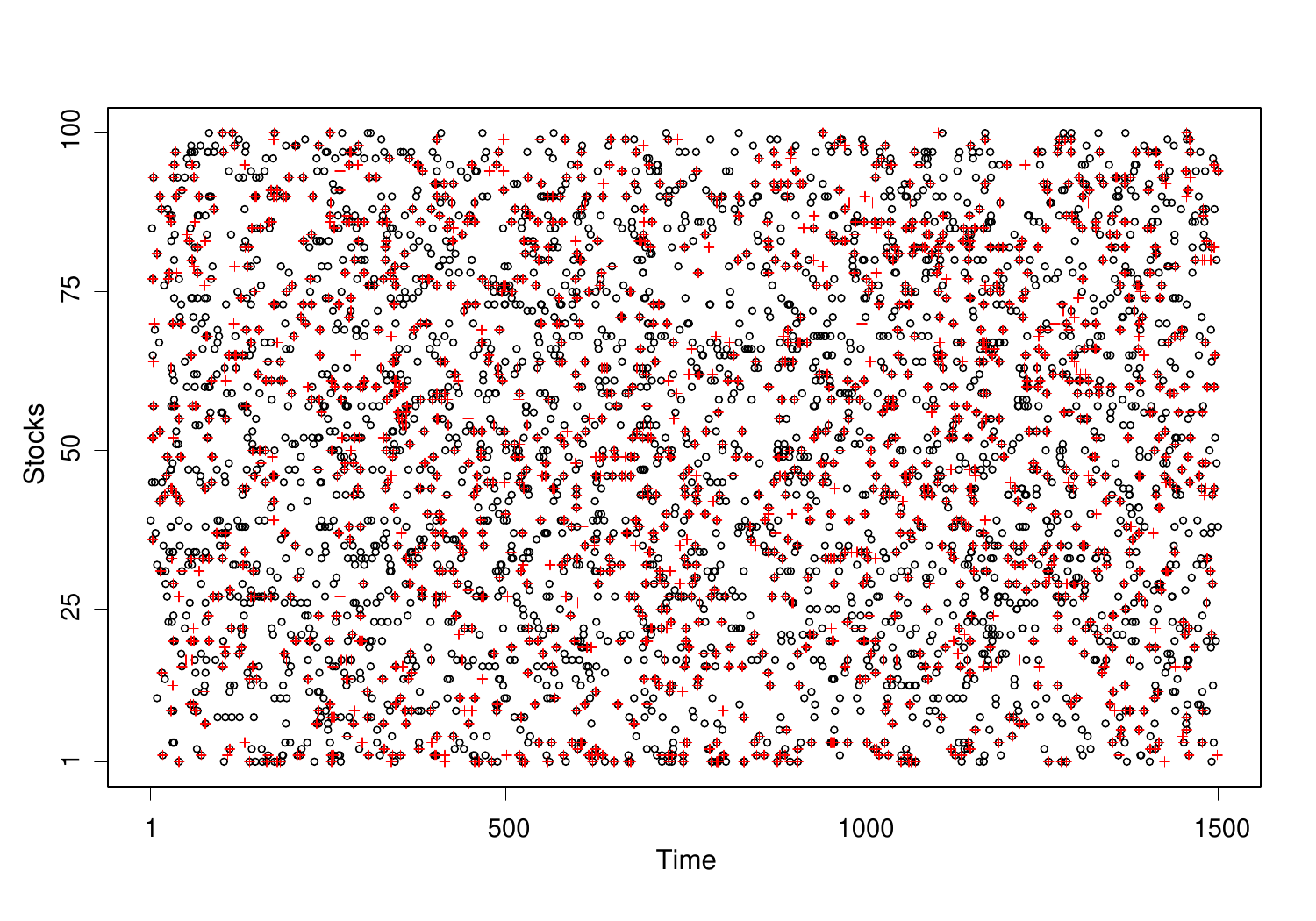}}
	\vspace*{-0.3cm}
	\caption{Simulated jump times (black circles) for each of the $p = 100$ simulated time series with $T=1,500$ log-returns simulated from the model defined by the equations $(3.1)$--$(3.3)$ and $(3.5)$--$(3.8)$ in the main paper. The red crosses indicate times with posterior probability of having at least one jump greater than $50\%$.}
	\label{jumpMVsim}
\end{figure} 

\begin{figure}[H]
	\centerline{\includegraphics[scale=0.55]{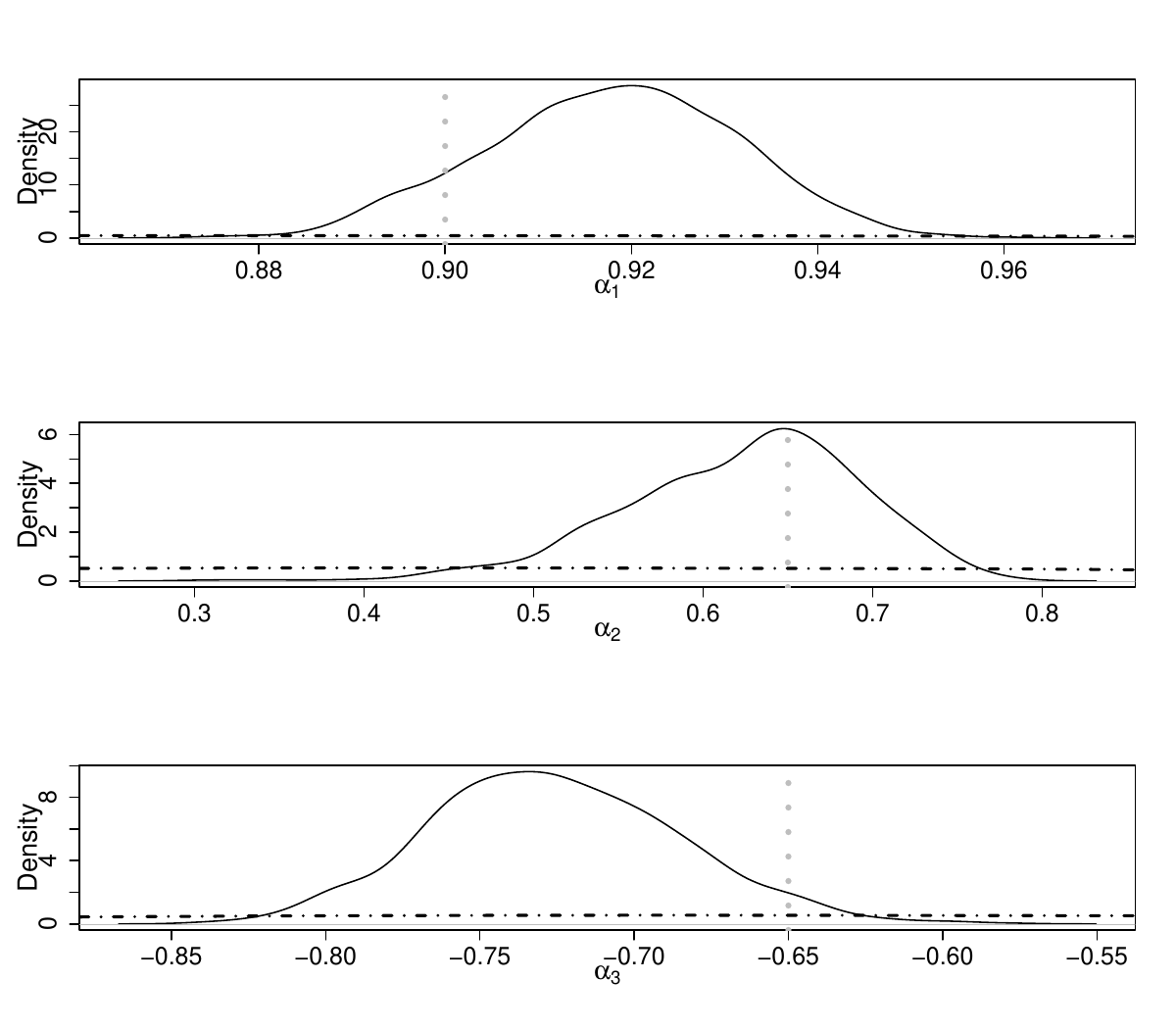}}\vspace*{-0.3cm}
	\caption{Posterior (black solid lines) and prior (black dashed lines) distributions for the parameters $\alpha_1$, $\alpha_2$ and $\alpha_3$ of the dynamic factor model used for the joint modelling of the jump intensities. The vertical gray dotted lines indicate the true values.}
	\label{Apost}
\end{figure} 

\section*{Additional results from the real data analysis}

\subsection*{Bayesian analysis of log-volatilities and jumps}

Figures \ref{fig:vol_pars} and \ref{fig:vol_CIs} compare the Bayesian inference conducted for the parameters and the latent log-volatility processes of SV models without jumps, with independent jumps and with jointly modelled jump intensities. From the visual inspection of the Figures it is evident that the SV models with jumps, either independent or jointly modelled, deliver lower log-volatility means $\mu_i$ and variances $\sigma^2_{i\eta}$ and higher persistent parameters $\phi_i$ than SV models without jumps while the proposed joint modelling of the jump intensities delivers the most narrow intervals and the SV models without jumps the widest. Figure \ref{fig:weekdays} depicts the posterior probabilities of the number of jumps for each weekday. This is clearly related with the parameter $\Delta_{it}$ in $(3.3)$ which takes into account the time distance of two successive observations.

\begin{figure}[H]
\begin{center}
\includegraphics[scale=0.5]{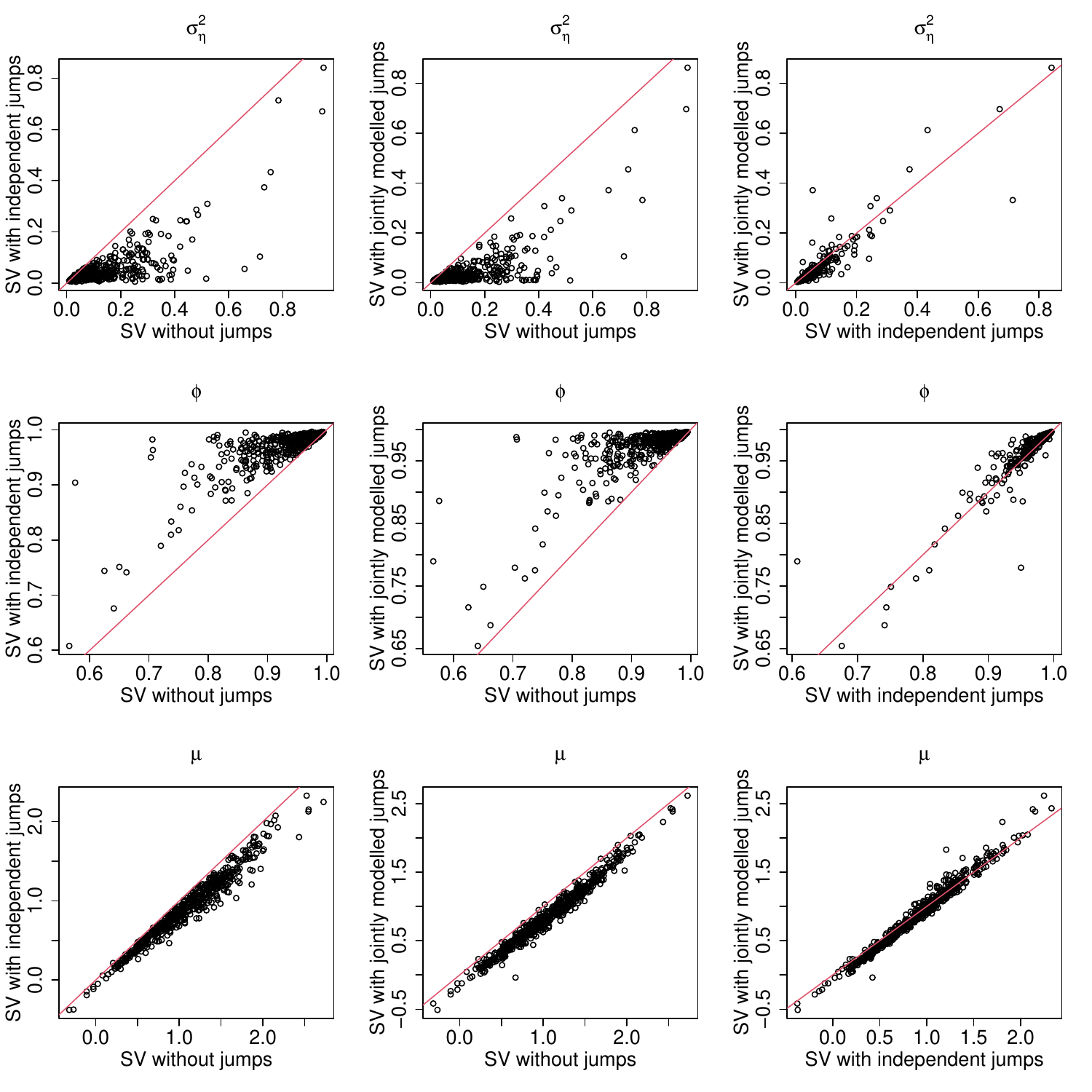}
\vspace*{-0.4cm}
\caption{Posterior means for the parameters of the log-volatility processes of the SV model without jumps, with independent jumps and with jointly modelled jump intensities. The red solid line has zero intercept and slope one.}
\label{fig:vol_pars}
\end{center}
\end{figure}

\begin{figure}[H]
\begin{center}
\includegraphics[scale=0.5]{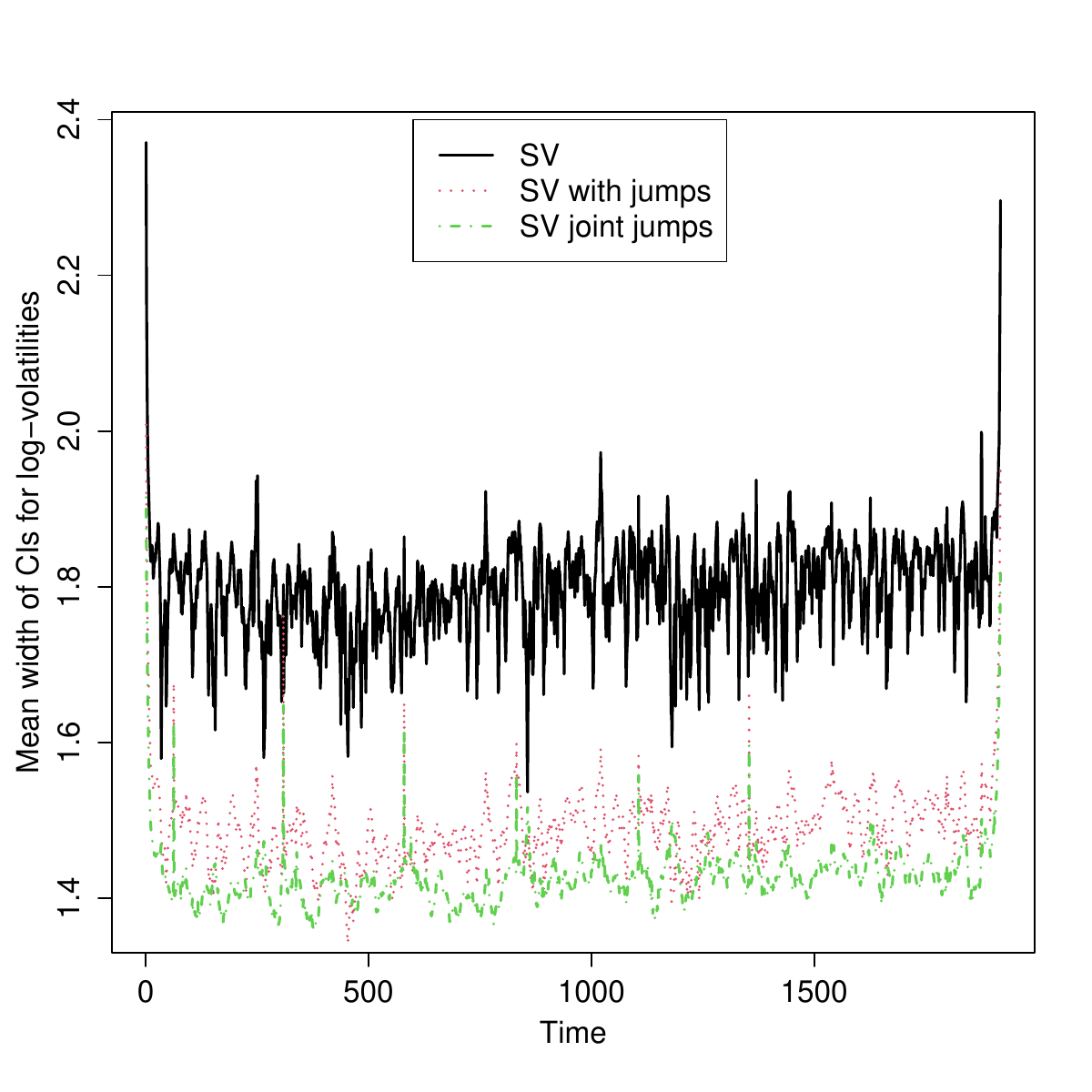}
\vspace*{-0.3cm}
\caption{Mean width of the $571$ $95\%$ credible intervals (CIs) of the log-volatility processes that correspond to SV models without jumps, with independent jumps and with jointly modelled jump intensities. The x-axis corresponds to the $T=1917$ time points with observations.}
\label{fig:vol_CIs}
\end{center}
\end{figure}

\begin{figure}[H]
\begin{center}
\includegraphics[scale=0.5]{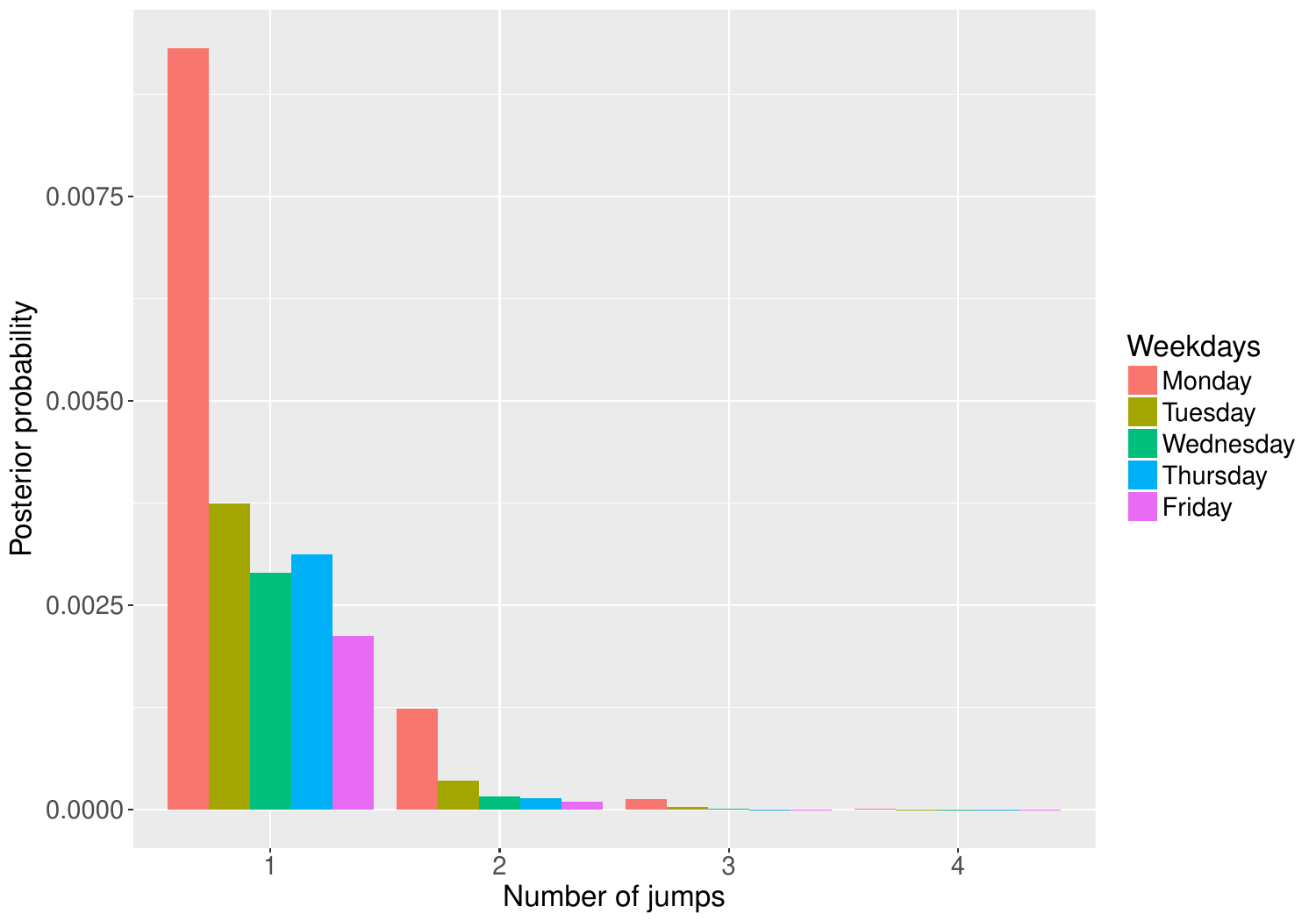}
\vspace*{-0.5cm}
\caption{Posterior distributions for the number of jumps in $571$ stocks from the STOXX $600$ Europe Index conditional on the weekday. The order in which the bars are appearing from left to right is with respect to the order of the weekdays. The probabilities of zero jumps are not displayed.}
\label{fig:weekdays}
\end{center}
\end{figure}

\subsection*{Root mean squared errors of the predictions}

\begin{figure}[H]
       \centerline{\includegraphics[scale=0.5]{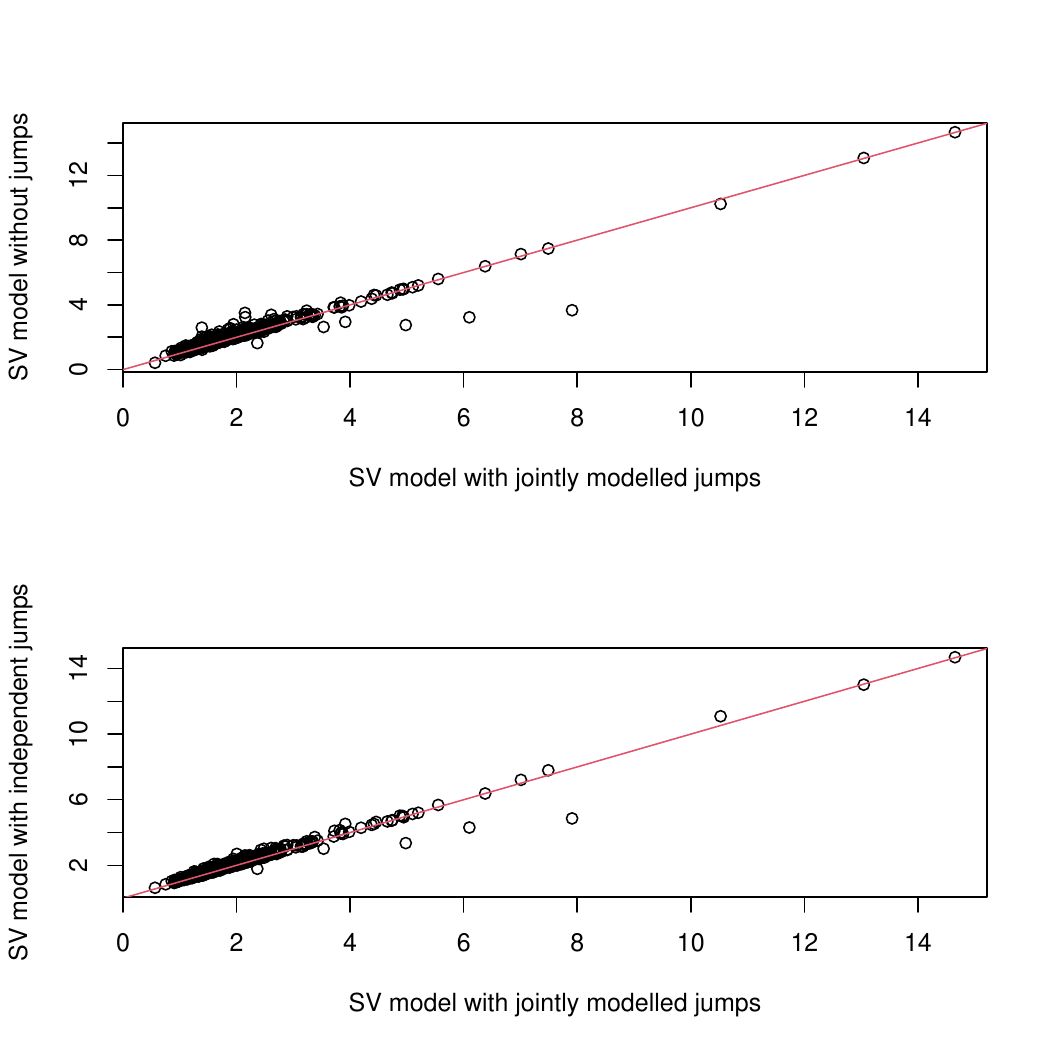}}
       \vspace*{-0.5cm}
       \caption{Root mean squared errors for the SV model with jointly modelled jumps against SV models without jumps (top) and SV models with independent jump intensities (bottom) for each one the $571$ stocks. The red line has zero intercept and slope one.}
       \label{fig:rmses}
   \end{figure}

\subsection*{Predictions by using different prior-hyperparameters for the jump intensities}

In the case of modelling independently, over time and across stocks, the jump intensities we have assumed that $\lambda_{it} \sim Gam(\delta,c)$ and by following the related literature, see for example \cite{chib2002markov}, we set $\delta=1$ and $c=50$ in order the prior expectation of $\lambda_{it}$ to be $0.02$ with prior variance $0.0004$. To model jointly the jump intensities we used the dynamic factor model specified by equations $(3.5)$--$(3.8)$ in the main paper. We specified hyperparameters $\sigma_w^2$, $\sigma_b^2$ and $\mu_b$ such that the induced prior on $\lambda_{it}$ to be similar to the gamma distributions used in the case of independent modelling. Table \ref{tab:intens_priors_supp} displays the mean, the variance and the mode of the prior induced, by the dynamic factor model, on each $\lambda_{it}$, for three different choices of the hyperparameters $\mu_b$, $\sigma^2_b$ and $\sigma^2_w$. Since our aim with joint modelling the jump intensities was to improve the predictions of future observations we perform a sensitivity analysis of the predictions obtained from independent and joint modelled jump intensities with different choices for the prior-hypeparameters of the jump intensities.   

Figure \ref{sens_BF_uni} compares the predictive performance of univariate SV models without jumps with the
performance of SV models with jumps in which the jump intensities are modelled independently
over time and across stocks by using the gamma prior distributions presented in Table \ref{tab:intens_priors_uni_supp}. The figure is constructed by considering the log Bayes factors defined by equation $(5.1)$ of the main paper for the $\ell=30$ out-of-sample observations of our real dataset presented in Section $6$. Figure \ref{sens_BF_mv} presents the log Bayes factors in $(5.5)$ and the 
logarithm of the Bayes factors in $(5.6)$ for the $30$ out-of-sample observations by using the priors on the jump intensities that described in Table \ref{tab:intens_priors_supp}.

\begin{figure}[H]
	\centerline{\includegraphics[scale=0.7]{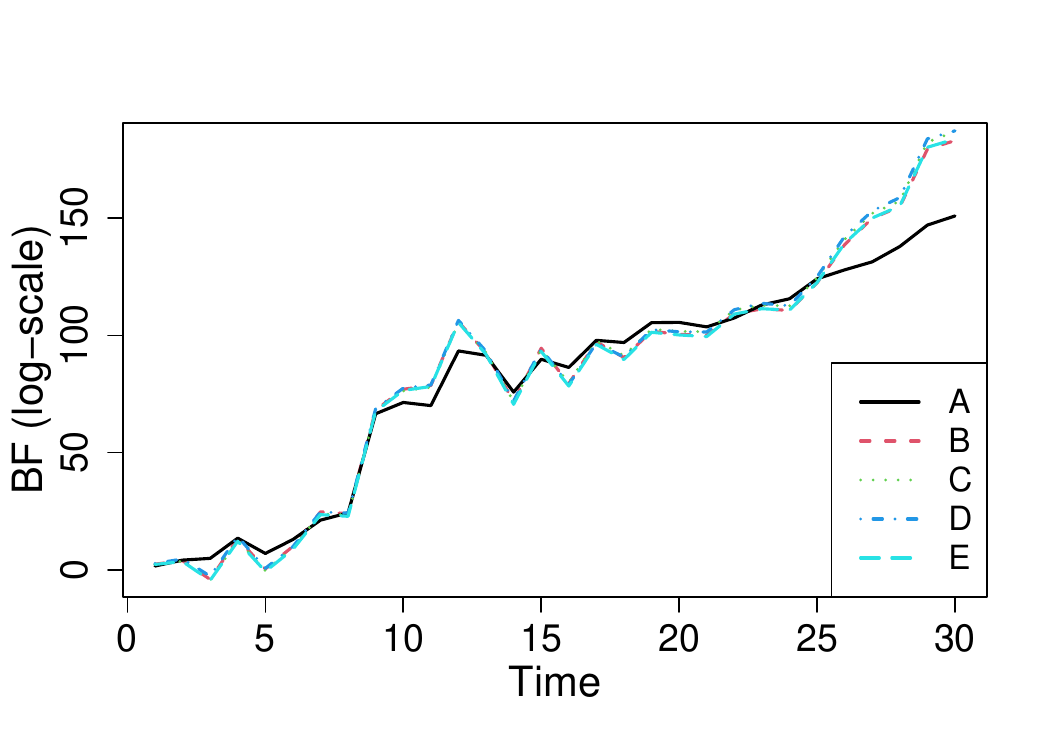}}
	\caption{Log Bayes factors (BF), defined by equation $(5.1)$ of the main paper, in favour of univariate SV models with jumps against univariate SV without jumps models for each one of the $\ell=30$ out-of-sample observations. The letters A, B, C, D and E in the legend correspond to the prior distributions displayed in Table \ref{tab:intens_priors_uni_supp}.}
	\label{sens_BF_uni}
\end{figure}

\begin{table}[H]
\centering
\begin{tabular}{|c c c c|}
\hline
Prior for $\lambda_{it}$ & Mean & Variance & Mode\\
\hline
A: $Gam(1,250)$  &  $0.004$  & $0.00016$ & $0$\\
B: $Gam(1,50)$ & $0.02$ & $0.0004$ & 0\\
C: $Gam(1.05,52.63)$ & $0.02$ & $0.00038$ & $0.001$\\
D: $Gam(1.33,66.67)$ & $0.02$ & $0.00029$ & $0.005$\\
E: $Gam(2,100)$ & $0.02$ & $0.0002$ & $0.01$\\
\hline
\end{tabular}
\caption{Mean, variance and mode of different gamma priors assumed for the jump intensities $\lambda_{it}$ in the case of independent models.}
\label{tab:intens_priors_uni_supp}
\end{table}

\begin{figure}[H]
	\centerline{\includegraphics[scale=0.7]{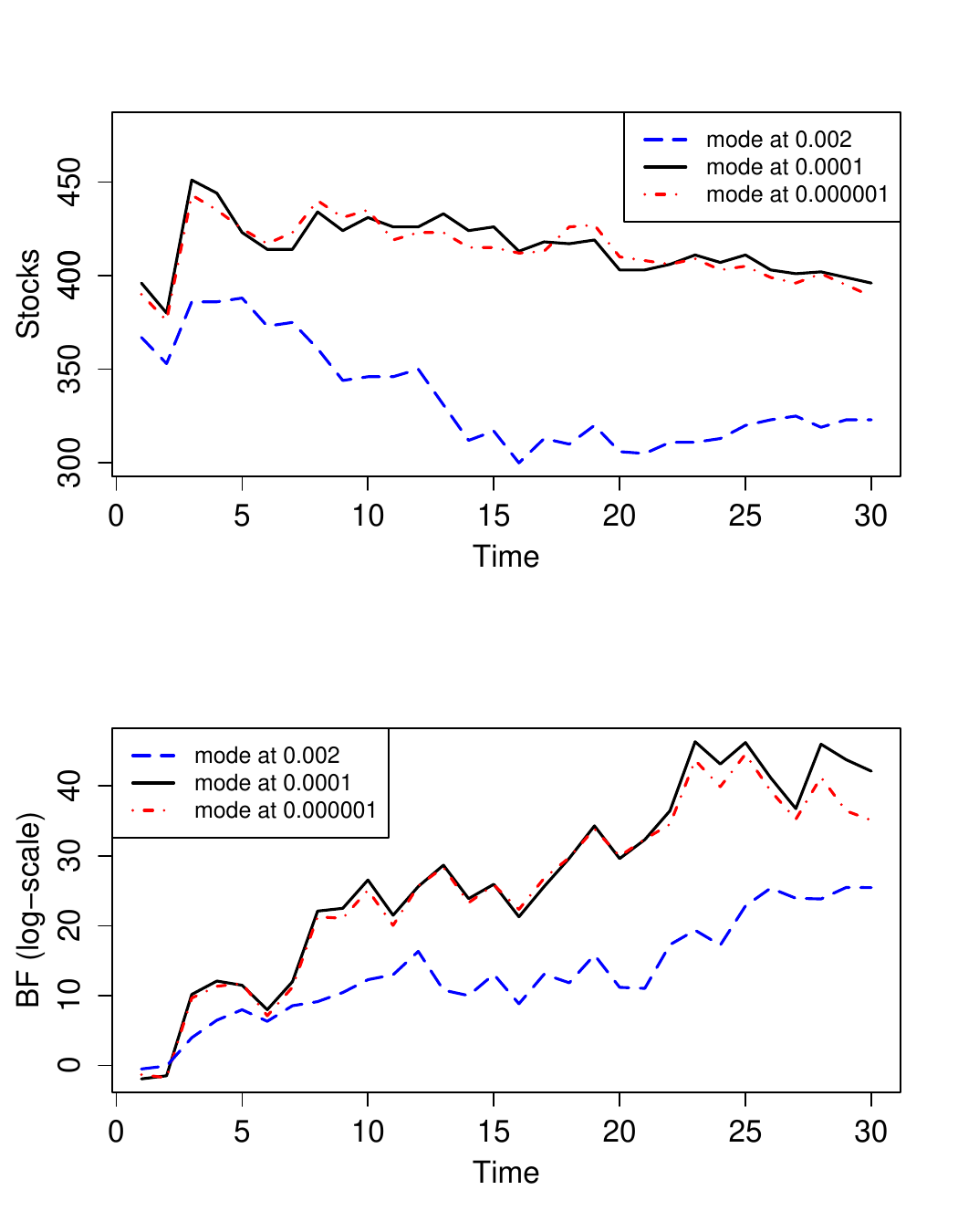}}
	\caption{Top: number of stocks with positive log Bayes factors, defined in equation $(5.5)$ of the main paper, in favour of the proposed SV model with jointly modelled jump intensities against SV models with independent intensities. Bottom: logarithms of the Bayes factors, defined in $(5.6)$, in favour of the proposed SV model with jointly modelled jump intensities against SV models with independent intensities. The quantities have been calculated for each one of the $\ell=30$ out-of-sample observations. The different lines in each plot correspond to induced, by the dynamic factor model, priors on the jump intensities with different modes.}
	\label{sens_BF_mv}
\end{figure}

\subsection*{Evaluation of the sequential Monte Carlo methods}

To evaluate the accuracy of the estimators calculated by employing sequential Monte Carlo methods such as particle filters and the annealed importance sampling we employ the effective sample size (ESS) of the samples drawn from the posterior predictive distributions of interest. The formula of the ESS is given in the $8$th line of Algorithm \ref{alg:pf}. We note that a small ESS indicates large variance for the importance weights calculated during the sequential Monte Carlo algorithms which in turn implies that only few of the weighted samples will dominate the estimates; see for example \cite{chopin2020introduction} for more details.

Figure \ref{pf_ESS} presents for each out-of-sample time point a boxplot with the $571$ ESSs that correspond to particle filters algorithms while Figure \ref{ais_ESS} shows the same quantity for the samples drawn by using the annealed importance sampling algorithm. From the visual inspection of the Figures we concluded that for the vast majority of the stocks the ESS of the $1,000$ posterior predictive indicates an accurate estimation of the quantities of interest: predictive Bayes factors, continuous probability ranked scores of the predictive cumulative distribution functions and interval scores of the prediction credible intervals.

\begin{figure}[H]
	\centerline{\includegraphics[scale=0.65]{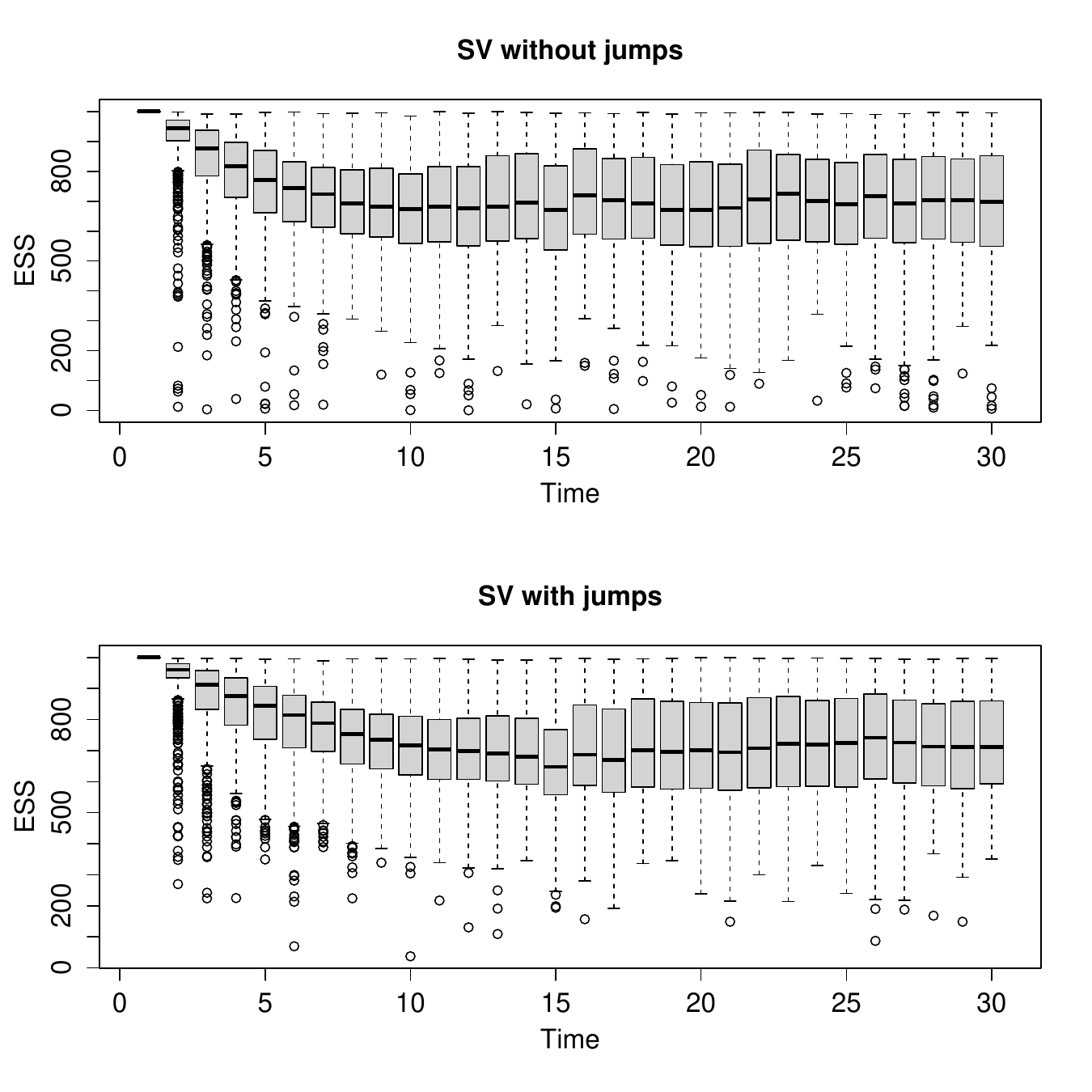}}
	\caption{Boxplots for the effective sample sizes (ESS) of the samples drawn from the posterior predictive distributions of the $571$ stocks by using the particle filter algorithm; y-axis: out-of-sample time points.}
	\label{pf_ESS}
\end{figure}

\begin{figure}[H]
	\centerline{\includegraphics[scale=0.65]{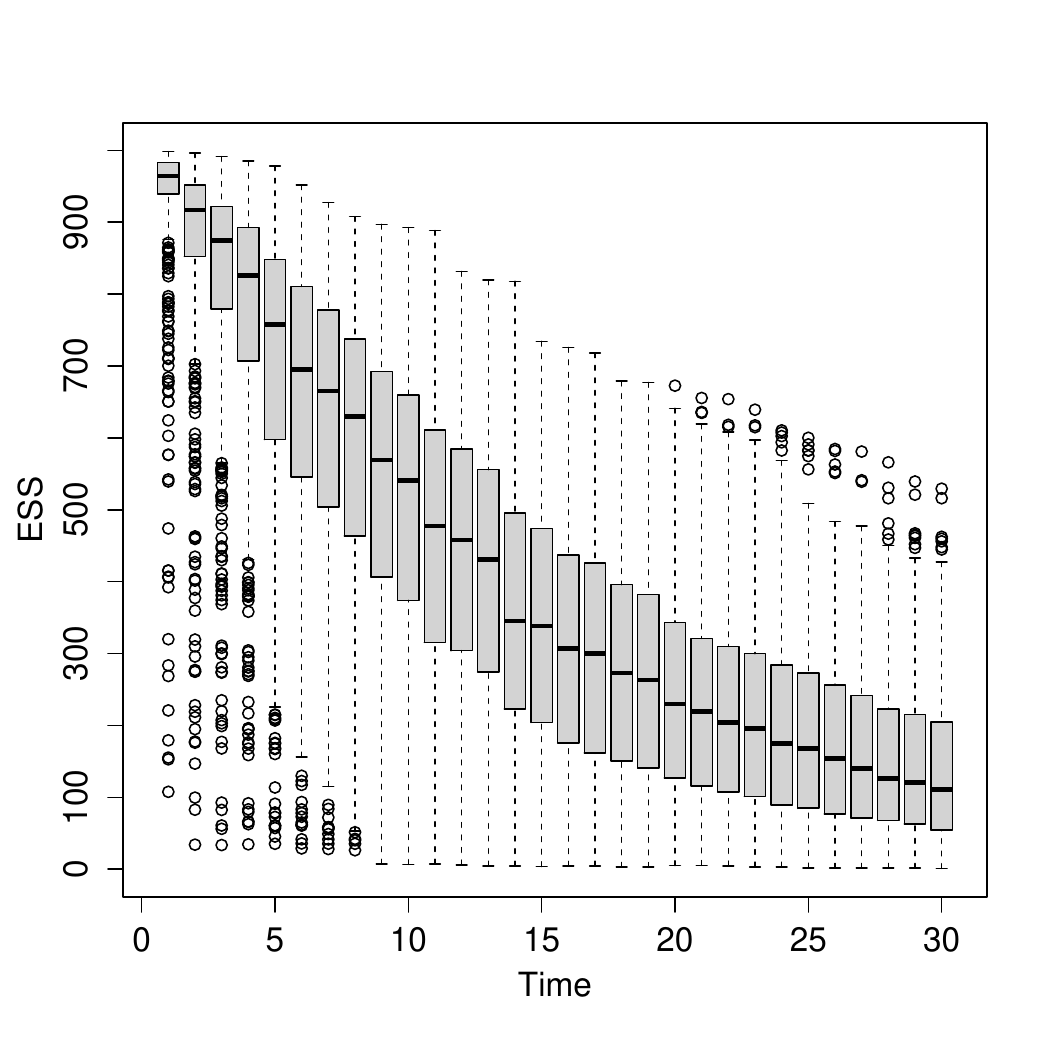}}
	\caption{Boxplots for the effective sample sizes (ESS) of the samples drawn from the posterior predictive distributions of the $571$ stocks by using the annealed importance sampling algorithm; y-axis: out-of-sample time points.}
	\label{ais_ESS}
\end{figure} 

\subsection*{Comparison with SV with $t$ errors}

An alternative formulation that accounts for the heavy tails of the distribution of the observed returns is the SV model in which the observation errors follow the Student's $t$ distribution \cite{harvey1994multivariate}. Bayesian inference for the parameters of SV models with Student's $t$ errors has been examined by \cite{chib2002markov}. Figure \ref{svt} compares the predictive performance of univariate SV models with Student's $t$ errors with the performance of the proposed SV model with jumps and jointly modelled jump intensities. The Figure evaluates the predictive performance of the two models with respect to their interval scores, CRPS as well as root mean squared errors. From the visual inspection of the Figure we conclude that by modelling jointly the jump intensities of SV models we obtain more accurate forecasts rather than using Student's $t$ instead of Gaussian errors. To conduct Bayesian inference for the parameters and the latent states of the SV models with Student's $t$ errors we utilized the r-package \texttt{stochvol} \citep{kastner2014dealing} to collect $1,000$ thinned MCMC samples by storing the outcome of every $100$th iteration after the first $50,000$ iterations. To draw  samples from the posterior predictive distributions of interest we employed the particle filter algorithm summarized by Algorithm \ref{alg:pf} by setting $p(r_{it}|h_{it})$ in \eqref{eq:uniwts} to be the density of the Student's $t$ distribution scaled by $e^{\tfrac{1}{2}h_{it}}$.

\begin{figure}[H]
	\centerline{\includegraphics[scale=0.65]{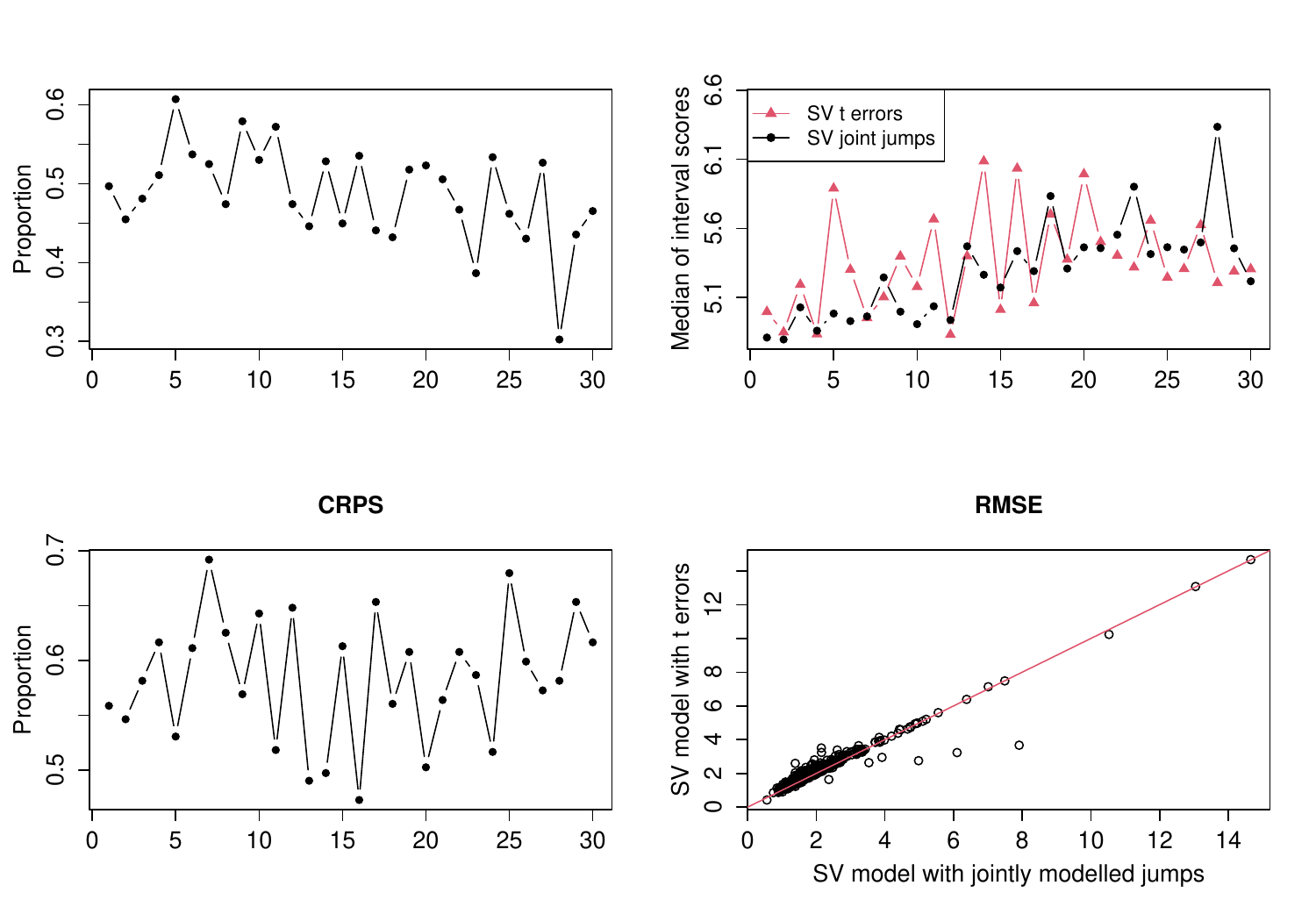}}
	\caption{Top left: proportion of stocks for which the interval score of the $95\%$ predicted intervals of the SV model with $t$ errors is greater than the corresponding score obtained from SV with jointly modelled jumps. Top right: medians of interval scores across the $571$ stocks. Bottom left: proportion of stocks for which the continuous ranked probability score (crps) for the SV model with $t$ errors is greater than the corresponding score obtained from SV with jointly modelled jumps. Bottom right: root mean squared errors for the SV model with jointly modelled jumps against SV models with $t$ errors for each one the $571$ stocks. The red line has zero intercept and slope one.}
	\label{svt}
\end{figure} 

\subsection*{Computing times}

\begin{table}[H]
\centering
\begin{tabular}{|c c |}
\hline
$p$ & Time in seconds \\
\hline
$50$ & $0.25$\\
$100$& $0.52$\\
$200$ & $0.98$\\
$600$& $2.1$\\
\hline
$571$ (real data)& $2.44$\\
\hline
\end{tabular}
\caption{Seconds needed in order the proposed MCMC algorithm to complete one iteration for simulated datasets with different number $p$ of time series each consisted of $1,500$ log-returns. The last line correspond to the time needed in the case of the real dataset in which the length of the time series with stock-returns varies from $1,000$ to $1,917$. The timing conducted on a Laptop with a 1.6 GHz Dual-Core Intel Core i5 CPU running R $4.0.0$ \citep{citeR}.}
\label{tab:sim_comp_times}
\end{table}

\bibliographystyle{Chicago}
\bibliography{refs}

\end{document}